\documentclass[11pt,letterpaper]{article}

\usepackage[T1]{fontenc}
\usepackage[utf8]{inputenc}
\usepackage{lmodern}
\usepackage[margin=1in]{geometry}
\usepackage{amsthm,amsmath,amsfonts,amssymb, comment}
\usepackage[numbers,sort&compress]{natbib}
\usepackage{microtype}
\usepackage[hidelinks,hypertexnames=false]{hyperref}
\hypersetup{
  pdftitle={Universality of approximate message passing under a fourth-moment tail condition},
  pdfauthor={Ripunjay Dwivedi and Taejoon Kim}
}
\theoremstyle{plain}
\newtheorem{theorem}{Theorem}[section]
\newtheorem{proposition}[theorem]{Proposition}
\newtheorem{lemma}[theorem]{Lemma}

\theoremstyle{definition}
\newtheorem{definition}[theorem]{Definition}
\newtheorem{assumption}[theorem]{Assumption}
\newtheorem{remark}[theorem]{Remark}
\newtheorem{example}[theorem]{Example}

\newcommand{\R}{\mathbb{R}}
\newcommand{\N}{\mathbb{N}}
\newcommand{\E}{\mathbb{E}}
\newcommand{\PP}{\mathbb{P}}
\newcommand{\Var}{\mathrm{Var}}
\newcommand{\bz}{\mathbf{z}}
\newcommand{\bA}{\mathbf{A}}
\newcommand{\bAt}{\widetilde{\mathbf{A}}}
\newcommand{\bAh}{\widehat{\mathbf{A}}}
\newcommand{\bB}{\mathbf{B}}
\newcommand{\bM}{\mathbf{M}}
\newcommand{\bQ}{\mathbf{Q}}
\newcommand{\bX}{\mathbf{X}}
\newcommand{\bY}{\mathbf{Y}}
\newcommand{\bG}{\mathbf{G}}
\newcommand{\bC}{\mathbf{C}}
\newcommand{\bq}{\mathbf{q}}
\newcommand{\bb}{\mathbf{b}}
\providecommand{\bm}{}\renewcommand{\bm}{\mathbf{m}}
\newcommand{\bh}{\mathbf{h}}
\newcommand{\bw}{\mathbf{w}}
\newcommand{\bx}{\mathbf{x}}
\newcommand{\be}{\mathbf{e}}
\newcommand{\bv}{\mathbf{v}}
\newcommand{\bu}{\mathbf{u}}
\newcommand{\by}{\mathbf{y}}
\newcommand{\ba}{\mathbf{a}}
\newcommand{\PM}{\mathbf{P}_{\bM}}
\newcommand{\PMp}{\mathbf{P}_{\bM}^{\perp}}
\newcommand{\PQ}{\mathbf{P}_{\bQ}}
\newcommand{\PQp}{\mathbf{P}_{\bQ}^{\perp}}
\newcommand{\cI}{\mathcal{I}}

\newcommand{\cV}{\mathcal{V}}
\newcommand{\eqd}{\overset{d}{=}}
\DeclareMathOperator{\tr}{tr}
\newcommand{\norm}[1]{\lVert #1 \rVert}
\newcommand{\abs}[1]{\lvert #1 \rvert}
\newcommand{\inner}[2]{\langle #1, #2 \rangle}
\newcommand{\ind}{\mathbf{1}}
\newcommand{\PL}{\mathrm{PL}}

\title{Universality of approximate message passing under a fourth-moment tail condition}
\author{%
  Ripunjay~Dwivedi\textsuperscript{1,2,3}\qquad Taejoon~Kim\textsuperscript{2}\\[0.5em]
  {\small\textsuperscript{1}University of Oxford, Oxford, United Kingdom}\\
  {\small\textsuperscript{2}Arizona State University, Tempe, AZ, USA}\\
  {\small\textsuperscript{3}Indian Institute of Technology Bombay, Mumbai, Maharashtra, India}\\[0.4em]
  {\small\href{mailto:ripunjay.dwivedi@pmb.ox.ac.uk}{\nolinkurl{ripunjay.dwivedi@pmb.ox.ac.uk}}}\\
  {\small\href{mailto:taejoonkim@asu.edu}{\nolinkurl{taejoonkim@asu.edu}}}%
}
\date{}

\begin{document}
\maketitle

\begin{abstract}
We prove state evolution for approximate message passing (AMP) with rectangular random matrices whose entries are independent, centered, and variance-normalized. The row and column dimensions grow with a ratio that converges to a positive finite limit. The entry laws may differ across coordinates and dimensions, and we assume uniform integrability of the fourth powers of the rescaled entries. For entries sampled independently from a fixed law, a finite fourth moment suffices, with no requirement for higher moments. The nonlinearities are continuously differentiable and globally Lipschitz, with bounded continuous first state derivatives. The signal and noise have independent and identically distributed coordinates with finite moments of all orders, and are independent of each other and of the matrix. At every fixed number of iterations, empirical averages of pseudo-Lipschitz functions of order two converge in probability to the limits specified by Gaussian state evolution. These averages include empirical second moments and products of trajectory coordinates. The proof combines an aggregate bound on third derivatives with randomized Lindeberg replacement and operator-norm truncation, without matching third moments or invoking a non-Gaussian universality theorem. We also show that independence of complementary projected components for every deterministic rank-one constraint pair characterizes Gaussian entries. This identifies the limitation of extending Gaussian conditioning identities to independent non-Gaussian matrices. Finally, a variance-profile example shows that average variance normalization alone does not ensure the homogeneous state evolution limits.
\end{abstract}


\smallskip
\noindent\textbf{Keywords:} Approximate message passing; state evolution; universality; Lindeberg principle; random matrices with independent entries; fourth moment; uniform integrability.

\section{Introduction}\label{sec:intro}

For the linear observation model
\begin{equation}\label{eq:model}
\by=\bA\bx_0+\bw,
\end{equation}
approximate message passing (AMP) alternates matrix--vector products, coordinatewise nonlinearities, and Onsager corrections. State evolution characterizes the limiting empirical distribution of the iterates via a Gaussian covariance recursion. Bayati and Montanari \cite{BM11} proved this for independent and identically distributed (i.i.d.)\ Gaussian matrices using Gaussian conditioning. Related conditioning methods appear in \cite{Bolthausen14}.

Universality asks whether state evolution holds when other entry laws replace the Gaussian entries. The first matrix multiplication involves sums of independent terms, but later updates reuse the matrix and depend on earlier iterates. A central limit theorem for the first multiplication does not apply to the full recursion.

This paper proves state evolution for rectangular matrices with independent, centered, variance-normalized entries. The entry laws may depend on indices and dimensions. The tail assumption is fourth-power uniform integrability. For i.i.d.\ entries sampled from a fixed law, a finite fourth moment suffices. Under jointly $C^1$, globally Lipschitz nonlinearities with bounded continuous first derivatives, and the side-data assumptions of Section~\ref{sec:algointro}, empirical averages of order-two pseudo-Lipschitz functions, denoted $\PL(2)$, converge in probability to their Gaussian state evolution limits at each iteration. These functions satisfy a Lipschitz bound whose coefficient grows linearly with the argument norms (Definition~\ref{def:PL}). The class includes squared errors and products of trajectory coordinates.

Earlier AMP universality theorems use different matrix and algorithm hypotheses. Bayati, Lelarge, and Montanari \cite{BLM15} analyze polynomial iterations with symmetric sub-Gaussian matrices. Chen and Lam \cite{CL21} analyze iterations with continuously differentiable nonlinearities and Lipschitz first derivatives. Han \cite{Han25} analyzes Lipschitz nonlinearities. Both \cite{CL21, Han25} assume sub-Gaussian matrices. Wang, Zhong, and Fan \cite{WZF} apply tensor networks, allowing variance profiles under dimension-dependent conditions on moments of all orders. Related results evaluate non-separable iterations \cite{LWF25}, polynomial AMP via tree pairings \cite{Kogan25}, first-order methods via a dynamical cavity method \cite{DGPZ26} or diagrammatic expansions \cite{GJKP26}, and non-symmetric or elliptic matrices with variance and correlation profiles \cite{GHN, GHNElliptic}. Gaussian variance-profile models \cite{JM13, BHX}, rotationally invariant models \cite{RSF19, Fan22}, and semirandom models \cite{DLS23} require distinct assumptions.

Our main result, Theorem~\ref{thm:main}, replaces sub-Gaussian or all-moment assumptions with fourth-power uniform integrability for independent entries. It permits entry laws lacking moments above order four. This tail condition and the entry-moment condition in \cite{WZF} do not nest. The sparse ensemble in Example~\ref{ex:sparse} satisfies the latter but violates fourth-power uniform integrability. In contrast, the Pareto ensemble in Example~\ref{ex:pareto} satisfies the former but violates the latter. These examples compare the entry-moment conditions. Section~\ref{sec:scope} compares the nonlinearity, side-data, and matrix-structure assumptions.

Nguyen and Kim \cite{NK23} propose state evolution for this independent-entry model under a uniform moment bound of order greater than four. \cite[Proposition~3]{NK23} gives an affine orthogonal-projection formula, which \cite[Remark~1]{NK23} then uses to characterize a conditional distribution. Such a conditional-law representation needs independence of the complementary projected components. Independent non-Gaussian entries do not in general supply this property. Section~\ref{sec:gaussian} isolates the deterministic projection formula from the component independence that identity requires. Theorem~\ref{thm:sharpcond} proves this independence holds for all deterministic rank-one constraint pairs if and only if the entries are Gaussian. Because this step is invalid for non-Gaussian entries, their argument does not prove the extension. This paper proves the result under fourth-power uniform integrability without non-Gaussian conditioning.

Section~\ref{sec:setting} details the assumptions and main results. Section~\ref{sec:gaussian} analyzes Gaussian conditioning and proves Theorem~\ref{thm:sharpcond}. Sections~\ref{sec:universality} and \ref{sec:assembly} prove Theorem~\ref{thm:main}. Section~\ref{sec:sharp} shows average variance normalization is insufficient. Section~\ref{sec:upgrades} discusses almost-sure convergence and moment-boundary conditions. Section~\ref{sec:scope} compares the hypotheses with other universality theorems.

\section{Setup and Main Results}\label{sec:setting}

\subsection{Notation}

Let $\N$ denote the natural numbers. For a vector $\bv = \sum_{i=1}^d v_i\be_i\in\R^d$ with standard basis $\{\be_i\}_{1\leq i\leq d}$, let $\norm{\bv}$ denote the Euclidean norm, $\norm{\bv}_p$ the $\ell^p$ norm, and $\langle\bv\rangle=\frac1d\sum_i v_i$ the empirical mean. For $\bv,\bu\in\R^d$, $f(\bv) := \sum_{i=1}^d f(v_i)\be_i$ denotes the component-wise application of a scalar function $f$, and $\bu\odot\bv := \sum_{i=1}^d u_i v_i\be_i$ denotes the component-wise product.

For a matrix $\bA\in \mathbb{R}^{n\times N}$, let $\norm{\bA}$ denote the operator norm, $\norm{\bA}_F$ the Frobenius norm, and $\bA^*$ the transpose. Let $n$ denote the number of rows, $N$ the number of columns, and the aspect ratio sequence be $\rho_n=n/N\to\rho\in(0,\infty)$ as $n\to \infty$ (from here on, we omit $n\to \infty$ for brevity).  Let $C,c$ denote generic positive constants that vary by context\footnote{Context dependence includes the fixed horizon, denoisers, aspect ratio, side-information laws, or moment bounds, but excludes $n$ and $N$ unless stated.}. 

For $T\in\N\cup\{0\}$, $t\le T$ denotes $t\in\{0,\ldots,T\}$. Let $Z$ denote a standard Gaussian variable and $\bG$ an $n\times N$ matrix with i.i.d.\ $\mathcal N(0,1/n)$ entries. For a random variable $X$ and $k\ge1$, $L^k$ denotes the space of random variables with $\E\abs{X}^k<\infty$, normed by $\norm{X}_{L^k}:=(\E\abs{X}^k)^{1/k}$. A family is uniformly bounded in $L^k$ if $\sup\norm{X}_{L^k}<\infty$. 

Let $\xrightarrow{\PP}$ denote convergence in probability and $\xrightarrow{\text{a.s.}}$ almost-sure convergence. For random variables $Y_n$ and positive scalars $r_n$, $Y_n=O_\PP(r_n)$ denotes a sequence bounded in probability, meaning $\lim_{K\to\infty}\sup_n\PP(\abs{Y_n}>Kr_n)=0$, and $Y_n=o_\PP(1)$ denotes $Y_n\xrightarrow{\PP}0$. For deterministic sequences, $a_n = O(b_n)$ implies $\limsup_{n\to\infty} \abs{a_n}/b_n < \infty$, and $a_n = o(b_n)$ implies $\lim_{n\to\infty} a_n/b_n = 0$.

Primes $(\prime)$ denote independent copies or alternate entities (e.g., $\bA', \bG', X'$). In some contexts, they denote derivatives. The $k$-th derivative of a real-valued $F$ is $F^{(k)}$. For functions on $\R^d$, $\partial_j^{k}$ denotes the $k$-th derivative in the $j$-th argument.

\begin{definition}[Pseudo-Lipschitz functions]\label{def:PL}
A function $\phi:\R^d\to\R$ is pseudo-Lipschitz of order $2$, denoted $\phi\in\PL(2)$, if there exists $L<\infty$ such that $\abs{\phi(\bx)-\phi(\by)}\le L\left(1+\norm{\bx}+\norm{\by}\right)\norm{\bx-\by}$ for all $\bx,\by\in\R^d$.
\end{definition}

Let $C_b^3(\R^d)$ denote the space of bounded, thrice continuously differentiable functions whose partial derivatives of orders $1$, $2$, and $3$ are bounded. Let $\|\phi\|_\infty:=\sup_{\bx\in\R^d}|\phi(\bx)|$ denote the supremum norm. Let $C_b^{3,1}(\R^d)$ denote the subclass of $C_b^3(\R^d)$ whose third-order partial derivatives are globally Lipschitz. Constants depending on a test function may depend on its supremum norm, its derivative bounds through order $3$, and the Lipschitz constants of its third-order partial derivatives.

\subsection{The algorithm and the assumptions}\label{sec:algointro}

This analysis considers the general two-stage AMP recursion for linear observation models, expressed in the coordinates of the true signal $\bx_0$ and the measurement noise $\bw$.

The algorithm alternates between the signal space $\R^N$ and the measurement space $\R^n$ via the matrix $\bA\in \R^{n\times N}$ and nonlinear maps $f_t, g_t$. The iterates are decoupled via Onsager corrections $\lambda_t \bm^{t-1}$ and $\xi_t \bq^t$. 

The AMP analysis recursion generates state vectors $\bq^t\in\R^N$, $\bb^t,\bm^t\in\R^n$, and $\bh^{t+1}\in\R^N$ for $t\ge0$ via
\begin{equation}\label{eq:amp}
\begin{aligned}
\bq^t&=f_t(\bh^t,\bx_0), &\quad \bb^t&=\bA \bq^t-\lambda_t \bm^{t-1},\\
\bm^t&=g_t(\bb^t,\bw), &\quad \bh^{t+1}&=\bA^* \bm^t-\xi_t \bq^t,
\end{aligned}
\end{equation}
initialized at $\bh^0=\mathbf 0_N$ and $\bm^{-1}=\mathbf 0_n$. The maps $f_t, g_t$ act entrywise. The empirical Onsager coefficients are
\begin{equation}\label{eq:onsager}
\lambda_t=\rho_n^{-1}\,\langle \partial_1 f_t(\bh^t,\bx_0)\rangle, \qquad \xi_t=\langle \partial_1 g_t(\bb^t,\bw)\rangle.
\end{equation}

\begin{assumption}[Measurement ensemble $\cI$]\label{ass:matrix}
Let $\cI$ denote the set of matrices $\bA=(A_{ij})\in\R^{n\times N}$ with $A_{ij}=U_{ij}/\sqrt n$, where the collection $\{U_{ij}\}$ contains independent random variables satisfying
\[
\E U_{ij}=0,\qquad \E U_{ij}^2=1\quad\text{for all }i,j,n,
\]
and
\[
M_{2+2\alpha}:=\sup_{n\ge1}\sup_{i,j}\E\abs{U_{ij}}^{2+2\alpha}<\infty
\qquad\text{for some }\alpha>1 .
\]
The sequence of aspect ratios $\rho_n=n/N$ converges to a limit $\rho\in(0,\infty)$.
\end{assumption}

Define $\mu_{ij} := U_{ij}$ and $\nu := M_{2+2\alpha}$. Because $\rho_n \to \rho$, $\rho_n$ is bounded away from zero and infinity for large $n$. Constants depending on these bounds are denoted $C_\rho$.

\begin{assumption}[Fourth-moment uniform integrability]\label{ass:UI}
$\bA=(\mu_{ij}/\sqrt n)$ has independent entries with $\E\mu_{ij}=0$, $\E\mu_{ij}^2=1$, $\sup_{n,i,j}\E\mu_{ij}^4<\infty$, and
\[
\nu_4(K):=\sup_{n}\sup_{i,j}\E\left[\mu_{ij}^4\,;\,\abs{\mu_{ij}}>K\right]\
\xrightarrow[K\to\infty]{}\ 0 .
\]
\end{assumption}

Assumption~\ref{ass:matrix} implies Assumption~\ref{ass:UI}, since $\nu_4(K)\le\nu K^{-(2\alpha-2)}$. Assumption~\ref{ass:UI} also holds for entries sampled from a fixed centered, unit-variance law with a finite fourth moment.

\begin{assumption}[Nonlinearities]\label{ass:functions}
For $t\ge0$, $f_t,g_t:\R^2\to\R$ are jointly continuously differentiable and globally Lipschitz. The partial derivatives $\partial_1 f_t$ and $\partial_1 g_t$ are bounded. There exists $L_t\ge1$ bounding the Lipschitz constants, the supremum norms $\|\partial_1 f_t\|_\infty$ and $\|\partial_1 g_t\|_\infty$, and the origin evaluations $|f_t(0,0)|$ and $|g_t(0,0)|$.
\end{assumption}

This bound implies
\[
|f_t(0,y)|+|g_t(0,y)| \le 2L_t(1+|y|) \quad \text{for all }y\in\R.
\]
The derivatives $\partial_1 f_t$ and $\partial_1 g_t$ are bounded but not required to be globally Lipschitz.

Certain proofs invoke higher-order regularity. When invoked, we assume $x \mapsto f_t(x,y)$ and $x \mapsto g_t(x,y)$ are thrice continuously differentiable, with uniformly bounded second and third partial derivatives, and globally Lipschitz third partial derivatives uniformly in $y$. There exists $J_t<\infty$ such that
\begin{equation}\tag{S3}\label{eq:smoothcalc}
\begin{aligned}
\sup_{x,y}\sum_{k=2}^3\left(|\partial_1^k f_t(x,y)|+|\partial_1^k g_t(x,y)|\right)&\le J_t,\\
|\partial_1^3 f_t(x,y)-\partial_1^3 f_t(x',y)|
+|\partial_1^3 g_t(x,y)-\partial_1^3 g_t(x',y)|&\le J_t|x-x'|.
\end{aligned}
\end{equation}
Constants in results invoking \eqref{eq:smoothcalc} may depend on $\max_{t\le T}J_t$.

\begin{assumption}[Side data and initialization]\label{ass:signal}
The vectors $\bx_0$ and $\bw$ have i.i.d.\ entries with laws $\mu_{X_0}$ and $\mu_W$. They are independent of each other and $\bA$. The random variables $X_0 \sim \mu_{X_0}$ and $W \sim \mu_W$ have finite moments of all orders. The recursion is initialized at $\bq^0=f_0(\mathbf 0,\bx_0)$.
\end{assumption}

Convergence in probability depends on the marginal law in each dimension, and requires no inter-dimensional coupling. Almost-sure statements assume realization on a common probability space. Lemma~\ref{lem:slln} accommodates dependence across rows of the array.

\subsection{Preliminaries}\label{sec:toolbox}

This subsection states probabilistic and matrix-analytic estimates for subsequent proofs. Proofs are in Appendix~\ref{app:toolbox}. All random variables are real-valued.

\begin{lemma}[Even-moment Rosenthal inequality \cite{Rosenthal70}]\label{lem:rosenthal}
Let $X_1,\dots,X_d$ be independent random variables with $\E X_l=0$ and finite $2p$-th moments for $p\in\N$. Then
\[
\E\left(\sum_{l=1}^dX_l\right)^{2p}
\ \le\ C_p\left[\left(\sum_{l=1}^d\E X_l^2\right)^{p}+\sum_{l=1}^d\E X_l^{2p}\right],
\]
where $C_p:=(8p)^{2p}$. 
\end{lemma}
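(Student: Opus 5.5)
I would prove Lemma~\ref{lem:rosenthal} directly by expanding the power and bounding the resulting mixed moments; no martingale machinery is needed. The explicit constant $C_p=(8p)^{2p}$ is generous, so a crude combinatorial count will suffice.

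Write $S=\sum_{l=1}^d X_l$. By the multinomial expansion and independence,
\[
\E S^{2p}=\sum_{k_1+\cdots+k_d=2p}\binom{2p}{k_1,\dots,k_d}\prod_{l}\E X_l^{k_l}.
\]
Every term with some $k_l=1$ vanishes because $\E X_l=0$. Hence only multi-indices whose nonzero entries are all at least $2$ survive. Such an index has some number $r\le p$ of nonzero entries, with exponents $k_{l_1},\dots,k_{l_r}\ge2$ summing to $2p$.

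For each surviving term I use Lyapunov/Hölder interpolation. Set $a_l:=\E X_l^2$ and $b_l:=\E X_l^{2p}$. For $2\le k\le 2p$, log-convexity of $k\mapsto\norm{X_l}_{L^k}^k$ gives
\[
\E\abs{X_l}^{k}\le a_l^{\theta}\, b_l^{1-\theta},\qquad \theta=\frac{2p-k}{2p-2}.
\]
The product over the chosen indices is then bounded by $\prod_l a_l^{\theta_l}b_l^{1-\theta_l}$. Next I apply Hölder, or equivalently weighted AM--GM, in the form: for nonnegative $x_l,y_l$ and exponents with $\sum_l k_l=2p$, the sum over index tuples satisfies
\[
\sum \prod_l a_l^{\theta_l}b_l^{1-\theta_l}\le \Big(\sum_l a_l\Big)^{p}+\sum_l b_l .
\]
The cleanest route to this bound is the standard one. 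Write $A=\sum a_l$ and $B=\sum b_l$, and note that
\[
\prod_{j} a_{l_j}^{\theta_j}b_{l_j}^{1-\theta_j}=\prod_j \big(a_{l_j}^{p}\big)^{\theta_j/p}\big(b_{l_j}\big)^{1-\theta_j}\cdots
\]
Alternatively, and this is the version I would actually execute, take for each fixed shape $(k_1,\dots,k_r)$ Hölder's inequality over the sum of products. This gives
\[
\sum_{l_1,\dots,l_r}\prod_j\E\abs{X_{l_j}}^{k_j}\le\prod_j\Big(\sum_l \E\abs{X_l}^{k_j}\Big),
\]
and then
\[
\sum_l\E\abs{X_l}^{k}\le A^{\theta}B^{1-\theta}
\]
by Hölder in $l$. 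Multiplying over $j$, the exponents of $A$ and $B$ combine to $A^{\sum\theta_j}B^{r-\sum\theta_j}$. Using $\sum_j k_j=2p$, one checks that $p\sum\theta_j+(r-\sum\theta_j)\cdot 1$ matches homogeneity, so that $A^{s}B^{r-s}$ with the degree balanced is at most $\max(A^p,B)\le A^p+B$ by weighted AM--GM. This homogeneity check is the one step I expect to need care: I must verify that the pair of exponents $(\sum\theta_j,\,r-\sum\theta_j)$ corresponds to a convex combination of the endpoints $A^p$ and $B$. Equivalently, I must show that $(A^p)^{\beta}B^{1-\beta}$ arises with $\beta=\sum\theta_j/p$ and $1-\beta=r-\sum\theta_j$, which requires $\sum\theta_j/p+r-\sum\theta_j=1$. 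If this identity fails in general, I would instead first normalize so that $A=1$ by scaling all $X_l$. The inequality is homogeneous of degree $2p$, so after scaling each term is bounded by $B^{r-\sum\theta_j}\le 1+B$ when the exponent lies in $[0,1]$, and I would check that range directly.

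Finally I count. The number of shapes times the multinomial coefficients is at most the number of ways to distribute $2p$ ordered factors, which is crudely bounded by $(2p)^{2p}$. Each shape contributes at most $A^p+B$, so $\E S^{2p}\le (2p)^{2p}\cdot 2\,(A^p+B)$, well within $(8p)^{2p}$. Since the lemma is classical (Rosenthal), I could alternatively cite \cite{Rosenthal70} for the inequality and only track the constant. The main obstacle is the exponent bookkeeping in the Hölder/AM--GM step; the counting is routine.
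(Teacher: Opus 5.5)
Your proof is correct once you carry out the exponent check you flagged, and that check does close, so the fallback is unnecessary. From $\sum_{j=1}^r k_j=2p$ you get $s:=\sum_j\theta_j=\frac{2pr-2p}{2p-2}=\frac{p(r-1)}{p-1}$ and $r-s=\frac{p-r}{p-1}$. Hence $A^{s}B^{r-s}=(A^p)^{\beta}B^{1-\beta}$ with $\beta=\frac{r-1}{p-1}$, and $\beta\in[0,1]$ because $1\le r\le p$. Weighted AM--GM then bounds this by $A^p+B$. Two small points remain. First, handle $p=1$ separately: there $\theta$ is $0/0$, but $\E S^2=\sum_l\E X_l^2$ directly. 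Second, the step you call H\"older over the sum of products is really just the observation that a sum over distinct index tuples is dominated by the sum over all tuples, and the latter factorizes.

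The paper's proof differs exactly at this interpolation step. It first uses homogeneity to normalize $(\sum_l a_l)^p+\sum_l b_l=1$, which gives $\sum_l a_l\le1$ and $\sum_l b_l\le1$. It then replaces each $\E\abs{X_j}^m$ by the crude, non-homogeneous bound $a_j+b_j$ (Lyapunov followed by Young). Finally it controls $\sum_{j_1<\dots<j_r}\prod_i(a_{j_i}+b_{j_i})\le\frac1{r!}\bigl(\sum_j(a_j+b_j)\bigr)^r\le 2^r$. That route needs no exponent bookkeeping, because losing homogeneity in $a_j+b_j$ is harmless once everything has size at most one. It pays with the looser constant $32^p p^{2p+1}\le(8p)^{2p}$.

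Your route applies H\"older in the summation index separately for each shape. This keeps exact homogeneity and shows that each shape contributes a geometric mean of the two Rosenthal terms. A careful version of your count is also sharper. The sum of multinomial coefficients over compositions of $2p$ into $r\le p$ parts, each of size at least $2$, is at most the number of maps $\{1,\dots,2p\}\to\{1,\dots,p\}$. This yields the constant $p^{2p}$. Your fallback of normalizing $A=1$ is essentially the paper's normalization, applied after your interpolation rather than in place of it.
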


The constant $C_p = (8p)^{2p}$ is derived in Appendix~\ref{app:toolbox}\footnote{The constant $C_p:=(8p)^{2p}$ is sufficient but not optimal.}. 

\begin{lemma}[Fourth-moment law of large numbers]\label{lem:slln}
Almost-sure convergence holds under bounded fourth central moments.
\begin{enumerate}
    \item[\textup{(a)}] \textbf{Sequences:} Let $(Y_k)_{k\ge1}$ be independent random variables with $\E Y_k=y$ and $\sup_k\E(Y_k-y)^4<\infty$. Then $\frac1d\sum_{k=1}^d Y_k\to y$ almost surely as $d\to\infty$.
    \item[\textup{(b)}] \textbf{Triangular arrays:} Let $(Y_{d,k})_{d\ge1, 1\le k\le d}$ be a triangular array where, for each fixed row $d$, the variables $(Y_{d,k})_{k=1}^d$ are mutually independent. If $\E Y_{d,k}=y$ and $\sup_{d,k}\E(Y_{d,k}-y)^4<\infty$, then $\frac1d\sum_{k=1}^d Y_{d,k}\to y$ almost surely as $d\to\infty$.
\end{enumerate}
\end{lemma}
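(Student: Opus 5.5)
The plan is to prove (a) and (b) together with the classical fourth-moment Borel--Cantelli argument. Part (a) follows from part (b) by setting $Y_{d,k}:=Y_k$ for $k\le d$: each row is then a family of mutually independent variables with common mean $y$ and the same uniform fourth-moment bound, and the row averages are exactly $\frac1d\sum_{k=1}^dY_k$. So it suffices to prove (b). Center first by writing $X_{d,k}:=Y_{d,k}-y$. Then $\E X_{d,k}=0$ and $K:=\sup_{d,k}\E X_{d,k}^4<\infty$. By Jensen (or Lyapunov), $\E X_{d,k}^2\le K^{1/2}$ as well.

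The key estimate is a fourth-moment bound on the row sum $S_d:=\sum_{k=1}^dX_{d,k}$. Within a fixed row the summands are independent and centered, so Lemma~\ref{lem:rosenthal} with $p=2$ gives
\[
\E S_d^4\le C_2\Bigl[\Bigl(\sum_{k=1}^d\E X_{d,k}^2\Bigr)^2+\sum_{k=1}^d\E X_{d,k}^4\Bigr]\le C_2\,(Kd^2+Kd)\le 2C_2Kd^2 .
\]
One could instead expand $S_d^4$ directly: only the terms $X_{d,k}^4$ and $X_{d,k}^2X_{d,l}^2$ survive, giving the bound $dK+3d(d-1)K$. Either way, $\E S_d^4=O(d^2)$.

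Then apply Markov's inequality. For each fixed $\varepsilon>0$,
\[
\PP\bigl(\abs{S_d/d}>\varepsilon\bigr)\le\frac{\E S_d^4}{\varepsilon^4d^4}\le\frac{2C_2K}{\varepsilon^4d^2},
\]
and these bounds are summable in $d$. By the first Borel--Cantelli lemma, $\abs{S_d/d}\le\varepsilon$ for all sufficiently large $d$, almost surely. Taking $\varepsilon=1/m$ and intersecting the resulting countably many probability-one events over $m\in\N$ gives $S_d/d\to0$ almost surely, that is, $\frac1d\sum_kY_{d,k}\to y$ almost surely.

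The only delicate point is that (b) has no independence across rows. The first Borel--Cantelli lemma needs no independence among the events, so the summability above is enough. This is why the fourth moment, rather than the second, is used: the second moment would only give a bound of order $1/d$, which is not summable. As stated in the setup, the almost-sure statement presumes all rows are realized on one common probability space.
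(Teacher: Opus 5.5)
Your proof is correct and is essentially the paper's argument: Lemma~\ref{lem:rosenthal} with $p=2$ (or the direct expansion) gives $\E S_d^4\le 2C_2Kd^2$, then Markov's inequality and the first Borel--Cantelli lemma, which needs no independence across rows. The only difference is cosmetic: you prove (b) and obtain (a) as the special case $Y_{d,k}=Y_k$, whereas the paper proves (a) and notes that the same row-sum bound gives (b).
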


\begin{theorem}[Lata{\l}a's norm bound {\cite[Theorem~2]{Latala05}}]\label{thm:latala}
There exists a constant $C_\ast$ such that for any random matrix $\mathbf D = (D_{ij}) \in \R^{n \times N}$ with mutually independent, mean-zero entries,
\[
\E\norm{\mathbf D} \le C_\ast \Biggl( \max_{1\le i\le n}\left(\sum_{j=1}^N \E D_{ij}^2\right)^{1/2} + \max_{1\le j\le N}\left(\sum_{i=1}^n \E D_{ij}^2\right)^{1/2} + \left(\sum_{i=1}^n \sum_{j=1}^N \E D_{ij}^4\right)^{1/4} \Biggr).
\]
\end{theorem}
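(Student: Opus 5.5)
The plan is to reduce to Gaussian matrices with a fixed variance profile, prove the bound there, and then average over the profile. The statement is imported from \cite[Theorem~2]{Latala05}, so in the paper one may simply cite it. What follows reconstructs the standard route.

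\emph{Step 1 (symmetrization).} Let $\mathbf D'$ be an independent copy of $\mathbf D$. Since the entries are centered, Jensen's inequality conditional on $\mathbf D$ gives $\E\norm{\mathbf D}\le\E\norm{\mathbf D-\mathbf D'}$. Now $\mathbf D-\mathbf D'$ has independent symmetric entries, so it has the same law as $(\varepsilon_{ij}(D_{ij}-D'_{ij}))$ for an independent Rademacher array $(\varepsilon_{ij})$. The second and fourth moments of $D_{ij}-D'_{ij}$ are at most $2\E D_{ij}^2$ and $16\E D_{ij}^4$. It therefore suffices to treat $\mathbf X=(\varepsilon_{ij}a_{ij})$, conditionally on the magnitudes $a_{ij}=\abs{D_{ij}-D'_{ij}}$.

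\emph{Step 2 (Rademacher to Gaussian).} Let $(g_{ij})$ be i.i.d.\ standard Gaussians. Since $g_{ij}\eqd\varepsilon_{ij}\abs{g_{ij}}$ and $\E\abs{g_{ij}}=\sqrt{2/\pi}$, Jensen's inequality (conditioning on the signs) gives
\[
\E_\varepsilon\norm{(\varepsilon_{ij}a_{ij})}\le\sqrt{\pi/2}\,\E_g\norm{(g_{ij}a_{ij})}.
\]
This reduces the problem to the \textbf{deterministic-profile Gaussian bound}
\[
\E\norm{(a_{ij}g_{ij})}\le C\Bigl(\max_i\bigl(\textstyle\sum_j a_{ij}^2\bigr)^{1/2}+\max_j\bigl(\textstyle\sum_i a_{ij}^2\bigr)^{1/2}+\bigl(\textstyle\sum_{i,j}a_{ij}^4\bigr)^{1/4}\Bigr).
\]

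\emph{Step 3 (the Gaussian bound — main obstacle).} I would follow Latała and use the moment method. Bound $\E\tr(\mathbf X\mathbf X^*)^p$ with $p\asymp\log(n+N)$ by Wick's formula, expanding over closed paths of length $2p$ on the bipartite graph. The crude estimate, bounding each path weight by the maximal row and column sums, loses a logarithmic factor, as in the Bandeira--van Handel bound. Latała's refinement is needed to remove it. Split paths according to how many edges are traversed exactly twice versus more often. Edges traversed at least four times contribute through $\sum a_{ij}^4$. The remaining ``tree-like'' contributions are controlled by row and column sums with a combinatorial count free of $\log$ factors. An alternative first attempt is to decompose the $a_{ij}$ dyadically by size, apply Gaussian concentration to $\norm{\cdot}$ on each level, and sum. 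I expect this step to be the real difficulty, and in the paper I would defer it to \cite{Latala05}.

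\emph{Step 4 (averaging the profile).} Take expectations over the $a_{ij}$. By Jensen, $\E\max_i(\sum_j a_{ij}^2)^{1/2}\le(\E\max_i\sum_j a_{ij}^2)^{1/2}$. Write $a_{ij}^2=\E a_{ij}^2+Y_{ij}$ with $Y_{ij}$ centered and independent. Then
\[
\E\max_i\Bigl|\sum_j Y_{ij}\Bigr|\le\Bigl(\sum_i\E\bigl(\textstyle\sum_jY_{ij}\bigr)^2\Bigr)^{1/2}\le\Bigl(\sum_{i,j}\E a_{ij}^4\Bigr)^{1/2}.
\]
Taking the square root turns this into the fourth-moment term. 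The column term is handled symmetrically, and $\E(\sum a_{ij}^4)^{1/4}\le(\sum\E a_{ij}^4)^{1/4}$ by Jensen. Combining the four steps, and absorbing the moment comparisons between $D_{ij}-D'_{ij}$ and $D_{ij}$ into $C_\ast$, yields the claimed bound.
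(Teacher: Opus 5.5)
Your proposal is correct and, like the paper (which imports the result from \cite{Latala05} without proof), ultimately rests on Latała's Gaussian bound with a deterministic variance profile; Steps~1, 2 and~4 (symmetrization, the $\sqrt{\pi/2}$ Gaussian contraction, and averaging the profile via $\E\max_i|\sum_jY_{ij}|\le(\sum_{i,j}\E a_{ij}^4)^{1/2}$) are a valid, standard reduction of the independent-entry statement to that Gaussian case. The one caveat is that your Step~3 outline is only heuristic: charging edges of multiplicity at least four to $\sum_{i,j}a_{ij}^4$ does not remove the need for $p\gtrsim\log(n+N)$ imposed by the tree-like walks, and the Gaussian moments of high-multiplicity edges then reintroduce a term of order $\max_{i,j}a_{ij}\sqrt{p}$, so that step has to be taken from the citation.
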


\subsection{State evolution}

The AMP recursion alternates multiplication by \(A\) and \(A^*\) with coordinatewise nonlinearities \(f_t\) and \(g_t\). Repeated multiplication by $\bA$ induces correlations between the iterates and the matrix, invalidating standard central limit arguments. The Onsager corrections compensate for the leading effects of matrix reuse. For Gaussian matrices, orthogonal conditioning formally decouples the iterates. The deterministic covariance recursion of this decoupled system is the state evolution. The universality argument transfers this Gaussian limit to the non-Gaussian ensembles. We define the limiting Gaussian variables and their covariance sequence as follows.

\begin{definition}[Two-time state evolution]\label{def:SE}
Let $(X_0,W) \sim \mu_{X_0} \otimes \mu_W$. Define sequences $(B^0,\dots,B^T)$ and $(H^1,\dots,H^{T+1})$ as centered, jointly Gaussian families independent of $(X_0,W)$, with covariances initialized at $H^0 := 0$:
\begin{equation}\label{eq:SE}
\begin{aligned}
\E\bigl[B^rB^s\bigr] &= \frac{1}{\rho}\E\bigl[f_r(H^r,X_0)f_s(H^s,X_0)\bigr],\\
\E\bigl[H^{r+1}H^{s+1}\bigr] &= \E\bigl[g_r(B^r,W)g_s(B^s,W)\bigr],
\end{aligned}
\end{equation}
for $0\le r,s\le T$. The state variances are
\[
\sigma_t^2 := \E\bigl[(B^t)^2\bigr] = \frac{1}{\rho}\E\bigl[f_t(H^t,X_0)^2\bigr], \qquad
\tau_t^2 := \E\bigl[(H^{t+1})^2\bigr] = \E\bigl[g_t(B^t,W)^2\bigr].
\]
The deterministic Onsager coefficients are
\begin{equation}\label{eq:onsagerdet}
\bar{\lambda}_t := \frac{1}{\rho}\E\bigl[\partial_1 f_t(H^t,X_0)\bigr],
\qquad
\bar{\xi}_t := \E\bigl[\partial_1 g_t(B^t,W)\bigr].
\end{equation}
Define the limit trajectory vectors
\begin{equation}\label{eq:Vstar}
V^\ast_H := (H^1,\dots,H^{T+1},X_0) \in \R^{T+2}, \qquad V^\ast_B := (B^0,\dots, B^T,W) \in \R^{T+2}.
\end{equation}
\end{definition}

\begin{remark}\label{rem:SE}
Equation~\eqref{eq:SE} defines the state evolution \cite{BM11}. The covariance of $B$ at step $t$ depends on the joint law of $H$ up to step $t$. This inductive construction defines a consistent positive-semidefinite multivariate Gaussian distribution.
\end{remark}

The families $(B^t)$ and $(H^t)$ are mutually independent. Positivity conditions (e.g., $\sigma_0^2 > 0$) are assumed only when required.

\subsubsection*{Fixed-level truncation}
For a fixed cutoff $M > 0$, define the truncated and standardized entries
\[
U_{ij}^{(M)} := \frac{U_{ij}\ind\{|U_{ij}|\le M\}-\E\bigl[U_{ij}\ind\{|U_{ij}|\le M\}\bigr]}{\sqrt{\Var\bigl(U_{ij}\ind\{|U_{ij}|\le M\}\bigr)}},
\]
and define the matrix $\bAh_M:= n^{-1/2}(U_{ij}^{(M)})$. For large $M$, the variance denominator is non-zero. The limit $n \to \infty$ is taken at fixed $M$. Lemma~\ref{lem:truncremoval} details the approximation estimates for this ensemble.

\subsection{Main results}

\begin{theorem}[State evolution under fourth-power uniform integrability]\label{thm:main}
Fix $T\in\N\cup\{0\}$. Assume the nonlinearities and side-data satisfy Assumptions~\ref{ass:functions} and \ref{ass:signal}, and the measurement matrix satisfies Assumption~\ref{ass:matrix} or the uniform integrability condition of Assumption~\ref{ass:UI}. For every pseudo-Lipschitz test function $\phi\in\PL(2)$ on $\R^{T+2}$,
\begin{align}
\frac{1}{N}\sum_{k=1}^{N}\phi\bigl(h^1_k,\dots,h^{T+1}_k,x_{0k}\bigr)
&\;\xrightarrow{\ \PP\ }\;
\E\bigl[\phi(V^\ast_H)\bigr], \label{eq:mainh}\\
\frac{1}{n}\sum_{i=1}^{n}\phi\bigl(b^0_i,\dots,b^{T}_i,w_i\bigr)
&\;\xrightarrow{\ \PP\ }\;
\E\bigl[\phi(V^\ast_B)\bigr], \label{eq:mainb}
\end{align}
as $n\to\infty$. 

This convergence holds in $L^1$ for
\begin{enumerate}
    \item[\textup{(a)}] The untruncated ensemble $\bA$, for bounded $\phi$.
    \item[\textup{(b)}] The iterates computed with $\bAh_M$ in place of $\bA$, for every sufficiently large fixed $M$ and every $\phi\in\PL(2)$.
\end{enumerate}
If the sequence of ensembles is realized on a common probability space, there exists a deterministic subsequence along which \eqref{eq:mainh} and \eqref{eq:mainb} hold almost surely for fixed $\phi$.
\end{theorem}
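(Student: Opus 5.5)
We reduce Theorem~\ref{thm:main} to the Gaussian case, where state evolution for $\bG$ is known from \cite{BM11}, and transfer it to the non-Gaussian ensemble in three stages. \textbf{(i) Truncation.} We replace $\bA$ by the fixed-level truncation $\bAh_M$. Lemma~\ref{lem:truncremoval}, together with Lata{\l}a's bound (Theorem~\ref{thm:latala}) applied to $\bA-\bAh_M$, should give $\E\norm{\bA-\bAh_M}\le C\,\nu_4(M)^{1/4}+o(1)$. The row and column variance terms are $O(\nu_4(M)^{1/2})$, and the fourth-moment term is $(nN\cdot n^{-2}\nu_4(M))^{1/4}$. By induction over $t\le T$, using that $f_t,g_t$ are Lipschitz, that $\norm{\bA},\norm{\bAh_M}=O_\PP(1)$ (again Lata{\l}a), and that the Onsager coefficients $\lambda_t,\xi_t$ are bounded, we obtain
\[
n^{-1/2}\norm{\bb^t-\widehat{\bb}^t}+N^{-1/2}\norm{\bh^{t+1}-\widehat{\bh}^{t+1}}\le \epsilon(M)+o_\PP(1),
\qquad \epsilon(M)\to0 .
\]
There is one catch. $\lambda_t,\xi_t$ involve $\partial_1 f_t,\partial_1 g_t$, which are only continuous and bounded, so their difference is controlled by dominated convergence along the trajectory rather than by a Lipschitz bound. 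We handle this after the Gaussian limit identifies the empirical law, since bounded continuous test functions of the iterates converge. Since $\phi\in\PL(2)$, the averages in \eqref{eq:mainh}--\eqref{eq:mainb} then differ by at most $C(\epsilon(M)+o_\PP(1))$. Sending $n\to\infty$ and then $M\to\infty$ reduces everything to part~(b).

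\textbf{(ii) Smoothing.} For bounded entries we further approximate $f_t,g_t$ by mollified versions satisfying \eqref{eq:smoothcalc}, and $\phi$ by $C_b^{3,1}$ test functions. The Lipschitz constants stay uniform, and the derivatives converge locally uniformly. The effect on the iterates is again controlled inductively, with state evolution continuous in the nonlinearities. \textbf{(iii) Lindeberg replacement.} For the smooth, bounded-entry case we view $\Psi(\bA):=\frac1N\sum_k\phi(h^1_k,\dots,x_{0k})$ as a function of the $nN$ entries. We swap the entries of $\bAh_M$ one at a time with Gaussians. Matching first and second moments leaves a third-order Taylor remainder of size $n^{-3/2}\E|U|^3\sup|\partial_{ij}^3\Psi|$ per entry. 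We will not try to bound each $\partial^3_{ij}\Psi$ by $o(n^{-1/2})$ uniformly. Instead we prove an \emph{aggregate} bound $\sum_{i,j}\E\sup|\partial_{ij}^3\Psi|=o(n^{3/2})$ by expanding the derivatives of the AMP map through the chain rule. Each derivative hits $\bA$ linearly and produces products of the form $(\bq^s)_j(\bm^r)_i\cdot(\text{bounded Jacobian entries})/N$. Summed over $(i,j)$, these are controlled by $\norm{\bq}^2\norm{\bm}/N^{2}$-type quantities, giving a total of $O(n^{-1/2})$.

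To keep the bounds uniform across intermediate hybrid matrices, we randomize the replacement order. We pick a uniformly random entry order, or equivalently interpolate $\sqrt{s}\,\bAh_M+\sqrt{1-s}\,\bG$ entrywise with independent Bernoulli masks. We also intersect with the event $\{\norm{\cdot}\le C\}$, whose complement has small probability by Lata{\l}a's bound; this is the operator-norm truncation. The same argument applied to $\Psi^2$ gives concentration, and hence $L^1$ convergence, since the Gaussian average concentrates by \cite{BM11}. For bounded $\phi$, part~(a) follows from (i) and uniform integrability. The almost-sure subsequence follows by choosing $n_k$ with $\PP(|\cdot-\E\phi|>1/k)<2^{-k}$ and applying Borel--Cantelli.

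\textbf{Main obstacle.} The main difficulty is the aggregate third-derivative estimate in (iii). Differentiating $T$ layers of the recursion three times creates many terms, including resolvent-free but nested products of $\bA$, diagonal derivative matrices, and rank-one perturbations $\be_i\be_j^*$. These must be shown to sum to $o(n^{3/2})$ using only operator-norm control and delocalization-free bounds. I would first try to organize the terms as multilinear forms $\langle \bu,\bM_1\be_i\rangle\langle\be_j,\bM_2\bv\rangle\cdots$ with $\norm{\bM_k}\le C$ on the truncation event, and then apply Cauchy--Schwarz over $(i,j)$. The Onsager coefficients, which are averages of $\partial_1 f_t$, contribute extra $1/N$ factors that make their derivatives harmless.
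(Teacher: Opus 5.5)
Your architecture matches the paper's: truncate with Lata{\l}a's bound, mollify, run a randomized Lindeberg replacement controlled by an aggregate third-derivative bound, and obtain concentration from the second moment. There is a genuine gap in the step you call the main obstacle. The aggregate estimate cannot be obtained ``using only operator-norm control and delocalization-free bounds.'' Differentiating $\bh^{t+1}=\bA^*\bm^t-\xi_t\bq^t$ in $A_{ab}$ produces the insertion term $\be_b m^t_a$. Its propagator is the identity, not a delocalized matrix, so $\partial_{ab}h^{t+1}_k$ contains $\delta_{kb}m^t_a$. The part of $\partial^3_{ab}\Psi$ that is cubic in first derivatives therefore contains $N^{-1}\phi'''(\cdot)(m^t_a)^3$ evaluated at coordinate $b$. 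Summing absolute values over $(a,b)$ then requires a bound on $\sum_a\abs{m^t_a}^3=\norm{\bm^t}_3^3$. Cauchy--Schwarz over $(a,b)$ does not help with a term cubic in first derivatives, and $\phi'''$ has no sign to exploit. Operator norms and $\ell^2$ energies give only $\norm{\bm^t}_3^3\le\norm{\bm^t}_2^3=O(n^{3/2})$. The Lindeberg error $n^{-3/2}\cdot O(n^{3/2})$ then does not vanish. The same defect enters the pointwise bounds on $\norm{\partial^r_{ab}\bu}_2$, which contain $\abs{q_b}$ and $\abs{m_a}$. You need these bounds for the fourth-order remainder and for moving the cubic derivative from the hybrid matrix with entry $e$ removed to a common matrix. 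If coordinates can be of size $\sqrt n$, these terms blow up.

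The missing idea is coordinatewise (delocalization) control of the iterates. The paper carries $R=1+\max_{t,u,i}\abs{u_i}$ explicitly through every estimate. For instance, $\sum_b\abs{\alpha_b}^3\le NR^3$ gives $\sum_{a,b}\abs{\partial^3_{ab}H}\le CDK^{a_T}R^3$ in Proposition~\ref{prop:aggregate3}, and Lemma~\ref{lem:cubicmotion} carries an $R_0^4$ factor. It then proves $\max_i\E\abs{u_i}^p\le C$ and $\E[K_C^cR^j]\le Cn^{\varepsilon}$ for the truncated, sub-Gaussian matrix by a leave-one-row/column-out argument (Lemma~\ref{lem:momprop}). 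After zeroing row $a$, the row $r_a$ is independent of the cavity iterates, so $r_a^*\bq^{s,(a)}$ has bounded conditional moments. The cavity coordinate solves a scalar recursion bounded by $C(1+\abs{w_a})$, and the stability estimate controls reinsertion of the row. Your outline has no substitute for this step, and without it the aggregate estimate is $O(n^{3/2})$ rather than $o(n^{3/2})$.

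Two smaller points. First, a uniformly random replacement order corresponds to the entrywise mixture $\theta\mathcal L(X_e)+(1-\theta)\mathcal L(Y_e)$ integrated over $\theta$, not to $\sqrt s\,\bAh_M+\sqrt{1-s}\,\bG$. Second, you need not differentiate through the empirical $\lambda_t,\xi_t$ inside the swap. The paper compares the auxiliary recursion with deterministic $\bar\lambda_t,\bar\xi_t$ and restores the empirical coefficients only at the end, via Lemma~\ref{lem:continuoustest}.
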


Theorem~\ref{thm:main} implies that the empirical mean-square statistics converge in probability to the state evolution limits $\langle(\bh^{t+1})^2\rangle\to\tau_t^2$ and $\langle(\bb^{t})^2\rangle\to\sigma_t^2$ for $t\le T$. 

For unbounded $\PL(2)$ tests, Lemma~\ref{lem:empmom} and convergence in probability give $L^1$ convergence at fixed truncation levels. Theorem~\ref{thm:as} governs almost-sure convergence. By Theorem~\ref{thm:as}(b), a deterministic almost-sure subsequence exists for any countable collection of in-probability limits. 

Theorem~\ref{thm:swap} compares expectations of smooth empirical observables under Gaussian matrix replacement. Lemma~\ref{lem:varbound} extends the comparison to $C^1$ nonlinearities and transfers concentration through the first two raw moments. Lemma~\ref{lem:truncremoval} bounds $\E\norm{\bA-\bAh_M}$ uniformly in $n$ and transfers trajectory limits in probability. Proposition~\ref{prop:fourth} compares the centered fourth moments of the non-Gaussian and Gaussian statistics.

Subsequent sections analyze the convergence hypotheses. Theorem~\ref{thm:sharpvar} shows that aggregate variance normalization does not imply state evolution. Section~\ref{sec:boundary} contrasts the fourth-power condition with the weaker condition sufficient for the first row update. Theorem~\ref{thm:as} specifies conditions for almost-sure convergence.

\begin{theorem}[Universal deterministic conditioning characterizes Gaussian entries]\label{thm:sharpcond}
Let $n, N\ge2$. Let $\bA \in \R^{n \times N}$ have independent, mean-zero entries with variance $1/n$. For one-dimensional subspaces in $\R^n$ and $\R^N$, let $\Pi$ denote the orthogonal projection onto their complements (Section~\ref{sec:gaussian}). The projections $\Pi(\bA)$ and $\Pi^\perp(\bA)$ are independent for all such pairs if and only if the entries of $\bA$ are i.i.d.\ $\mathcal N(0,1/n)$.
\end{theorem}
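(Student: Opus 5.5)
The plan is to identify $\Pi$ concretely and treat the two directions separately. For unit vectors $\bu\in\R^n$ and $\bv\in\R^N$ spanning the one-dimensional subspaces, I read $\Pi(\bA)=(\mathbf I-\bu\bu^*)\bA(\mathbf I-\bv\bv^*)$ and $\Pi^\perp(\bA)=\bA-\Pi(\bA)$. Under the Frobenius inner product on $\R^{n\times N}\cong\R^{nN}$, $\Pi$ acts as $(\mathbf I-\bu\bu^*)\otimes(\mathbf I-\bv\bv^*)$. This is a symmetric idempotent, hence an orthogonal projection on $\R^{nN}$, and $\Pi^\perp=\mathrm{Id}-\Pi$ is its complement. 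I will confirm this reading against the definitions of Section~\ref{sec:gaussian}.

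\emph{Gaussian $\Rightarrow$ independence.} If the entries are i.i.d.\ $\mathcal N(0,1/n)$, then $\mathrm{vec}(\bA)\sim\mathcal N(0,n^{-1}\mathbf I_{nN})$. The pair $(\Pi(\bA),\Pi^\perp(\bA))$ is a linear image of this vector, so it is jointly Gaussian. Its cross-covariance is $n^{-1}\Pi(\mathrm{Id}-\Pi)^*=n^{-1}\Pi(\mathrm{Id}-\Pi)=0$. Uncorrelated jointly Gaussian components are independent, and this holds for every pair $(\bu,\bv)$.

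\emph{Independence for all pairs $\Rightarrow$ Gaussian.} The idea is to choose pairs that isolate two entries, so that a Bernstein/Darmois--Skitovich theorem applies. Fix a row $i$ and two distinct columns $j\ne j'$, which is possible since $N\ge2$. Take $\bu=\be_{i'}$ for some $i'\ne i$, possible since $n\ge2$, and take $\bv=(\be_j+\be_{j'})/\sqrt2$.
\begin{itemize}
\item Row $i$ of $\Pi(\bA)$ is $\be_i^*\bA(\mathbf I-\bv\bv^*)$. Its $j$-th entry equals $(A_{ij}-A_{ij'})/2$.
\item Row $i$ of $\Pi^\perp(\bA)$ is $\be_i^*\bA\bv\bv^*$. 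Its $j$-th entry equals $(A_{ij}+A_{ij'})/2$.
\end{itemize}
Independence of $\Pi(\bA)$ and $\Pi^\perp(\bA)$ therefore gives independence of $A_{ij}+A_{ij'}$ and $A_{ij}-A_{ij'}$. These two variables are independent of each other, and both linear forms have nonzero coefficients on each of them. By Bernstein's theorem, or the Darmois--Skitovich theorem, $A_{ij}$ and $A_{ij'}$ are Gaussian.

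Bernstein's theorem gives the Gaussian conclusion without any moment hypothesis. It also yields equal variances, and here the variances are already fixed at $1/n$. The means are zero by hypothesis, so $A_{ij}\sim\mathcal N(0,1/n)$. Ranging over all $i$, with $i'$ chosen accordingly, and over all column pairs covers every entry. Independence is assumed, so the entries are i.i.d.\ $\mathcal N(0,1/n)$.

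\emph{Main obstacle.} I expect the main obstacle to be getting the bookkeeping right rather than any analytic difficulty. First, I must check that the paper's $\Pi$ matches the two-sided projection above; if it is instead the one-sided or "rank-one constraint" variant of Section~\ref{sec:gaussian}, the same choice of $\bv$ still produces the pair $A_{ij}\pm A_{ij'}$, up to scalars, in complementary pieces. Second, I must verify that $n,N\ge2$ is exactly what allows choosing $i'\ne i$ and $j'\ne j$. The classical characterization theorem is invoked as a black box.
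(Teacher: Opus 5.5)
Your proof is correct and follows the paper's route. The forward direction uses uncorrelated, jointly Gaussian orthogonal projections, as in Proposition~\ref{prop:bolthausen}; the converse chooses a rank-one pair that places $A_{ij}+A_{ij'}$ and $A_{ij}-A_{ij'}$ in complementary components and then applies Bernstein's (Kac--Bernstein) theorem. The only difference is a transposition: you mix two columns in a fixed row ($\bu=\be_{i'}$, $\bv=(\be_j+\be_{j'})/\sqrt2$), whereas Proposition~\ref{prop:condfails}(i) mixes two rows in a fixed column ($\bM=(\be_1+\be_2)/\sqrt2$, $\bQ=\be_{j'}$).
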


Theorem~\ref{thm:sharpcond} characterizes Gaussian entries by independence of the two orthogonal projections for every rank-one constraint pair. Under its independent-entry, centering, and variance assumptions, every non-Gaussian ensemble fails this property for some pair. Proposition~\ref{prop:condfails}(ii) shows that distinct non-Gaussian ensembles can share all mixed entry moments, and both fail the projection-independence property. Section~\ref{sec:gaussian} proves these statements.

\section{Gaussian Matrices and Orthogonal Conditioning}
\label{sec:gaussian}

\subsection{Gaussian conditioning via orthogonal projections}

Fix matrices $\bM\in\R^{n\times t}$ and $\bQ\in\R^{N\times t}$ with full column rank. Let $\PM$ and $\PQ$ denote the orthogonal projectors onto their column spaces
\[
\PM = \bM(\bM^*\bM)^{-1}\bM^*, \qquad \PQ = \bQ(\bQ^*\bQ)^{-1}\bQ^*.
\]
Define $\PMp=\mathbf I-\PM$ and $\PQp=\mathbf I-\PQ$. For $\mathbf C \in \R^{n \times N}$, define the orthogonal projections
\[
\Pi(\mathbf C):=\PMp\,\mathbf C\,\PQp ,\quad \text{ and }\quad\Pi^\perp(\mathbf C):=\mathbf C-\Pi(\mathbf C)=\PM\mathbf C+\PMp\mathbf C\PQ .
\]
By definition, $\Pi(\mathbf C)\bQ = \mathbf 0$ and $\bM^*\Pi(\mathbf C) = \mathbf 0$. The projection $\Pi^\perp(\mathbf C)$ is determined by $\mathbf C\bQ$ and $\bM^*\mathbf C$.

\begin{lemma}[Orthogonal projection onto $\cV$]\label{lem:projector}
For the Frobenius inner product $\inner{\mathbf S}{\mathbf T}=\tr(\mathbf S^*\mathbf T)$, $\Pi$ is the orthogonal projector onto $\cV=\{\mathbf H\in\R^{n\times N}:\PM\mathbf H=\mathbf 0,\ \mathbf H\PQ=\mathbf 0\}$.
\end{lemma}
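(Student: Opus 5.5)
The statement immediately above is Lemma~\ref{lem:projector}, and I would prove it in three short steps. First show that $\Pi$ is a linear idempotent map with range equal to $\cV$. Then show that $\Pi$ is self-adjoint for the Frobenius inner product. A self-adjoint idempotent is the orthogonal projector onto its range, so the lemma follows.

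\emph{Idempotence.} Linearity of $\mathbf C\mapsto \PMp\mathbf C\PQp$ is immediate. Since $\PMp$ and $\PQp$ are themselves idempotent,
\[
\Pi(\Pi(\mathbf C))=\PMp\PMp\mathbf C\PQp\PQp=\Pi(\mathbf C).
\]

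\emph{Range equals $\cV$.} Because $\PM\PMp=\mathbf 0$ and $\PQp\PQ=\mathbf 0$, every $\Pi(\mathbf C)$ satisfies $\PM\Pi(\mathbf C)=\mathbf 0$ and $\Pi(\mathbf C)\PQ=\mathbf 0$. Hence the range of $\Pi$ lies in $\cV$. Conversely, take $\mathbf H\in\cV$. Then $\PMp\mathbf H=\mathbf H-\PM\mathbf H=\mathbf H$ and $\mathbf H\PQp=\mathbf H$, so $\Pi(\mathbf H)=\mathbf H$. Thus $\cV$ lies in the range, and in fact $\Pi$ fixes $\cV$ pointwise.

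\emph{Self-adjointness.} The projectors $\PM$ and $\PQ$ are symmetric, since $\PM^*=\bM(\bM^*\bM)^{-1}\bM^*=\PM$, and the same holds for their complements. Using this symmetry and the cyclic property of the trace,
\[
\inner{\Pi(\mathbf S)}{\mathbf T}=\tr\bigl(\PQp\mathbf S^*\PMp\mathbf T\bigr)=\tr\bigl(\mathbf S^*\PMp\mathbf T\PQp\bigr)=\inner{\mathbf S}{\Pi(\mathbf T)}.
\]

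\emph{Conclusion.} For any $\mathbf C$, the residual $\mathbf C-\Pi(\mathbf C)$ is orthogonal to every $\mathbf H\in\cV$:
\[
\inner{\mathbf C-\Pi(\mathbf C)}{\mathbf H}=\inner{\mathbf C}{\mathbf H}-\inner{\mathbf C}{\Pi(\mathbf H)}=0,
\]
because $\Pi(\mathbf H)=\mathbf H$. So $\Pi$ is the orthogonal projector onto $\cV$.

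As a byproduct, the complementary map $\Pi^\perp=\mathrm{id}-\Pi$ is the orthogonal projector onto $\cV^\perp$. Expanding $\mathbf C=(\PM+\PMp)\mathbf C$ and $\PMp\mathbf C=\PMp\mathbf C(\PQ+\PQp)$ recovers the stated formula $\Pi^\perp(\mathbf C)=\PM\mathbf C+\PMp\mathbf C\PQ$.

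There is no genuine obstacle here. The only care needed is the full-column-rank assumption, which makes $(\bM^*\bM)^{-1}$ and $(\bQ^*\bQ)^{-1}$ well defined, so that $\PM$ and $\PQ$ are genuine symmetric idempotents.
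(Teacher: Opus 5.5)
Your proposal is correct and follows the same route as the paper: since $\PMp$ and $\PQp$ are symmetric idempotents, $\Pi(\mathbf C)=\PMp\mathbf C\PQp$ is a self-adjoint idempotent and hence the orthogonal projector onto its range, which is $\cV$. You additionally supply the details the paper leaves implicit, namely the explicit check that the range equals $\cV$ and the trace-cyclicity verification of self-adjointness.
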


\begin{proof}
Because $\PMp$ and $\PQp$ are symmetric and idempotent, $\Pi(\mathbf C) = \PMp\mathbf C\PQp$ is self-adjoint and idempotent. It is the orthogonal projector onto its range, which is $\cV$.
\end{proof}

\begin{proposition}[Gaussian conditioning]\label{prop:bolthausen}
Let $\bG \in \R^{n \times N}$ have i.i.d.\ $\mathcal N(0,1/n)$ entries. The projections $\Pi(\bG)$ and $\Pi^\perp(\bG)$ are independent. Conditional on the $\sigma$-algebra generated by $\bG^*\bM$ and $\bG\bQ$, the distribution is
\begin{equation}\label{eq:bolthausen}
\bG\ \big|\ \sigma(\bG^*\bM,\bG\bQ)\ \ \eqd\ \ \Pi^\perp(\bG)\ +\ \Pi(\bG')\,,
\end{equation}
where $\bG'$ is an independent copy of $\bG$.

For observations $\bG^*\bM=\bX$ and $\bG\bQ=\bY$ satisfying $\bM^*\bY=\bX^*\bQ$, the conditional distribution is $\bB+\Pi(\bG')$, where
\[
\begin{aligned}
\bB &= \bB_1 + \bB_2 - \bB_3 \,, \quad \text{where} \\
\bB_1 &= \bY(\bQ^*\bQ)^{-1}\bQ^* \,, \\
\bB_2 &= \bM(\bM^*\bM)^{-1}\bX^* \,, \\
\bB_3 &= \bM(\bM^*\bM)^{-1}\bX^*\bQ(\bQ^*\bQ)^{-1}\bQ^* .
\end{aligned}
\]
\end{proposition}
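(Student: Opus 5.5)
We prove Proposition~\ref{prop:bolthausen} in two stages. First we establish independence of $\Pi(\bG)$ and $\Pi^\perp(\bG)$. Then we use that independence together with the deterministic identity $\bG=\Pi^\perp(\bG)+\Pi(\bG)$ to read off the conditional law.

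For the independence, vectorize: $\mathrm{vec}(\bG)$ is a centered Gaussian vector with covariance $n^{-1}\mathbf I_{nN}$, which is isotropic for the Frobenius inner product. By Lemma~\ref{lem:projector}, $\Pi$ is an orthogonal projector on $\R^{n\times N}$, and $\Pi^\perp=\mathrm{Id}-\Pi$ is the complementary orthogonal projector. The pair $(\Pi(\bG),\Pi^\perp(\bG))$ is a linear image of $\bG$ and hence jointly Gaussian. Its cross-covariance is $n^{-1}\Pi(\mathrm{Id}-\Pi)^*=n^{-1}(\Pi-\Pi^2)=0$, using self-adjointness and idempotence. Jointly Gaussian, uncorrelated components are independent.

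Next we identify $\sigma(\bG^*\bM,\bG\bQ)$ with $\sigma(\Pi^\perp(\bG))$.
\begin{itemize}
\item[(i)] Since $\Pi(\bG)\bQ=\mathbf 0$ and $\bM^*\Pi(\bG)=\mathbf 0$, we get $\bG\bQ=\Pi^\perp(\bG)\bQ$ and $\bM^*\bG=\bM^*\Pi^\perp(\bG)$. Hence $\bG^*\bM$ and $\bG\bQ$ are measurable functions of $\Pi^\perp(\bG)$.
\item[(ii)] Conversely, expand
\[
\Pi^\perp(\bG)=\PM\bG+\PMp\bG\PQ=\bM(\bM^*\bM)^{-1}(\bG^*\bM)^*+\PMp(\bG\bQ)(\bQ^*\bQ)^{-1}\bQ^*.
\]
Substituting $\bX=\bG^*\bM$ and $\bY=\bG\bQ$ and multiplying out $\PMp=\mathbf I-\PM$ gives exactly $\bB_1+\bB_2-\bB_3$.
\end{itemize}
The third term $\bB_3$ arises as $\PM\bY(\bQ^*\bQ)^{-1}\bQ^*$ rewritten via the consistency relation $\bM^*\bY=\bX^*\bQ$. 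That relation holds automatically on the range of the observations, since $\bM^*\bG\bQ$ can be computed either way. Thus the two $\sigma$-algebras coincide, and $\Pi^\perp(\bG)=\bB$ as a function of the observations.

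Finally, condition on this $\sigma$-algebra. In $\bG=\Pi^\perp(\bG)+\Pi(\bG)$, the first term is measurable and equals $\bB$. The second term is independent of the conditioning, so its conditional law is its unconditional law. That law is the law of $\Pi(\bG')$ for an independent copy $\bG'$. This yields \eqref{eq:bolthausen} and the explicit form $\bB+\Pi(\bG')$.

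The only delicate point is the measure-theoretic formulation of conditioning on a specific value $(\bX,\bY)$. We handle it via a regular conditional distribution: because $\bG=F(\Pi^\perp(\bG))+\Pi(\bG)$ with the two pieces independent, the kernel $(\bX,\bY)\mapsto$ law of $\bB(\bX,\bY)+\Pi(\bG')$ is a version of the conditional law. The remaining work is the routine algebra verifying $\Pi^\perp(\bG)=\bB_1+\bB_2-\bB_3$.
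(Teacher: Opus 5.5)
Your proposal is correct and follows essentially the same route as the paper: independence from isotropy (your explicit cross-covariance $n^{-1}\Pi(\mathrm{Id}-\Pi)=0$), then identification of $\sigma(\bG^*\bM,\bG\bQ)$ with $\sigma(\Pi^\perp(\bG))$ via $\Pi(\bG)\bQ=\mathbf 0$, $\bM^*\Pi(\bG)=\mathbf 0$ and $\Pi^\perp(\bG)=\PM\bG+\PMp\bG\PQ$, and finally the substitution $\PMp\bB_1=\bB_1-\bB_3$ using $\bM^*\bY=\bX^*\bQ$. Your regular-conditional-distribution remark only makes precise the paper's brief ``substituting this constant'' step.
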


\begin{proof}
Because $\bG$ has i.i.d.\ Gaussian entries, its distribution is isotropic. By Lemma~\ref{lem:projector}, $\Pi$ and $\Pi^\perp$ are orthogonal projectors. The projections $\Pi(\bG)$ and $\Pi^\perp(\bG)$ are uncorrelated jointly Gaussian matrices, hence independent.

We have $\sigma(\bG^*\bM,\bG\bQ)=\sigma(\Pi^\perp(\bG))$. The observations define the projection via $\PM\bG = \bM(\bM^*\bM)^{-1}(\bG^*\bM)^*$ and $\bG\PQ = (\bG\bQ)(\bQ^*\bQ)^{-1}\bQ^*$, yielding $\Pi^\perp(\bG) = \PM\bG + \PMp(\bG\PQ)$. Conversely, $\Pi^\perp(\bG)$ defines the observations via $\bG^*\bM=(\Pi^\perp\bG)^*\bM$ and $\bG\bQ=(\Pi^\perp\bG)\bQ$, proving the $\sigma$-algebra equivalence. Because $\Pi(\bG)$ is independent of $\Pi^\perp(\bG)$, it is independent of this $\sigma$-algebra, proving \eqref{eq:bolthausen}.

For constraints $\bG^*\bM=\bX$ and $\bG\bQ=\bY$ satisfying $\bM^*\bY=\bX^*\bQ$, substitution gives $\Pi^\perp(\bG) = \bB_2 + (\mathbf I - \PM)\bB_1 = \bB_2 + \bB_1 - \bB_3 = \bB$. Substituting this constant into \eqref{eq:bolthausen} completes the proof.
\end{proof}

\subsection{Geometric characterization of the Gaussian ensemble}

\begin{proposition}[Universal projection independence characterizes Gaussian entries]\label{prop:condfails}
\textup{(i)} Let $n,N\ge2$ and let $\bA \in \R^{n \times N}$ have independent centered entries of variance $1/n$. If for every pair of one-dimensional orthogonal projectors parameterized by nonzero vectors $\bM\in\R^n$ and $\bQ\in\R^N$ the components $\Pi(\bA)$ and $\Pi^\perp(\bA)$ are independent, the entries of $\bA$ are i.i.d.\ Gaussian.

\textup{(ii)} There exist non-Gaussian ensembles $\bA,\bAt\in\cI$ with identical mixed entry moments at every order. Moment sequences do not determine the finite-dimensional matrix law, and neither ensemble satisfies the projection-independence property of part \textup{(i)}.
\end{proposition}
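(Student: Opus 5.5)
Part (i) reduces to a Darmois--Skitovich argument. We specialize the rank-one pair to coordinate vectors and to vectors with two nonzero coordinates, so that $\Pi(\bA)$ and $\Pi^\perp(\bA)$ contain entries that are two independent nondegenerate linear forms in the same independent entries. We then invoke the Darmois--Skitovich theorem: if $L_1=\sum_k a_k X_k$ and $L_2=\sum_k b_k X_k$ are independent, with $X_k$ independent, then every $X_k$ with $a_kb_k\neq0$ is Gaussian.

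Concretely, fix an entry $(i,j)$ and choose $i'\neq i$ and $j'\neq j$ (possible since $n,N\ge2$). Take $\bM=\be_i$ and $\bQ=\be_j+\be_{j'}$. Then $\PM\mathbf C$ is row $i$ of $\mathbf C$, and
\[
\PMp\mathbf C\PQ=\PMp\mathbf C\,\tfrac12(\be_j+\be_{j'})(\be_j+\be_{j'})^*,
\]
so on rows other than $i$ it equals $\tfrac12(C_{kj}+C_{kj'})$ in columns $j,j'$. The $(i',j)$ entry of $\Pi(\bA)$ is therefore $\tfrac12(A_{i'j}-A_{i'j'})$, and the $(i',j)$ entry of $\Pi^\perp(\bA)$ is $\tfrac12(A_{i'j}+A_{i'j'})$. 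Independence of $\Pi(\bA)$ and $\Pi^\perp(\bA)$ implies independence of these two scalar forms. Both coefficient vectors are nonzero on $A_{i'j}$ and $A_{i'j'}$, so Darmois--Skitovich (or the Bernstein/Kac theorem for $X\pm Y$) gives that $A_{i'j}$ and $A_{i'j'}$ are Gaussian. Since the roles of $i,i',j,j'$ can be permuted freely, every entry is Gaussian. Each entry is centered with variance $1/n$, so the entries are i.i.d.\ $\mathcal N(0,1/n)$.

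For the converse we cite Proposition~\ref{prop:bolthausen}. Its independence statement holds for any full-rank $\bM,\bQ$, in particular for rank one. Together with (i), this gives Theorem~\ref{thm:sharpcond}.

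For (ii), we use a classical pair of distinct laws on $\R$ that share all moments. The standard example is the lognormal density $\varphi$ and its perturbation $\varphi(x)(1+\epsilon\sin(2\pi\log x))$. We center and scale both to mean zero and variance one. Because the moments are equal, all centered moments coincide as well. Taking $U_{ij}$ i.i.d.\ from the first law gives $\bA$, and taking them from the second law gives $\bAt$.

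Independence of the entries implies that every mixed moment factorizes into products of single-entry moments, so all mixed moments agree. Lognormal moments grow super-exponentially but are finite, so $M_{2+2\alpha}<\infty$ and both ensembles belong to $\cI$. Neither law is Gaussian, so by (i) both fail projection independence for some rank-one pair. Moreover, the two matrix laws differ (their marginals differ) while all their moments agree, which shows that moments do not determine the finite-dimensional law.

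The main obstacle is step (i): we must check that the projection formulas really produce two linear forms that both load nontrivially on a common pair of entries. We must also make sure the Darmois--Skitovich theorem applies without any moment assumptions; the classical theorem needs none. An alternative route is the Bernstein theorem for $X+Y$ and $X-Y$, applied to entries with equal coefficients, but that version needs identically distributed summands. We therefore keep the general Darmois--Skitovich form, which needs no identical distribution.
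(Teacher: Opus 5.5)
Your proof is correct and follows the paper's route. You choose a rank-one pair made of one coordinate vector and one two-coordinate sum vector: you take $\bM=\be_i$, $\bQ=\be_j+\be_{j'}$ and work along a row, where the paper takes $\bM=(\be_1+\be_2)/\sqrt2$, $\bQ=\be_{j'}$ and works down a column. This isolates independent forms $X\pm Y$ in two independent entries, a Kac--Bernstein/Darmois--Skitovich characterization gives Gaussianity, and part (ii) uses a moment-indeterminate standardized lognormal pair with factorized mixed moments plus part (i).

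One correction to a side remark: Bernstein's theorem for $X+Y$ and $X-Y$ does not need identically distributed summands (that was Kac's original setting). The paper applies it in exactly that non-identically-distributed form, so either citation works.
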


\begin{proof}
(i) Independence of $\Pi(\bA)$ and $\Pi^\perp(\bA)$ for all one-dimensional subspaces implies that for every unit vector $\bu\in\R^n$ and $\bv\perp\bu$, the projections $\bu^*\bA\be_j$ and $\bv^*\bA\be_j$ are independent. Setting $\bM=\bu$ and $\bQ=\be_{j'}$ for $j'\neq j$ gives $\PQp\be_j=\be_j$ and $\bu^*\PMp=\mathbf 0$. Thus, $\bv^*\Pi(\bA)\be_j = \bv^*\bA\be_j$ and $\bu^*\Pi^\perp(\bA)\be_j = \bu^*\bA\be_j$.

For column $\ba=\bA\be_j$, setting $\bu=(\be_1+\be_2)/\sqrt2$ and $\bv=(\be_1-\be_2)/\sqrt2$ restricts this independence to entries $X:=a_1$ and $Y:=a_2$. By hypothesis, $X$ and $Y$ are independent, and orthogonal projection implies $X+Y$ and $X-Y$ are independent. By the Kac--Bernstein theorem \cite{Kac39}, $X$ and $Y$ are Gaussian. Iterating across all coordinates proves every entry of $\bA$ is Gaussian.

(ii) The lognormal law is not determined by its moments. Other laws have the same moments \cite{Heyde63}. Centering two such laws at their common mean and dividing by their common standard deviation gives distinct standardized laws $\mu^{(0)}\neq\mu^{(1)}$ with identical moment sequences. For random matrices with these independent entries, every mixed moment factorizes into its marginals
\[
\E\prod_{(i,j)\in S}A_{ij}^{k_{ij}}=\prod_{(i,j)\in S}\E A_{ij}^{k_{ij}}.
\]
Because marginal moments coincide, all joint moments match. The generating laws imply $\bA\not\eqd\bAt$. Neither law is Gaussian, since both have a nonzero third moment of a standardized lognormal law, so both violate the independence property of part (i). 
\end{proof}

Theorem~\ref{thm:sharpcond} requires projection independence for every rank-one constraint pair. Restricting the condition to coordinate subspaces does not characterize Gaussian entries, since the corresponding projections depend on disjoint sets of independent entries.

\begin{proof}[Proof of Theorem~\ref{thm:sharpcond}]
Proposition~\ref{prop:bolthausen} and Proposition~\ref{prop:condfails}(i) prove the forward and converse implications. Independence of $\Pi(\bA)$ and $\Pi^\perp(\bA)$ means either conditional component retains its unconditional law after conditioning on the other. Conditioning on $\Pi^\perp(\bA)$ gives the independent-copy representation in Proposition~\ref{prop:bolthausen}. Universal quantification over subspaces is necessary because for coordinate subspaces, orthogonal projectors map to disjoint index sets of the matrix. These projections are independent for any ensemble with independent entries.
\end{proof}

\begin{theorem}[Gaussian state evolution]\label{thm:BM}
Under Assumptions~\ref{ass:functions}--\ref{ass:signal}, the Gaussian iterates satisfy \eqref{eq:mainh}--\eqref{eq:mainb} in probability. The auxiliary iterates using the deterministic coefficients \eqref{eq:onsagerdet} satisfy the same conclusions in probability. For bounded pseudo-Lipschitz test functions, the empirical statistics converge in every finite $L^p$.
\end{theorem}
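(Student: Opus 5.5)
We prove Theorem~\ref{thm:BM}, the Gaussian state evolution, by the conditioning technique of Bayati and Montanari \cite{BM11}. We take as given the orthogonal-projection representation of Proposition~\ref{prop:bolthausen}, applied with $\bM=(\bm^0,\dots,\bm^{t-1})$ and $\bQ=(\bq^0,\dots,\bq^{t-1})$. Let $\mathfrak S_t$ be the $\sigma$-algebra generated by $\bx_0,\bw$ and the iterates $\bb^0,\dots,\bb^{t-1}$, $\bm^0,\dots,\bm^{t-1}$, $\bh^1,\dots,\bh^t$, $\bq^0,\dots,\bq^t$. Conditionally on $\mathfrak S_t$, the matrix $\bG$ is only constrained through $\bG\bQ=\bY$ and $\bG^*\bM=\bX$, which are affine in the past iterates. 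Equation~\eqref{eq:bolthausen} therefore gives
\[
\bb^t\,|\,\mathfrak S_t \eqd \bB\bq^t-\lambda_t\bm^{t-1}+\Pi(\bG')\bq^t,
\]
and similarly for $\bh^{t+1}$.

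First, we expand $\bB\bq^t$ using the decomposition $\bq^t=\bQ\boldsymbol\alpha+\PQp\bq^t$. The piece $\bB_1\bq^t$ is the linear combination $\sum_s\alpha_s\bb^s$ of past iterates, plus Onsager-type terms. The piece $\bB_2-\bB_3$ produces $\bM(\bM^*\bM)^{-1}\bH^*\PQp\bq^t$-type terms. Stein's lemma, applied inductively, shows that these reproduce $\lambda_t\bm^{t-1}$ up to $o_\PP(1)$ in normalized norm, so the Onsager correction cancels. The fresh term $\PMp\bG'\PQp\bq^t$ is, conditionally, a Gaussian vector with covariance $n^{-1}\norm{\PQp\bq^t}^2\PMp$. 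Because the rank-$t$ projector $\PM$ perturbs it only negligibly, we may treat it as $n^{-1/2}\norm{\PQp\bq^t}\,\bz$ with $\bz$ standard Gaussian.

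Second, we run the induction over $t\le T$. The hypothesis $\mathcal H_t$ asserts the conclusions~\eqref{eq:mainh}--\eqref{eq:mainb} for $\PL(2)$ tests in the iterates up to time $t$. It also asserts that the Gram matrices $\bQ^*\bQ/N$ and $\bM^*\bM/n$ converge to the state evolution covariances, that $\lambda_t\to\bar\lambda_t$ and $\xi_t\to\bar\xi_t$, and that the coefficient vectors $\boldsymbol\alpha,\boldsymbol\beta$ converge to deterministic limits. With these, $\bb^t$ equals $\sum_s\alpha_s\bb^s+\bar\sigma_\perp\bz+o_\PP(\sqrt n)$. A $\PL(2)$ law of large numbers then applies conditionally: we use Lemma~\ref{lem:slln} or Chebyshev's inequality with bounded fourth moments, together with the fact that $\PL(2)$ functions absorb $o(\sqrt n)$ perturbations via Cauchy--Schwarz. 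This gives convergence to the Gaussian vector whose covariance is exactly~\eqref{eq:SE}. The Onsager coefficients converge because $\partial_1 f_t$ is bounded and continuous and the empirical law converges in $W_2$; bounded continuous test functions suffice for this step. The Lipschitz property of $f_t,g_t$ together with Assumption~\ref{ass:signal} supplies the second-moment bounds needed for $\PL(2)$.

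Third, the main obstacle is degeneracy, where $\bQ^*\bQ/N$ is asymptotically singular so that $(\bQ^*\bQ)^{-1}$ blows up; the same issue arises when $\sigma_\perp=0$. We will handle this as \cite{BM11} does, perturbing the nonlinearities by adding $\epsilon$ times an independent Gaussian coordinate, or equivalently restricting to a maximal linearly independent subfamily, and then letting $\epsilon\to0$. This limit uses continuity of the state evolution in $\epsilon$ and the Lipschitz dependence of the iterates on $\epsilon$, via the bound $\norm{\bG}=O_\PP(1)$ from Theorem~\ref{thm:latala}.

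The auxiliary iterates with deterministic coefficients differ from the original ones by $(\lambda_t-\bar\lambda_t)\bm^{t-1}$ and $(\xi_t-\bar\xi_t)\bq^t$. Both are $o_\PP(\sqrt n)$, and the errors propagate with constants controlled by $\norm{\bG}$ and the Lipschitz constants $L_t$. Finally, for bounded $\phi$, convergence in probability of a uniformly bounded quantity yields convergence in every $L^p$ by dominated convergence.
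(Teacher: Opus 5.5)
Your proposal follows the paper's proof. The difference is that you unpack the Gaussian conditioning argument, which the paper takes directly from \cite[Theorem~2, Lemma~1(b,c)]{BM11}. The remaining steps match the paper's: the $\epsilon$-perturbation for singular covariances (Appendix~\ref{app:traj} uses $f_0+\epsilon$, $f_t+\epsilon x$, $g_t+\epsilon x$ rather than an independent Gaussian coordinate), Onsager convergence via bounded continuous tests, the transfer to deterministic coefficients via Lemmas~\ref{lem:stability} and \ref{lem:iteratenorms}, and $L^p$ convergence from boundedness. One correction: $\partial_1 f_t$ and $\partial_1 g_t$ are only continuous, so the empirical coefficients, and hence the true iterates, need not depend Lipschitz-continuously on $\epsilon$; the $\epsilon\to0$ step must instead use the compact-modulus argument of Lemma~\ref{lem:continuoustest} on events where $\norm{\bG}$ and the normalized norms are bounded, which is how Appendix~\ref{app:traj} does it.
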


\begin{proof}
Empirical convergence of the Gaussian iterates follows from \cite[Theorem~2, Lemma~1(b,c)]{BM11}, where Lemma~1(b) gives joint empirical convergence. Assumption~\ref{ass:signal} and Lemma~\ref{lem:slln} verify the empirical moment conditions. Appendix~\ref{app:traj} treats singular covariances.

Because $\partial_1 f_t$ and $\partial_1 g_t$ are bounded continuous functions, Gaussian empirical convergence extends to these test functions via approximation on compact sets, given tight empirical second moments (as in Lemma~\ref{lem:continuoustest}). The empirical coefficients converge in probability to the deterministic limits \eqref{eq:onsagerdet}.

Let $(\bh^t, \bb^t)$ denote true iterates and $(\tilde{\bh}^t, \tilde{\bb}^t)$ denote auxiliary iterates. Fixing the empirical coefficients and subtracting the recursions, Lemma~\ref{lem:stability}, Lemma~\ref{lem:iteratenorms}, and tight Gaussian operator norms imply $\frac{1}{\sqrt{N}}\|\bh^t - \tilde{\bh}^t\|_2 = o_\PP(1)$ and $\frac{1}{\sqrt{n}}\|\bb^t - \tilde{\bb}^t\|_2 = o_\PP(1)$. For pseudo-Lipschitz test functions, Cauchy--Schwarz bounds the difference in empirical statistics by these norms. Because the test functions are bounded, the empirical statistics are bounded. Convergence in probability implies convergence in $L^p$ for finite $p$ by uniform integrability.
\end{proof}

\section{Universality of the Trajectory}\label{sec:universality}

To decouple matrix randomness from the nonlinear empirical averages defining the true Onsager coefficients, this section analyzes an auxiliary recursion parameterized by deterministic sequences. Let $(\bar\lambda_t, \bar\xi_t)$ be arbitrary deterministic scalar sequences bounded uniformly in $n$ and $N$. The auxiliary recursion initialized at $\bh^0=\mathbf{0}$ and $\bm^{-1}=\mathbf{0}$ is defined as
\begin{equation}\label{eq:auxamp}
\begin{aligned}
\bq^t&=f_t(\bh^t,\bx_0),& \bb^t&=\bC\bq^t-\bar\lambda_t \bm^{t-1},\\
\bm^t&=g_t(\bb^t,\bw),&\bh^{t+1}&=\bC^*\bm^t-\bar\xi_t\bq^t.
\end{aligned}
\end{equation}
To streamline notation, we omit the tildes (e.g., $\tilde{\bh}^t$) that previously distinguished these auxiliary iterates from the original iterates. We will reintroduce them from the next section onwards.

For each coordinate $i$, define the trajectory vectors as
\[
\bv_{H,i}=(h_i^1,\ldots,h_i^{T+1},x_{0,i}), \qquad \bv_{B,i}=(b_i^0,\ldots,b_i^T,w_i).
\]
For a given test function $\phi$, the empirical observable evaluated at $\bC$ is
\begin{equation}\label{eq:observable}
H(\bC)=\frac1d\sum_{i=1}^d\phi(\bv_i(\bC)),
\end{equation}
where $d=N$ for the signal-space trajectories $\bv_{H,i}$ and $d=n$ for the measurement-space trajectories $\bv_{B,i}$. We denote $D=\max(n,N)$. The partial derivatives $\partial_{ab}$ are taken with respect to the matrix coordinate $C_{ab}$ itself. 

\subsection{Energy and perturbation estimates}

For any vector $\bu$, let
\[
\|\bu\|_{\rm av}:=(\dim\bu)^{-1/2}\|\bu\|_2,
\]
so that vectors in the signal and measurement spaces are normalized by $N^{1/2}$ and $n^{1/2}$ respectively. Define the parameters
\[
S := 1+\|\bx_0\|_{\rm av}+\|\bw\|_{\rm av},\qquad
K_C := 1+\|\bC\|+\max_{t\le T}(|\bar\lambda_t|+|\bar\xi_t|).
\]
Constants in this subsection may depend on the time horizon $T$, the common Lipschitz and origin bounds of the maps, a common bound on the deterministic Onsager coefficients, and fixed upper and lower bounds on the aspect ratios. They do not depend on $n$ or $N$.

\begin{lemma}[Energy bound]\label{lem:iteratenorms}
Let the coefficients $(\bar\lambda_t, \bar\xi_t)$ be uniformly bounded. There exist constants $C_T, a_T > 0$ such that the iterates of \eqref{eq:auxamp} satisfy
\[
\max_{t\le T}\max_{\bu \in \{\bh^{t+1}, \bb^t, \bm^t, \bq^t\}}\|\bu(\bC)\|_{\rm av} \le C_TK_C^{a_T}S.
\]
Bounded first state derivatives imply bounded empirical coefficients. This bound extends to the true recursion by replacing the coefficient term in $K_C$ with its deterministic bound.
\end{lemma}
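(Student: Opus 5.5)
We prove the energy bound by induction along the recursion \eqref{eq:auxamp}, tracking all four iterate families through the single quantity
\[
E_t:=\max\bigl\{\|\bq^s\|_{\rm av},\|\bb^s\|_{\rm av},\|\bm^s\|_{\rm av},\|\bh^{s+1}\|_{\rm av}: s\le t\bigr\}.
\]
The whole argument is deterministic in $\bC$, $\bx_0$ and $\bw$. Four elementary facts do all the work.

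\textbf{Nonlinearity bounds.} By Assumption~\ref{ass:functions} and the bound $|f_t(0,y)|\le 2L_t(1+|y|)$,
\[
|f_t(x,y)|\le L_t|x|+2L_t(1+|y|).
\]
Applying this coordinatewise and using the triangle inequality in $\ell^2$ gives
\[
\|f_t(\bh,\bx_0)\|_{\rm av}\le L_t\|\bh\|_{\rm av}+2L_t\bigl(1+\|\bx_0\|_{\rm av}\bigr)\le 2L_t\bigl(\|\bh\|_{\rm av}+S\bigr).
\]
The same bound holds for $g_t$ with $\bw$ in place of $\bx_0$.

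\textbf{Dimension changes.} The matrix products change dimension, so the normalization changes. For $\bu\in\R^N$,
\[
\|\bC\bu\|_{\rm av}=n^{-1/2}\|\bC\bu\|_2\le \|\bC\|\,(N/n)^{1/2}\|\bu\|_{\rm av}\le C_\rho K_C\|\bu\|_{\rm av}.
\]
Symmetrically, $\|\bC^*\bv\|_{\rm av}\le C_\rho K_C\|\bv\|_{\rm av}$ for $\bv\in\R^n$. The constant $C_\rho$ exists because $\rho_n$ is bounded away from zero and infinity for large $n$; finitely many small $n$ are absorbed into the constant.

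\textbf{Base case.} Since $\bh^0=\mathbf 0$, we have $\|\bq^0\|_{\rm av}\le 2L_0S$. Since $\bm^{-1}=\mathbf 0$, $\|\bb^0\|_{\rm av}\le C_\rho K_C\cdot 2L_0S$. Then $\|\bm^0\|_{\rm av}\le 2L_0(\|\bb^0\|_{\rm av}+S)$. Finally,
\[
\|\bh^1\|_{\rm av}\le C_\rho K_C\|\bm^0\|_{\rm av}+|\bar\xi_0|\,\|\bq^0\|_{\rm av},
\]
and $|\bar\xi_0|\le K_C$. Since $K_C\ge1$ and $S\ge1$, every quantity is at most $C K_C^{2}S$.

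\textbf{Inductive step.} Each of the four updates at step $t$ has one of three forms:
\begin{enumerate}
\item[(i)] a Lipschitz map, which multiplies the previous bound by a constant and adds $CS$;
\item[(ii)] a matrix product, which multiplies by $C_\rho K_C$;
\item[(iii)] an Onsager subtraction, which adds a previous iterate times a coefficient bounded by $K_C$.
\end{enumerate}
Each step therefore raises the power of $K_C$ by at most one, while $S$ stays linear, since $S\le S\cdot K_C^{a}$ for $a\ge0$. One full step gives
\[
E_t\le C\,K_C^{2}\bigl(E_{t-1}+S\bigr)\le C' K_C^{2}E_{t-1}.
\]
Iterating up to $T$ yields the stated bound with $a_T=2(T+1)$ and $C_T$ depending on $\max_{t\le T}L_t$, $C_\rho$ and $T$.

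\textbf{Extension to the true recursion.} The empirical coefficients \eqref{eq:onsager} are averages of $\partial_1f_t$ and $\partial_1g_t$. These are bounded by $L_t$ and $\rho_n^{-1}L_t$, so $|\lambda_t|+|\xi_t|\le C_\rho L_t$ deterministically, regardless of the iterates. The same induction then applies verbatim with the coefficient term in $K_C$ replaced by this bound.

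There is no real obstacle. The only step needing care is the bookkeeping of the $n$ versus $N$ normalization in the matrix products, handled by the dimension-change bound above. For the true recursion one must also note that the coefficient bound does not depend on the iterates, so there is no circularity in the induction.
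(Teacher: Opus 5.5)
Your proposal is correct and follows the paper's route: Lipschitz bounds on $f_t,g_t$, the dimension-adjusted operator-norm bound $\|\bC\bu\|_{\rm av}\le\rho_n^{-1/2}\|\bC\|\,\|\bu\|_{\rm av}$, and iteration from the zero initialization, with more explicit bookkeeping than the paper gives. One small slip: the step $CK_C^2(E_{t-1}+S)\le C'K_C^2E_{t-1}$ fails when $E_{t-1}<S$ (for instance if $f_0\equiv g_0\equiv 0$ and $f_1\equiv 1$), but iterating the affine inequality $E_t\le CK_C^2(E_{t-1}+S)$ directly, using $K_C\ge1$ and $S\ge1$, gives the same exponent $a_T=2(T+1)$.
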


\begin{proof}
By the Lipschitz continuity of $f_t$ and $g_t$, the normalized norms satisfy
\[
\|\bq^t\|_{\rm av} \le L(\|\bh^t\|_{\rm av}+\|\bx_0\|_{\rm av}+1),\qquad
\|\bm^t\|_{\rm av} \le L(\|\bb^t\|_{\rm av}+\|\bw\|_{\rm av}+1).
\]
The linear updates satisfy $\|\bC\bq^t\|_{\rm av} \le C_\rho\|\bC\|\|\bq^t\|_{\rm av}$ and $\|\bC^*\bm^{t-1}\|_{\rm av} \le C_\rho\|\bC\|\|\bm^{t-1}\|_{\rm av}$. Iterating these bounds from $\bh^0=\mathbf{0}$ and $\bm^{-1}=\mathbf{0}$ gives the result.
\end{proof}

\begin{lemma}[Deterministic stability]\label{lem:stability}
Let $(\bC,f_t,g_t,\bar\lambda_t,\bar\xi_t)$ and $(\bC',f'_t,g'_t,\bar\lambda'_t,\bar\xi'_t)$ generate auxiliary recursions from identical side data $(\bx_0,\bw)$. For $0\le t\le T$, assume that all four maps are jointly $L$-Lipschitz and that their absolute values at $(0,0)$ are at most $L$. Assume also that
\[
\max_{t\le T}\max\{ |\bar\lambda_t|,|\bar\xi_t|, |\bar\lambda'_t|,|\bar\xi'_t|\}\le B
\]
and
\[
\max_{t\le T}\max\{\|f_t-f'_t\|_\infty,\|g_t-g'_t\|_\infty\}\le\delta,\qquad
\max_{t\le T}\max\{|\bar\lambda_t-\bar\lambda'_t|,|\bar\xi_t-\bar\xi'_t|\}\le e.
\]
There exists a polynomial $P_T$, whose coefficients may depend on $T, L, B$ and fixed upper and lower aspect-ratio bounds, but not on $n$ or $N$, such that
\begin{equation}\label{eq:stabilitynew}
\max_{t\le T}\max_{\bu \in \{\bh^{t+1},\bb^t,\bm^t,\bq^t\}}\|\bu(\bC)-\bu'(\bC')\|_{\rm av}\le P_T(1+\|\bC\|+\|\bC'\|)\bigl\{S(\|\bC-\bC'\|+e)+\delta\bigr\}.
\end{equation}
\end{lemma}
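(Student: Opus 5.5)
The plan is a simultaneous induction on $t=0,\dots,T$, comparing the two recursions step by step. Write $\Delta\bu^t:=\bu(\bC)-\bu'(\bC')$ for $\bu\in\{\bh^{t},\bb^t,\bm^t,\bq^t\}$, and set $K:=1+\|\bC\|+\|\bC'\|$. Let $E_t$ denote the maximum of the $\|\cdot\|_{\rm av}$-norms of all differences up to step $t$. The target is a recursive bound of the form $E_{t}\le c_t K\, E_{t-1}+c_t\{S(\|\bC-\bC'\|+e)K^{a_t}+\delta\}$ with constants $c_t,a_t$ independent of $n,N$. Unrolling it over $t\le T$ produces the polynomial $P_T(K)$. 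The initial conditions coincide ($\bh^0=\bh'^0=\mathbf 0$, $\bm^{-1}=\bm'^{-1}=\mathbf 0$), so the induction starts from zero error.

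\textbf{Nonlinear steps.} For the difference $\bq^t-\bq'^t=f_t(\bh^t,\bx_0)-f'_t(\bh'^t,\bx_0)$, add and subtract $f_t(\bh'^t,\bx_0)$. The first part is controlled coordinatewise by the Lipschitz bound, giving $L\|\Delta\bh^t\|_{\rm av}$. The second part is bounded coordinatewise by $\|f_t-f'_t\|_\infty\le\delta$, which contributes at most $\delta$ to the $\|\cdot\|_{\rm av}$-norm, since $\|\mathbf 1\|_{\rm av}=1$. The same argument applied to $\bm^t-\bm'^t$ gives $L\|\Delta\bb^t\|_{\rm av}+\delta$.

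\textbf{Linear steps.} Decompose
\[
\bb^t-\bb'^t=\bC\,\Delta\bq^t+(\bC-\bC')\bq'^t-\bar\lambda_t\,\Delta\bm^{t-1}-(\bar\lambda_t-\bar\lambda'_t)\bm'^{t-1},
\]
and $\Delta\bh^{t+1}$ analogously with $\bC^*$, $\bar\xi_t$ and $\bq$. Because the vectors live in spaces of dimensions $N$ and $n$, the operator norm picks up an aspect-ratio factor: $\|\bC\bu\|_{\rm av}\le C_\rho\|\bC\|\|\bu\|_{\rm av}$, as in the proof of Lemma~\ref{lem:iteratenorms}. The terms $(\bC-\bC')\bq'^t$ and $(\bar\lambda_t-\bar\lambda'_t)\bm'^{t-1}$ require norm bounds on the primed iterates. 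These come from Lemma~\ref{lem:iteratenorms} applied to the primed recursion: its proof uses only the Lipschitz and origin bounds $L$ together with the coefficient bound $B$, so $\|\bu'\|_{\rm av}\le C_T K^{a_T}S$. These two terms are therefore bounded by $C_T K^{a_T}S(\|\bC-\bC'\|+e)$.

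\textbf{Combining.} Feeding the nonlinear-step bounds into the linear-step decomposition closes the recursion for $E_t$. Unrolling it yields a polynomial in $K$ of degree at most about $2T+a_T$, which is the required $P_T$. Its coefficients depend only on $T$, $L$, $B$ and $C_\rho$.

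The main obstacle is bookkeeping rather than any single estimate, with two points needing care. First, the $\delta$ term must not be multiplied by $S$, and it is not: it enters additively at each nonlinear step and then only gets multiplied by powers of $K$. Powers of $K$ multiplying $\delta$ are harmless because they are absorbed into $P_T(K)$. Second, Lemma~\ref{lem:iteratenorms} must hold for the primed maps under only the Lipschitz and origin hypotheses. This follows from the bound $|f'_t(x,y)|\le L(1+|x|+|y|)$, so the proof of that lemma applies verbatim.
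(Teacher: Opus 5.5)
Your proposal is correct and is essentially the paper's proof. It uses the same add-and-subtract for the nonlinear steps, giving $L\|\Delta\bh^t\|_{\rm av}+\delta$. It uses the same four-term decomposition of $\Delta\bb^t$ (and its transpose analogue), with aspect-ratio factors $\rho_n^{\pm1/2}$. It applies Lemma~\ref{lem:iteratenorms} to the primed recursion to bound $\|\bq'^t\|_{\rm av}$ and $\|\bm'^{t-1}\|_{\rm av}$ by $C_TK_{C'}^{a_T}S$, and then iterates over $T$ steps.

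One cosmetic point: with your indexing, a single step passes through both $\bC^*$ and $\bC$, so the one-step factor is $K^2$ rather than $K$. Your degree count $2T+a_T$ already reflects this, and it does not affect the polynomial conclusion.
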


\begin{proof}
Define the discrepancy $\Delta\bu = \bu - \bu'$. The triangle inequality and $L$-Lipschitz continuity imply $\|\Delta\bq^t\|_{\rm av} \le L\|\Delta\bh^t\|_{\rm av} + \delta$ and $\|\Delta\bm^t\|_{\rm av} \le L\|\Delta\bb^t\|_{\rm av} + \delta$. Subtracting the linear updates gives
\[
\Delta\bb^t = \bC\Delta\bq^t + (\bC-\bC')\bq'^t - \bar\lambda_t\Delta\bm^{t-1} - (\bar\lambda_t-\bar\lambda'_t)\bm'^{t-1}.
\]
Lemma~\ref{lem:iteratenorms} bounds the energies: $\|\bq'^t\|_{\rm av}, \|\bm'^{t-1}\|_{\rm av} \le C_T K_{C'}^{a_T} S$. Because the normalized norms use the respective vector dimensions,
\[
\|\bC\Delta\bq^t\|_{\rm av}
\le \rho_n^{-1/2}\|\bC\|\|\Delta\bq^t\|_{\rm av}.
\]
The triangle inequality implies
\[
\|\Delta\bb^t\|_{\rm av}
\le \rho_n^{-1/2}\|\bC\|\|\Delta\bq^t\|_{\rm av}
+|\bar\lambda_t|\|\Delta\bm^{t-1}\|_{\rm av}
+C_T K_{C'}^{a_T}S
\bigl(\rho_n^{-1/2}\|\bC-\bC'\|+e\bigr).
\]
The transpose update satisfies the corresponding bound with factor $\rho_n^{1/2}$. Because $\rho_n\to\rho\in(0,\infty)$, both factors are bounded. Absorbing them and iterating over $T$ steps proves \eqref{eq:stabilitynew}.
\end{proof}

\subsection{Coordinate moments by removal of a row or a column}

Let $\bC$ have independent, centered entries satisfying the sub-Gaussian bound
\begin{equation}\label{eq:workingSG}
\E e^{zC_{ab}}\le \exp(V^2z^2/(2n)),\qquad z\in\R,
\end{equation}
for fixed $V$. This allows vanishing variances and sparse configurations. Appendix~\ref{app:toolbox} bounds all finite moments of the operator norm $\|\bC\|$ via an $\varepsilon$-net argument. Assumption~\ref{ass:signal} gives all finite moments for the side data $(x_0, w)$.

\begin{lemma}[Cavity moment bounds]\label{lem:momprop}
For fixed $p\ge1$, the sequence of iterates satisfies
\begin{equation}\label{eq:coordmom}
\sup_n \max_{t\le T} \max_{u \in \{h^{t+1}, b^t, m^t, q^t\}} \max_i \E|u_i(\bC)|^p \le C_{V,T,p}.
\end{equation}
Let $R := 1+\max_{t\le T}\max_{u}\max_i|u_i(\bC)|$. For fixed $c,j\ge 0$ and $\varepsilon>0$:
\begin{equation}\label{eq:Rmoment}
\E[K_C^c R^j] \le C_{V,T,c,j,\varepsilon} n^\varepsilon.
\end{equation}
\end{lemma}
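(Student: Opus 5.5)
We prove \eqref{eq:coordmom} by induction on the iteration index, using a leave-one-out (cavity) decomposition, and then derive \eqref{eq:Rmoment} from it with H\"older's inequality and a union bound. Two facts are taken as given. First, Appendix~\ref{app:toolbox} bounds all moments of $\|\bC\|$ uniformly in $n$, and hence all moments of $K_C$. Second, Assumption~\ref{ass:signal} gives all moments of $x_{0,i}$ and $w_i$.

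\textbf{Step 1: reduce to the linear coordinates.} Since $f_t,g_t$ are globally Lipschitz with bounded values at the origin,
\[
|q^t_i|\le L(1+|h^t_i|+|x_{0,i}|),\qquad |m^t_i|\le L(1+|b^t_i|+|w_i|).
\]
So it suffices to bound $\E|b^t_i|^p$ and $\E|h^{t+1}_k|^p$. Fix a row $i$ and write
\[
b^t_i=\sum_j C_{ij}q^t_j-\bar\lambda_t m^{t-1}_i .
\]

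\textbf{Step 2: cavity decomposition.} Let $\bC^{(i)}$ be $\bC$ with row $i$ set to zero, and run the auxiliary recursion \eqref{eq:auxamp} with $\bC^{(i)}$ to get iterates $\bq^{t,(i)}$ and so on. These cavity iterates are independent of row $i$, because the side data are independent of $\bC$. Then split
\[
\sum_j C_{ij}q^t_j=\sum_j C_{ij}q^{t,(i)}_j+\sum_j C_{ij}\bigl(q^t_j-q^{t,(i)}_j\bigr).
\]

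For the first term, condition on everything except row $i$. By the sub-Gaussian bound \eqref{eq:workingSG}, it is conditionally sub-Gaussian with variance proxy $V^2\|\bq^{t,(i)}\|_2^2/n$. This gives
\[
\E\Bigl|\sum_j C_{ij}q^{t,(i)}_j\Bigr|^p\le C_p V^p\,\E\|\bq^{t,(i)}\|_{\rm av}^p\,\rho_n^{-p/2}.
\]
By Lemma~\ref{lem:iteratenorms} and the moment bounds on $K_C$ and $S$, the right-hand side is bounded uniformly in $n$.

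For the second term, Cauchy--Schwarz bounds it by $\|\bC_{i\cdot}\|_2\,\|\bq^t-\bq^{t,(i)}\|_2$. The row norm $\|\bC_{i\cdot}\|_2$ has bounded moments. The difference of iterates is driven only by the rank-one perturbation $\bC-\bC^{(i)}=\be_i\bC_{i\cdot}$. Lemma~\ref{lem:stability} applied naively gives $\|\bq^t-\bq^{t,(i)}\|_2\lesssim \sqrt N\,\|\be_i\bC_{i\cdot}\|$, which is too crude.

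\textbf{Step 3: main obstacle, a non-normalized perturbation bound.} The difficulty is to show that $\|\bq^t-\bq^{t,(i)}\|_2$, in the unnormalized Euclidean norm, has moments bounded by $C\,(1+|b^{t-1}_i|+\dots)$ in terms of earlier coordinates of index $i$. I would redo the argument of Lemma~\ref{lem:stability} without the $\sqrt{N}$ normalization. At each step the discrepancy is seeded by:
\begin{enumerate}
\item[(a)] the vector $\bC_{i\cdot}^*m^{s}_i$, whose norm is at most $\|\bC\|\,|m^s_i|$;
\item[(b)] the single coordinate discrepancy $b^s_i-b^{s,(i)}_i$, which involves $\langle\bC_{i\cdot},\bq^s\rangle$;
\end{enumerate}
and it is then propagated by multiplication with $\bC$ and $\bC^*$ at cost $K_C$. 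Induction on $s<t$ then gives
\[
\|\bq^t-\bq^{t,(i)}\|_2\le C K_C^{a}\bigl(1+\textstyle\max_{s<t}(|b^s_i|+|m^s_i|)+|w_i|\bigr).
\]
Combining this with the inductive moment bounds on the coordinates of index $i$ at earlier times, and using H\"older to separate $K_C$, closes the induction. The column cavity for $h^{t+1}_k$ is symmetric: remove column $k$ and use $\bC^*$. The fact that step $t$ feeds back on $b^t_i$ itself through $q^t$ would be handled by ordering the induction as $h^t\to q^t\to b^t$, so that only strictly earlier coordinates appear on the right.

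\textbf{Step 4: the maximum over coordinates.} For \eqref{eq:Rmoment}, write
\[
\E\bigl[K_C^cR^j\bigr]\le \bigl(\E K_C^{2c}\bigr)^{1/2}\bigl(\E R^{2j}\bigr)^{1/2}.
\]
Then bound $R^{2j}$ by the $p$-th root of a sum over coordinates:
\[
\E R^{2j}\le \Bigl(\E\sum_{t,u,i}(1+|u_i|)^{2jp/2j}\Bigr)^{2j/p}\le \bigl(C\,D\,C_{V,T,p}\bigr)^{2j/p}.
\]
Choosing $p$ large enough that $2j/p<\varepsilon$ and $D=O(n)$ yields the factor $n^{\varepsilon}$.
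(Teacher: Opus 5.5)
Your proof is correct and is essentially the paper's cavity argument. You zero out row $i$ (or column $k$) and control the cavity field $\langle r_i,\bq^{t,(i)}\rangle$ by conditional sub-Gaussianity and Lemma~\ref{lem:iteratenorms}. You bound the leave-one-out discrepancy in the unnormalized norm $\norm{\cdot}_2$; you rightly note that Lemma~\ref{lem:stability} loses a factor $\sqrt N$ there. You then finish with a union bound over the $O(D)$ coordinates and H\"older.

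The only difference is how the rank-one perturbation identity is arranged. The paper writes
$\Delta\bb^t=\bC\Delta\bq^t+\be_a r_a^*\bq^{t,(a)}-\bar\lambda_t\Delta\bm^{t-1}$ and $\Delta\bh^{t+1}=\bC^*\Delta\bm^t+r_a m_a^{t,(a)}-\bar\xi_t\Delta\bq^t$. It propagates with $\bC$ and puts only cavity quantities in the forcing. There, $r_a^*\bq^{s,(a)}$ is conditionally sub-Gaussian, and $|m_a^{s,(a)}|\le C_T(1+|w_a|)$ by the zero-row scalar recursion. This bounds the moments of $\max_{\bu}\norm{\bu-\bu^{(a)}}_2$ in one shot. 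The coordinate is then read off from $|b_a^t|\le|b_a^{t,(a)}|+\norm{\Delta\bb^t}_2$, with no induction in time.

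You propagate with $\bC^{(i)}$ instead, so the forcing contains the full coordinates $b_i^s,m_i^s$. That forces your time-ordered induction, which must hold for all exponents $p$ simultaneously, since H\"older against powers of $K_C$ and $\norm{r_i}_2$ raises the exponent. This route is equally valid, just slightly longer. Also, there is no actual feedback of $b_i^t$ on itself: $\bq^t$ depends on $\bb^s$ only for $s<t$.
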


\begin{proof}
By Jensen's and H\"older's inequalities, products of arbitrary powers of $S$ and $K_C$ have bounded expectations. Lemma~\ref{lem:iteratenorms} bounds all finite moments of the normalized global trajectory norms.

Fix a row index $a \in [n]$. Define $\bC^{(a)}$ by annihilating the $a$-th row of $\bC$, giving the decomposition $\bC - \bC^{(a)} = e_a r_a^*$. Let $u^{(a)}$ denote the auxiliary iterates generated by $\bC^{(a)}$. Defining the discrepancy $\Delta u = u - u^{(a)}$, subtracting the recursions gives
\begin{align*}
\Delta b^t &= \bC\Delta q^t + e_a r_a^* q^{t,(a)} - \bar\lambda_t\Delta m^{t-1},\\
\Delta h^{t+1} &= \bC^*\Delta m^t + r_a m_a^{t,(a)} - \bar\xi_t\Delta q^t.
\end{align*}
Iterating the Lipschitz bounds on $\|\cdot\|_2$ gives
\begin{equation}\label{eq:rowforcingbound}
\max_{t\le T}\max_{\bu}\|\bu^t-\bu^{t,(a)}\|_2
\le
P_T(K_C+K_{C^{(a)}})
\sum_{s\le T}\left(|r_a^*q^{s,(a)}|+\|r_a\|_2|m_a^{s,(a)}|\right),
\end{equation}
where $P_T$ depends only on $T$ and $L$. The removed row appears only in
\begin{equation}\label{eq:rowforcing}
\sum_{s\le T}\left(|r_a^*q^{s,(a)}|+\|r_a\|_2|m_a^{s,(a)}|\right).
\end{equation}
By hypothesis, $r_a$ is independent of the $\sigma$-algebra generated by $(\bC^{(a)}, x_0, w)$. Conditioning on this sub-$\sigma$-algebra, \eqref{eq:workingSG} implies
\[
\E\left[|r_a^*q^{s,(a)}|^p \mid \bC^{(a)},x_0,w\right] \le C_p V^p n^{-p/2}\|q^{s,(a)}\|_2^p.
\]
Because $N/n$ is bounded, Lemma~\ref{lem:iteratenorms} bounds the expectation of the right side. Annihilating the $a$-th row, we get $b_a^{s,(a)} = -\bar\lambda_s m_a^{s-1,(a)}$, and the scalar recursion gives $|b_a^{s,(a)}| + |m_a^{s,(a)}| \le C_T(1+|w_a|)$. Because $\|r_a\|_2$ has finite moments, all terms in \eqref{eq:rowforcing} have finite moments. $\bC$ and $\bC^{(a)}$ satisfy sub-Gaussian operator-norm moment bounds, and $\|r_a\|_2$ has moments of all orders. H\"older's inequality applied to \eqref{eq:rowforcingbound} bounds the discrepancy moments. The triangle inequality bounds the coordinate estimates for $b_a^t$ and $m_a^t$.

For column $b \in [N]$, the decomposition $\bC-\bC^{[b]} = c_b e_b^*$ gives the discrepancy terms $c_b q_b^{s,[b]}$ and $e_b c_b^*m^{s,[b]}$. From the zero-column scalar recursion, $h_b^{s+1,[b]}=-\bar\xi_s q_b^{s,[b]}$ and $q_b^{s,[b]}=f_s(h_b^{s,[b]},x_{0b})$, giving the bound $C_T(1+|x_{0b}|)$. Conditional sub-Gaussian moments of $c_b^*m^{s,[b]}$, combined with the global energy bound, the operator-norm moments of $\bC^{[b]}$, and H\"older's inequality, bound $h_b^{t+1}$ and $q_b^t$. The row and column arguments prove \eqref{eq:coordmom}.

A union over the $\mathcal{O}(D)$ trajectory elements bounds $\E[R^q] \le C_{V,T,q}D$ for $q \ge 1$. For $j > 0$ and $\varepsilon > 0$, set $q = \max(1,2j, j/\varepsilon)$. H\"older's and Lyapunov's inequalities give
\[
\E[K_C^c R^j] \le (\E K_C^{2c})^{1/2} (\E R^{2j})^{1/2} \le C(\E R^q)^{j/q} \le Cn^\varepsilon.
\]
The moments of $K_C$ satisfy the $j=0$ case.
\end{proof}

\subsection{Entry derivatives and their sources}

Assume condition~\eqref{eq:smoothcalc} holds. Fix a matrix $\bC$ and an index pair $e=(a,b)$. Let $\partial$ denote $\partial_{ab}$. Define $K := K_C$ and the trajectory bound $R := \max(1, \max_{u,i}|u_i(\bC)|)$.

\begin{lemma}[Differentiated recursion]\label{lem:derivatives}
For $r \in \{1,2,3\}$, the differentiated linear updates are
\begin{equation}\label{eq:Drecursion}
\begin{aligned}
\partial^r b^t &= \bC\partial^r q^t + r e_a(\partial^{r-1} q^t)_b - \bar\lambda_t\partial^r m^{t-1},\\
\partial^r h^{t+1} &= \bC^*\partial^r m^t + r e_b(\partial^{r-1} m^t)_a - \bar\xi_t\partial^r q^t.
\end{aligned}
\end{equation}
For $q = f(h,x_0)$, the differentiated nonlinear updates are
\begin{equation}\label{eq:Dstructure}
\begin{aligned}
\partial q &= \partial_1 f\odot\partial h,\\
\partial^2 q &= \partial_1 f\odot\partial^2 h + \partial_1^2 f\odot(\partial h)^{\odot 2},\\
\partial^3 q &= \partial_1 f\odot\partial^3 h + 3\partial_1^2 f\odot\partial h\odot\partial^2 h + \partial_1^3 f\odot(\partial h)^{\odot 3}.
\end{aligned}
\end{equation}
The analogous expansions hold for $m=g(b,w)$. For every iterate $u$,
\begin{equation}\label{eq:pointvector}
\|\partial^r u\|_2 \le C_T K^{a_T} R^r,\qquad r \in \{1,2,3\}.
\end{equation}
\end{lemma}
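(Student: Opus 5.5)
The plan has two parts. First derive \eqref{eq:Drecursion} and \eqref{eq:Dstructure} by differentiating the recursion \eqref{eq:auxamp} with respect to $C_{ab}$. Then prove \eqref{eq:pointvector} by induction over the $4(T+1)$ iterates in the order $q^0,b^0,m^0,h^1,q^1,\dots$.

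\textbf{The formulas.} The side data $(\bx_0,\bw)$ and the coefficients $\bar\lambda_t,\bar\xi_t$ do not depend on $\bC$. Also $\partial\bC=e_ae_b^*$ and $\partial^r\bC=0$ for $r\ge2$. Leibniz' rule applied to $\bC\bq^t$ then gives
\[
\partial^r(\bC\bq^t)=\bC\partial^r\bq^t+r\,e_ae_b^*\partial^{r-1}\bq^t,
\]
and similarly for $\bC^*\bm^t$ with $e_be_a^*$. This is \eqref{eq:Drecursion}.

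The nonlinear updates act coordinatewise, and under \eqref{eq:smoothcalc} the map $x\mapsto f_t(x,y)$ is $C^3$. Faà di Bruno's formula applied coordinatewise, with $x_0$ held fixed, gives \eqref{eq:Dstructure}; the same computation applies to $g_t$. Smoothness of the iterates in $C_{ab}$ follows inductively, because each iterate is a composition of $C^3$ maps and linear maps.

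\textbf{The bounds.} I induct on $r$, and within each $r$ on time.

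For $r=1$, the forcing term $e_a(q^t)_b$ has norm at most $R$. The update gives
\[
\|\partial b^t\|_2\le\|\bC\|\|\partial q^t\|_2+R+|\bar\lambda_t|\|\partial m^{t-1}\|_2,
\]
and \eqref{eq:Dstructure} gives $\|\partial q^t\|_2\le L\|\partial h^t\|_2$. The initial conditions are $\partial h^0=0$ and $\partial m^{-1}=0$. Iterating over $O(T)$ steps yields $C_TK^{a_T}R$.

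For $r=2$, the forcing term $2e_a(\partial q^t)_b$ is controlled by $\|\partial q^t\|_2\le C_TK^{a_T}R$. The nonlinear correction $\partial_1^2f\odot(\partial h)^{\odot2}$ is handled with
\[
\|\bu\odot\bu\|_2\le\|\bu\|_\infty\|\bu\|_2\le\|\bu\|_2^2,
\]
which bounds it by $J_T\,C_TK^{2a_T}R^2$. Iterating the linear recursion as before gives $C_TK^{a'_T}R^2$.

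For $r=3$ the argument is identical. The terms $\partial h\odot\partial^2h$ and $(\partial h)^{\odot3}$ are bounded using $\|\bu\odot\bv\|_2\le\|\bu\|_2\|\bv\|_2$, which gives order $R^3$. After each step $a_T$ is enlarged and $C_T$ adjusted.

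\textbf{Main obstacle.} The only delicate point is ensuring the powers of $R$ come out as exactly $R^r$, with all growth in the number of steps absorbed into $K^{a_T}$. This works because the linear recursion is homogeneous in $\partial^r u$ with factors at most $K$, while $R$ enters only through the forcing terms and the nonlinear products. Each forcing term of order $r$ has total $R$-degree $r$, since $R\ge1$ lets lower powers be bounded by $R^r$.
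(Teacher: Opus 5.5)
Your proposal is correct and follows essentially the same route as the paper: Leibniz's rule with $\partial\bC=e_ae_b^*$, the coordinatewise chain rule, and iteration of the linearized recursion from vanishing initial derivatives, with the Hadamard-product sources controlled by $\|x\odot y\|_2\le\|x\|_\infty\|y\|_2$ (your bound $\|\bu\odot\bv\|_2\le\|\bu\|_2\|\bv\|_2$ follows immediately from it). Your tracking of the $R$-degree of each forcing term, with $R\ge1$ absorbing lower powers, spells out what the paper leaves implicit.
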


\begin{proof}
Leibniz's rule and the identity $\partial\bC=e_a e_b^*$ imply \eqref{eq:Drecursion}, and the chain rule implies \eqref{eq:Dstructure}. Initial derivatives vanish. By induction, the derivatives satisfy a linearized recursion governed by the operators $\bC, \bC^*$ and bounded diagonal coefficient matrices, whose lengths and quantities are bounded by constants depending on $T$. Because $\|x\odot y\|_2 \le \|x\|_\infty\|y\|_2$, iterating these bounds gives \eqref{eq:pointvector}.
\end{proof}

\begin{lemma}[Finite source expansion]\label{lem:finitesources}
At derivative order $r\le3$, every iterate $u$ admits the representation
\begin{equation}\label{eq:sourcepaths}
\partial^r u = \sum_{\nu}L_{u\nu}s_{\nu,e}^{(r)},
\end{equation}
where $s_{\nu,e}^{(r)}$ denotes the inhomogeneous source vectors generated by the nonlinear chain rule \eqref{eq:Dstructure} and the Leibniz matrix insertions \eqref{eq:Drecursion}. The number of terms depends only on $T$. The linear operators $L_{u\nu}$ are independent of the differentiated index $e=(a,b)$, with operator norms bounded by $C_TK^{a_T}$.

At $r=1$, every derivative is a finite sum of terms of the form
\begin{equation}\label{eq:firstsources}
Pe_a\alpha_b \quad\text{and}\quad Qe_b\beta_a.
\end{equation}
The vectors $\alpha,\beta$ and matrices $P, Q$ are independent of $(a,b)$; the vector entries have absolute value at most $R$, and the operator norms of $P, Q$ are bounded by $C_TK^{a_T}$.

At $r=2$, every coordinate derivative is a finite sum of the forms
\begin{equation}\label{eq:secondforms}
U_{ia}\alpha_b\tilde\alpha_b,\quad V_{ib}\beta_a\tilde\beta_a,\quad\alpha_b\beta_a\sum_jL_{ij}P_{ja}Q_{jb},\quad L_{ia}P_{ba}\alpha_b,\quad \text{and}\quad L_{ia}Q_{bb}\beta_a,
\end{equation}
and the forms obtained by interchanging row and column roles. Here $U=L(P\odot\widetilde P)$ and $V=L(Q\odot\widetilde Q)$. All matrices are independent of $(a,b)$, the operator norms of $L,P,Q,\widetilde P,\widetilde Q$ are bounded by $C_TK^{a_T}$, and the entries of $\alpha,\tilde\alpha,\beta,\tilde\beta$ have absolute value at most $R$. The matrices and vectors may differ between summands. Bounded diagonal chain-rule coefficients and numerical factors are absorbed into the propagation matrices.

At $r=3$, the inhomogeneous sources are products of one first- and one second-order derivative, products of three first-order derivatives, and the insertion terms $e_a(\partial^2q)_b$ and $e_b(\partial^2m)_a$, along with their transposes.
\end{lemma}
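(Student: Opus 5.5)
We prove the representation by induction on the iteration index, treating the differentiated recursion \eqref{eq:Drecursion}--\eqref{eq:Dstructure} as a \emph{linear} system in the unknowns $\partial^r u$. The inhomogeneity at order $r$ consists of the insertion terms $r\,e_a(\partial^{r-1}q^t)_b$ and $r\,e_b(\partial^{r-1}m^t)_a$, together with the chain-rule products of lower-order derivatives.

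\textbf{The propagation operators.} Fix $r$ and write the homogeneous part of the recursion. The maps are
\[
\partial^r q^t = D_{f,t}\,\partial^r h^t,\qquad \partial^r b^t = \bC\,\partial^r q^t-\bar\lambda_t\,\partial^r m^{t-1},
\]
and similarly for $m^t$ and $h^{t+1}$. Here $D_{f,t}=\mathrm{diag}(\partial_1 f_t(h^t,x_0))$ and $D_{g,t}$ is defined analogously. These matrices depend only on the undifferentiated trajectory at $\bC$, so they do not depend on $e=(a,b)$. Their norms are at most $L_t$ by Assumption~\ref{ass:functions}.

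Unrolling the recursion over at most $2T+2$ half-steps writes each $\partial^r u$ as a sum of terms $L_{u\nu}s_\nu$. Each $L_{u\nu}$ is a product of at most $O(T)$ factors drawn from $\bC,\bC^*,D_{f,t},D_{g,t}$ and the scalars $\bar\lambda_t,\bar\xi_t$. Each source $s_\nu$ is injected at a definite half-step. Hence $\|L_{u\nu}\|\le C_TK^{a_T}$, and the number of terms is bounded by a function of $T$ only. This establishes \eqref{eq:sourcepaths} once the sources are identified order by order.

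\textbf{Order $r=1$.} The only sources are $e_a q^t_b$ and $e_b m^t_a$, since the chain-rule sources vanish at first order. Propagating them gives exactly the forms \eqref{eq:firstsources}, with
\[
\alpha=q^t\ \text{(or a coordinate of an iterate)},\qquad \beta=m^t .
\]
Both are bounded entrywise by $R$. The matrices $P,Q$ are the corresponding $L_{u\nu}$.

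\textbf{Order $r=2$.} There are two kinds of sources.

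\emph{Chain-rule sources.} These are $\partial_1^2 f\odot(\partial h)^{\odot2}$ and $\partial_1^2 g\odot(\partial b)^{\odot 2}$. Substituting the $r=1$ forms and expanding the Hadamard square gives three types of product.
\begin{itemize}
\item[] A square of two $Pe_a\alpha_b$ terms produces $(P_{\cdot a}\odot\widetilde P_{\cdot a})\alpha_b\tilde\alpha_b$. Propagating it by $L$ gives $U_{ia}\alpha_b\tilde\alpha_b$ with $U=L(P\odot\widetilde P)$. The bounded diagonal factor $\partial_1^2 f$ is absorbed into $L$ using \eqref{eq:smoothcalc}.
\item[] The two-$Q$ square gives $V_{ib}\beta_a\tilde\beta_a$ in the same way.
\item[] The mixed product gives $\alpha_b\beta_a\sum_j L_{ij}P_{ja}Q_{jb}$.
\end{itemize}

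\emph{Insertion sources.} These are $2e_a(\partial q^t)_b$ and $2e_b(\partial m^t)_a$. By the $r=1$ forms, $(\partial q^t)_b$ is a sum of terms $P_{ba}\alpha_b$ or $Q_{bb}\beta_a$. After propagation by $L$ these yield $L_{ia}P_{ba}\alpha_b$ and $L_{ia}Q_{bb}\beta_a$. The column insertion gives the row/column-interchanged versions.

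\textbf{Order $r=3$.} Here we only need to list the sources, which follow directly from \eqref{eq:Dstructure} and \eqref{eq:Drecursion}:
\[
3\,\partial_1^2 f\odot\partial h\odot\partial^2 h,\qquad \partial_1^3 f\odot(\partial h)^{\odot 3},\qquad e_a(\partial^2 q)_b,\qquad e_b(\partial^2 m)_a .
\]
The first is a first-order times second-order product, the second a triple product of first-order derivatives, and the last two are the insertion terms. Each is propagated by the same index-independent $L_{u\nu}$.

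\textbf{Main obstacle.} The delicate point is $e$-independence. One must check that every operator arising from the Hadamard products of propagated columns can be written as a fixed matrix evaluated at entries $(\cdot,a)$ or $(\cdot,b)$; for example, $P\odot\widetilde P$ is formed before the column $a$ is selected. One must also check that no hidden dependence on $e$ enters through the diagonal coefficient matrices. The latter holds because these matrices are evaluated at the fixed, undifferentiated trajectory.

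The norm bound on $U=L(P\odot\widetilde P)$ is not automatic. I would use $\|P\odot\widetilde P\|\le\|P\|\,\|\widetilde P\|$, which is Schur's theorem applied after a polarization/positivity argument. Alternatively, I would simply record the matrix product $L(P\odot\widetilde P)$ and bound it where it is used later.
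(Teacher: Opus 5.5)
Your proposal is correct and follows essentially the same route as the paper: the paper also splits each $r$-times differentiated update into an $e$-independent homogeneous part and an inhomogeneous source, unrolls the bounded-depth computational graph to get $\partial^r u=\sum_\nu L_{u\nu}s^{(r)}_{\nu,e}$, and then identifies the sources order by order exactly as you do. These are the insertions $e_aq_b$ and $e_bm_a$ at first order, the Hadamard squares, mixed products and insertions of first-order forms at second order, and the listed product and insertion sources at third order. The one difference is the bound on $P\odot\widetilde P$, which this lemma does not itself claim; the paper proves it in Lemma~\ref{lem:aggregate12} by viewing $P\odot\widetilde P$ as a submatrix of $P\otimes\widetilde P$, which is an equally valid alternative to your block-positivity/Schur argument.
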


\begin{proof}
Order the variables chronologically in the computational graph. For each node $u$, differentiating its update $r$ times separates the terms into a homogeneous part, linear in the $r$-th derivatives of predecessor nodes, and an inhomogeneous source
\begin{equation}\label{eq:sourceinduction}
\partial^r u=\sum_{v<u}A_{uv}\partial^r v+s_{u,e}^{(r)}.
\end{equation}
The matrices $A_{uv}$ consist of copies of $\bC,\bC^*$, bounded diagonal matrices formed from first state derivatives, and the deterministic Onsager coefficients. They depend on $\bC$ but not on $e$. Because the dependency graph has depth bounded by a constant depending on $T$, substituting \eqref{eq:sourceinduction} gives \eqref{eq:sourcepaths}, with finitely many paths and
\[
\|L_{u\nu}\|\le C_TK^{a_T}.
\]

At first order, the nonlinear chain rule is homogeneous
\[
\partial q=\partial_1f\odot\partial h,
\qquad
\partial m=\partial_1g\odot\partial b.
\]
The inhomogeneous sources arise from differentiating the matrix factors in the linear updates
\[
e_a q_b,\qquad e_b m_a.
\]
Applying the $e$-independent linearized operators gives the two source types in \eqref{eq:firstsources}.

At second order, two new source types occur. Differentiating a nonlinear update gives
\[
\partial_1^2f\odot(\partial h)^{\odot2}
\qquad\text{or}\qquad
\partial_1^2g\odot(\partial b)^{\odot2}.
\]
Substituting the first-order sources into these products generates three forms. The product of two row sources is
\[
(Pe_a)_j(\widetilde Pe_a)_j\,\alpha_b\widetilde\alpha_b
=
P_{ja}\widetilde P_{ja}\,\alpha_b\widetilde\alpha_b,
\]
which, after applying a linear operator $L$, becomes
\[
\bigl[L(P\odot\widetilde P)\bigr]_{ia}\alpha_b\widetilde\alpha_b.
\]
The product of two column sources gives the corresponding column-column form. The product of one row source and one column source gives
\[
\alpha_b\beta_a
\sum_jL_{ij}P_{ja}Q_{jb}.
\]
The second source type arises from differentiating the matrix insertion in \eqref{eq:Drecursion}:
\[
2e_a(\partial q)_b
\qquad\text{or}\qquad
2e_b(\partial m)_a.
\]
Substituting the first-order representations of $(\partial q)_b$ or $(\partial m)_a$ gives the insertion forms in \eqref{eq:secondforms}, along with their transpositions. This proves the second-order expansion.

At third order, the nonlinear chain rule in \eqref{eq:Dstructure} contributes only
\[
3\partial_1^2f\odot\partial h\odot\partial^2h,
\qquad
\partial_1^3f\odot(\partial h)^{\odot3},
\]
and the analogous terms for $g$. The linear updates contribute only
\[
3e_a(\partial^2q)_b
\qquad\text{and}\qquad
3e_b(\partial^2m)_a.
\]
Every inhomogeneous source is a finite sum of the forms
\[
D_2\bigl(\partial u_1\odot\partial^2u_2\bigr),
\qquad
D_3\bigl(\partial u_1\odot\partial u_2\odot\partial u_3\bigr),
\qquad
e_a(\partial^2q)_b,
\qquad
e_b(\partial^2m)_a,
\]
alongside their transpositions, where $D_2, D_3$ are diagonal multiplication operators bounded by the regularity constants. Because these coefficients are bounded under \eqref{eq:smoothcalc} and the number of paths is finite for fixed $T$, the source expansion and bounds hold.
\end{proof}

\begin{lemma}[First- and second-order aggregate estimates]\label{lem:aggregate12}
For the fixed matrix $\bC$, every coordinate of every iterate satisfies
\begin{align}
\sum_{a,b}|\partial_{ab}u_i|^2 &\le C_T DK^{a_T} R^2, & \sum_{a,b}|\partial_{ab}u_i|^3 &\le C_T DK^{a_T} R^3,\label{eq:aggfirst}\\
\sum_{a,b}|\partial_{ab}^2u_i|^2 &\le C_T DK^{a_T} R^4. &&\label{eq:aggsecond}
\end{align}
\end{lemma}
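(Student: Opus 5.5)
The plan is to read off the bounds from the source expansions of Lemma~\ref{lem:finitesources}. Each term is handled by $\ell^2$ summation over the free index, using operator-norm bounds on the $e$-independent matrices and the sup bound $R$ on the source vectors. The number of summands depends only on $T$, so by $(\sum_\nu x_\nu)^p\le C_T\sum_\nu x_\nu^p$ it suffices to bound each summand form separately.

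\emph{First order.} By \eqref{eq:firstsources}, $\partial_{ab}u_i$ is a finite sum of terms $P_{ia}\alpha_b$ and $Q_{ib}\beta_a$. For the row type,
\[
\sum_{a,b}|P_{ia}\alpha_b|^2=\Bigl(\sum_a P_{ia}^2\Bigr)\|\alpha\|_2^2\le \|P\|^2\,D R^2,
\]
since a row of $P$ has Euclidean norm at most $\|P\|$ and $\|\alpha\|_2^2\le DR^2$. The column type is symmetric, and this gives the first bound in \eqref{eq:aggfirst}. For the cubic sum, I will use $|x|^3\le \max|x|\cdot x^2$. Here $\max_{a,b}|P_{ia}\alpha_b|\le \|P\|R$, so the cubic sum is at most $C_T K^{a_T}R\cdot DK^{a_T}R^2$. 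Enlarging $a_T$ absorbs the extra power of $K$.

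\emph{Second order.} I will bound each form in \eqref{eq:secondforms} separately.
\begin{itemize}
\item $U_{ia}\alpha_b\tilde\alpha_b$: the sum is $\sum_a U_{ia}^2\sum_b\alpha_b^2\tilde\alpha_b^2\le \|U\|^2 DR^4$. To bound $\|U\|=\|L(P\odot\widetilde P)\|$, I would use the standard Hadamard estimate $\|P\odot\widetilde P\|\le \|P\|\|\widetilde P\|$, which follows from the Schur product bound for the operator norm. Alternatively, only the $i$-th row is needed, and $\sum_a\bigl(\sum_j L_{ij}P_{ja}\widetilde P_{ja}\bigr)^2$ can be bounded directly. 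The $V$ form is symmetric.
\item $\alpha_b\beta_a\sum_j L_{ij}P_{ja}Q_{jb}$: the sum is at most $R^4\sum_{a,b}\bigl(\sum_j L_{ij}P_{ja}Q_{jb}\bigr)^2$. The inner quantity is the $(a,b)$ entry of $P^*\,\mathrm{diag}(L_{i\cdot})\,Q$. Its Frobenius norm is at most $\|P\|\,\|Q\|\,\|\mathrm{diag}(L_{i\cdot})\|_F$, and $\|\mathrm{diag}(L_{i\cdot})\|_F$ is the row norm of $L$, bounded by $\|L\|$. This form therefore contributes only $C_TK^{a_T}R^4$, without the factor $D$.
\item Insertion forms $L_{ia}P_{ba}\alpha_b$: the sum is at most $R^2\sum_a L_{ia}^2\sum_b P_{ba}^2$. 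The column sums satisfy $\sum_bP_{ba}^2\le\|P\|^2$, so the total is $\le \|L\|^4R^2$. The $Q_{bb}\beta_a$ form gives $R^2\sum_a L_{ia}^2\sum_b Q_{bb}^2\le \|L\|^2\,D\|Q\|^2R^2$, using $|Q_{bb}|\le\|Q\|$. The transposed forms are handled in the same way.
\end{itemize}
Since $R\ge1$, every bound is dominated by $C_TDK^{a_T}R^4$, which gives \eqref{eq:aggsecond}.

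The main obstacle is the Hadamard-product term, and more generally confirming that the propagation matrices in the second-order forms really are $e$-independent with operator norm $\le C_TK^{a_T}$. The Hadamard term can be controlled through the Schur inequality above. The second issue depends on the bookkeeping in Lemma~\ref{lem:finitesources}, which I will take as given.
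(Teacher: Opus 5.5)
Your proposal is correct and follows the paper's proof: you bound each source form of Lemma~\ref{lem:finitesources} by $\ell^2$ summation over the free index, using the Schur (Kronecker-restriction) bound $\|P\odot\widetilde P\|\le\|P\|\|\widetilde P\|$ for the $U$ form and the Frobenius bound $\|P^*\operatorname{diag}(L_{i,:})Q\|_F\le\|P\|\|Q\|\|L_{i,:}\|_2$ for the mixed term. The deviations are cosmetic. You bound the cubic sum via $|x|^3\le\max|x|\,x^2$ rather than $\sum_a|P_{ia}|^3\le\|P\|^3$. Your first insertion bound has a slip: it should read $\|L\|^2\|P\|^2R^2$ rather than $\|L\|^4R^2$, which is harmless since both norms are at most $C_TK^{a_T}$.
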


\begin{proof}
By Lemma~\ref{lem:finitesources}, each first-order derivative is a finite sum of the terms in \eqref{eq:firstsources}. For $p\in\{2,3\}$,
\[
\sum_{a,b}|(P e_a\alpha_b)_i|^p
=
\Big(\sum_a|P_{ia}|^p\Big)
\Big(\sum_b|\alpha_b|^p\Big)
\le
\|P\|^p N R^p
\le
D\|P\|^pR^p.
\]
The column-source term is bounded identically, replacing $N$ with $n$, giving $D\|Q\|^pR^p$. This proves \eqref{eq:aggfirst}.

At order two, Lemma~\ref{lem:finitesources} gives the forms in \eqref{eq:secondforms}. Because $P\odot\widetilde P$ is a coordinate restriction of the Kronecker product $P\otimes\widetilde P$
\[
\|P\odot\widetilde P\|
\le
\|P\otimes\widetilde P\|
=
\|P\|\,\|\widetilde P\|.
\]
The square sum over $(a,b)$ for the first form is bounded by $N R^4\|U\|^2$. For the mixed interaction, the Frobenius bound
\begin{equation}\label{eq:mixedfrobenius}
\|P^*\operatorname{diag}(L_{i,:})Q\|_F \le \|P\|\|Q\|\|L_{i,:}\|_2
\end{equation}
bounds the square sum by $R^4\|P\|^2\|Q\|^2\|L_{i,:}\|_2^2$. For the insertion forms
\begin{align*}
\sum_{a,b}|L_{ia}P_{ba}\alpha_b|^2 &\le R^2\|P\|^2\|L_{i,:}\|_2^2,\\
\sum_{a,b}|L_{ia}Q_{bb}\beta_a|^2 &\le D R^2\|Q\|^2\|L_{i,:}\|_2^2.
\end{align*}
Because $R \ge 1$, combining these terms proves \eqref{eq:aggsecond}.
\end{proof}

\subsection{The aggregate third derivative}

\begin{proposition}[Aggregate derivative of an empirical observable]\label{prop:aggregate3}
For $\phi\in C_b^{3,1}$ in \eqref{eq:observable},
\begin{equation}\label{eq:aggregate3}
\sum_{a,b}|\partial_{ab}^3H(C)|\le C_{T,\phi}DK^{a_T}R^3.
\end{equation}
Moreover, for $r=1,2,3$,
\begin{equation}\label{eq:pointH}
|\partial_{ab}^rH(C)|\le C_{T,\phi}D^{-1/2}K^{a_T}R^r.
\end{equation}
If $F=p(H)$ for a fixed polynomial $p$, the same conclusions hold with constants depending on $p$.
\end{proposition}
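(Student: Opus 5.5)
We bound $\partial^3_{ab}H$ by expanding the derivative of the empirical average with the multivariate chain rule (Faà di Bruno) and controlling each resulting term with the aggregate estimates of Lemma~\ref{lem:aggregate12} and the source expansion of Lemma~\ref{lem:finitesources}. Writing $H=\frac1d\sum_i\phi(\bv_i)$, we get
\[
\partial^3 H=\frac1d\sum_i\Bigl[\nabla\phi\cdot\partial^3\bv_i+3\,\nabla^2\phi[\partial\bv_i,\partial^2\bv_i]+\nabla^3\phi[\partial\bv_i,\partial\bv_i,\partial\bv_i]\Bigr],
\]
with bounded derivatives of $\phi$. The second and third terms are handled coordinatewise. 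By Lemma~\ref{lem:aggregate12}, $\sum_{a,b}|\partial_{ab}u_i|^3\le C_TDK^{a_T}R^3$, and by Cauchy--Schwarz,
\[
\sum_{a,b}|\partial u_i||\partial^2u_j|\le\Bigl(\sum|\partial u_i|^2\Bigr)^{1/2}\Bigl(\sum|\partial^2u_j|^2\Bigr)^{1/2}\le C_TDK^{a_T}R^3 .
\]
Averaging over $i$ with weight $1/d$ gives at most $C\,DK^{a_T}R^3$ for these two contributions, which is the target bound.

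The first term, $\frac1d\sum_i\nabla\phi(\bv_i)\cdot\partial^3\bv_i$, is the main obstacle. Bounding $\sum_{a,b}|\partial^3_{ab}u_i|$ coordinatewise would cost an extra factor, because third-order sources include insertion terms such as $e_a(\partial^2 q)_b$. Instead, I would use the linear structure of Lemma~\ref{lem:finitesources}. Write $\partial^3 u=\sum_\nu L_{u\nu}s^{(3)}_{\nu,e}$ with $e$-independent $L_{u\nu}$, and set $\gamma:=\nabla\phi(\bv_\cdot)\in\R^d$, so $\|\gamma\|_\infty\le C_\phi$ and $\|\gamma\|_2\le C\sqrt d$. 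Then the first term equals $\frac1d\sum_\nu\langle L_{u\nu}^*\gamma,s^{(3)}_{\nu,e}\rangle$, and the vector $\eta_\nu:=L_{u\nu}^*\gamma$ does not depend on $e$ and satisfies $\|\eta_\nu\|_2\le C_TK^{a_T}\sqrt d$. This adjoint (backward) trick is what lets us sum over $e$ efficiently. Each source type is then handled as follows:
\begin{itemize}
\item For the \emph{nonlinear product sources} $D_3(\partial u_1\odot\partial u_2\odot\partial u_3)$ and $D_2(\partial u_1\odot\partial^2u_2)$, bound $\sum_{a,b}|\langle\eta,\cdot\rangle|\le\sum_j|\eta_j|\sum_{a,b}|\cdots|_j$. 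Applying Lemma~\ref{lem:aggregate12} coordinatewise with Hölder on each $j$ gives $\sum_j|\eta_j|\,DK^{a_T}R^3$. This is too large unless $\|\eta\|_\infty$ is controlled, so I would instead keep $\eta$ inside: by Hölder, $\sum_{a,b}\sum_j|\eta_j||\partial u_{1,j}|^3\le\|\eta\|_\infty\sum_j\sum_{a,b}|\partial u_{1,j}|^3$.
\item Because the naive $\sum_j$ overcounts, a better route is to reuse the explicit first-order form $Pe_a\alpha_b$. Then $\sum_{a,b}\sum_j|\eta_j||P_{ja}|^3|\alpha_b|^3\le NR^3\sum_j|\eta_j|\sum_a|P_{ja}|^3$, and $\sum_a|P_{ja}|^3\le\|P\|^3$ only per row. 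Using instead $\sum_{j,a}|\eta_j||P_{ja}|^2\cdot\max|P_{ja}|\le\|\eta\|_2\|P\|_F\cdots$ is needed, so the estimate must exploit $\frac1d$ together with $\|P\|\le CK^{a_T}$.
\item For the \emph{insertion sources} $e_a(\partial^2q)_b$, the pairing reduces to $\sum_{a,b}|\eta_a||(\partial_{ab}^2q)_b|$. Inserting the second-order forms \eqref{eq:secondforms} reduces these to sums like $\sum_{a,b}|\eta_a||U_{ba}|R^2$, which are at most $\|\eta\|_2\|U\|_F R^2\le C\,d\,D^{1/2}\cdots$.
\end{itemize}
I expect this bookkeeping, making sure each form contributes $O(dD)$ before the $\frac1d$, to be the delicate step. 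The tool I would use throughout is Cauchy--Schwarz in $(a,b)$ combined with Frobenius-versus-operator bounds such as \eqref{eq:mixedfrobenius}.

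The pointwise bound \eqref{eq:pointH} is easier. For fixed $(a,b)$, $|\partial^rH|\le\frac Cd\sum_i\sum_{k\le r}|\text{products of }\partial^k v_i|$. By Cauchy--Schwarz, $\frac1d\sum_i|\partial^k v_i|\le d^{-1/2}\|\partial^k\bv\|_2$, and \eqref{eq:pointvector} gives $d^{-1/2}C_TK^{a_T}R^r$. Product terms are bounded by putting all but one factor in $\ell^\infty$, each at most $\|\cdot\|_2\le CK^{a_T}R$, and the remaining factor in $\ell^1$. Since $d\asymp D$, this yields $D^{-1/2}$. Finally, for $F=p(H)$, the chain rule gives $\partial^3F=p'(H)\partial^3H+3p''(H)\partial H\partial^2H+p'''(H)(\partial H)^3$, where $|H|\le\|\phi\|_\infty$. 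The product terms are summed using the pointwise bounds: $\sum_{a,b}D^{-1/2}\cdot D^{-1/2}R^3\le nN/D\cdot R^3\le DR^3$. This completes the plan.
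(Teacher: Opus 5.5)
Your plan is the paper's plan. It uses the same third-order chain-rule split, Cauchy--Schwarz/H\"older with Lemma~\ref{lem:aggregate12} for the two product terms, and the $e$-independent adjoint vector $\eta_\nu=L_{u\nu}^*\gamma$ for the term linear in $\partial^3\bv$. The gap is that you leave the nonlinear-source case of that linear term unresolved, and the cause is a miscount. You derive $\sum_{a,b}|\langle\eta,D_3(\partial u_1\odot\partial u_2\odot\partial u_3)\rangle|\le C\sum_j|\eta_j|\,DK^{a_T}R^3$, and the same for $D_2(\partial u_1\odot\partial^2u_2)$. You then call this ``too large unless $\|\eta\|_\infty$ is controlled'' and move to alternatives that are either worse or unfinished.

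The bound is not too large. You should control $\sum_j|\eta_j|=\|\eta\|_1$ through $\|\eta\|_2$, not through $\|\eta\|_\infty$. Since $\eta$ lives in a space of dimension at most $D$, Cauchy--Schwarz gives $\|\eta\|_1\le\sqrt D\,\|\eta\|_2\le C_{T,\phi}DK^{a_T}$. The total is then $CD^2K^{a_T}R^3$, which is exactly the $O(dD)$ budget you name before dividing by $d$. This is the paper's step: \eqref{eq:dualbounds} records $\|z_{\ell\nu}\|_1\le C_{T,\phi}DK^{a_T}$ for precisely this purpose. The $\|\eta\|_\infty$ route does fail, because the only available bound is $\|\eta\|_\infty\le\|\eta\|_2\le C\sqrt d\,K^{a_T}$, which overshoots by $\sqrt D$. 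Your per-row estimate $\sum_a|P_{ja}|^3\le\|P\|^3$ in the second bullet would also have closed the argument, since summing it against $|\eta_j|$ again produces $\|\eta\|_1$.

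For the insertion sources you need not re-expand $(\partial^2_{ab}q)_b$ into the forms \eqref{eq:secondforms}. The paper uses $|(\partial^2_{ab}q)_b|\le\|\partial^2_{ab}q\|_2\le C_TK^{a_T}R^2$ from \eqref{eq:pointvector}. This gives $\sum_{a,b}|\eta_a||(\partial^2_{ab}q)_b|\le C_TNK^{a_T}R^2\|\eta\|_1\le CD^2K^{a_T}R^2$, which covers all second-order forms at once. As written, your step $\sum_{a,b}|\eta_a||U_{ba}|\le\|\eta\|_2\|U\|_F$ drops a factor $\sqrt N$ from the sum over $b$. The corrected bound, of order $D^{3/2}$, is still within budget, but you also treat only one of the five forms. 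The pointwise bounds \eqref{eq:pointH} and the $F=p(H)$ step are fine as sketched.
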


\begin{proof}
Decompose the trajectory vector as
\[
v_i=(u_{1i},\ldots,u_{ki},s_i),
\]
where $k$ depends on $T$, and $\partial_e s_i = 0$. For $e=(a,b)$, the third-order chain rule gives
\begin{equation}\label{eq:observablethirdfull}
\begin{split}
d\,\partial_e^3H={}
\sum_{i,\ell}\phi_\ell(v_i)\partial_e^3u_{\ell i}
+
3\sum_{i,\ell,m}\phi_{\ell m}(v_i)
\partial_eu_{\ell i}\partial_e^2u_{mi} +
\sum_{i,\ell,m,j}\phi_{\ell mj}(v_i)
\partial_eu_{\ell i}\partial_eu_{mi}\partial_eu_{ji}.
\end{split}
\end{equation}
Because the number of coordinates is fixed, sums over $\ell,m,j$ are absorbed into constants depending on $T$ and $\phi$.

For the second term in \eqref{eq:observablethirdfull}, Cauchy--Schwarz in the matrix-entry index and Lemma~\ref{lem:aggregate12} imply for every coordinate $i$
\[
\sum_e
\left|
\partial_eu_{\ell i}\partial_e^2u_{mi}
\right|\le
\left(\sum_e|\partial_eu_{\ell i}|^2\right)^{1/2}
\left(\sum_e|\partial_e^2u_{mi}|^2\right)^{1/2}\le
C_T D K^{a_T}R^3.
\]
Summing over $i$ and dividing by $d\asymp D$ gives
\[
\sum_e
\left|
\frac{1}{d}
\sum_{i,\ell,m}
\phi_{\ell m}(v_i)
\partial_eu_{\ell i}\partial_e^2u_{mi}
\right|
\le
C_{T,\phi}D K^{a_T}R^3.
\]

For the cubic first-derivative term, H\"older's inequality and \eqref{eq:aggfirst} imply
\[
\begin{aligned}
\sum_e
|\partial_eu_{\ell i}\partial_eu_{mi}\partial_eu_{ji}|
&\le
\left(\sum_e|\partial_eu_{\ell i}|^3\right)^{1/3}
\left(\sum_e|\partial_eu_{mi}|^3\right)^{1/3}
\left(\sum_e|\partial_eu_{ji}|^3\right)^{1/3}\\
&\le
C_T D K^{a_T}R^3.
\end{aligned}
\]
Summing over $i$ and dividing by $d$ preserves this bound.

For the term linear in the third trajectory derivative, fix a component $u_\ell$ and define $a_{\ell i}:=\phi_\ell(v_i)$. By Lemma~\ref{lem:finitesources},
\[
\partial_e^3u_\ell
=
\sum_\nu L_{\ell\nu}s_{\nu,e}^{(3)}.
\]
Thus
\[
\sum_i a_{\ell i}\partial_e^3u_{\ell i}
=
\sum_\nu
\left\langle
z_{\ell\nu},
s_{\nu,e}^{(3)}
\right\rangle,
\qquad
z_{\ell\nu}:=L_{\ell\nu}^*a_\ell.
\]
Because $\|a_\ell\|_2\le C_\phi\sqrt d$ and $\|L_{\ell\nu}\|\le C_TK^{a_T}$,
\begin{equation}\label{eq:dualbounds}
\|z_{\ell\nu}\|_2
\le
C_{T,\phi}\sqrt D\,K^{a_T},
\qquad
\|z_{\ell\nu}\|_1
\le
C_{T,\phi}D\,K^{a_T}.
\end{equation}

For a nonlinear source $D_2(\partial_eu\odot\partial_e^2v)$ with $\|D_2\|\le C$, applying \eqref{eq:dualbounds} and the aggregate estimate gives
\[
\begin{aligned}
\sum_e
\left|
\left\langle
z_{\ell\nu},
D_2(\partial_eu\odot\partial_e^2v)
\right\rangle
\right|
&\le
C
\sum_i|z_{\ell\nu,i}|
\sum_e|\partial_eu_i\partial_e^2v_i|\\
&\le
C_{T,\phi}
D K^{a_T}R^3
\|z_{\ell\nu}\|_1\\
&\le
C_{T,\phi}D^2K^{a_T}R^3.
\end{aligned}
\]
The powers of $K$ are absorbed into $a_T$.

For a cubic source $D_3(\partial_eu\odot\partial_ev\odot\partial_ew)$, the cubic H\"older bound implies
\[
\sum_e
\left|
\left\langle
z_{\ell\nu},
D_3(\partial_eu\odot\partial_ev\odot\partial_ew)
\right\rangle
\right|
\le
C_{T,\phi}D^2K^{a_T}R^3.
\]

For the insertion source $e_a(\partial_{ab}^2q)_b$, Lemma~\ref{lem:derivatives} implies
\[
|(\partial_{ab}^2q)_b|
\le
\|\partial_{ab}^2q\|_2
\le
C_TK^{a_T}R^2.
\]
Thus
\[
\begin{aligned}
\sum_{a,b}
\left|
z_{\ell\nu,a}(\partial_{ab}^2q)_b
\right|
&\le
C_TK^{a_T}R^2
\sum_{a,b}|z_{\ell\nu,a}|=
C_TNK^{a_T}R^2
\|z_{\ell\nu}\|_1\le
C_{T,\phi}D^2K^{a_T}R^2\\
&\le
C_{T,\phi}D^2K^{a_T}R^3,
\end{aligned}
\]
using $R\ge1$. The transpose insertion satisfies the identical bound.

Every source in the first term of \eqref{eq:observablethirdfull} has a total entry sum bounded by $C_{T,\phi}D^2K^{a_T}R^3$. Because the number of source paths is finite for fixed $T$, dividing by $d\asymp D$ and combining the terms proves
\[
\sum_{a,b}|\partial_{ab}^3H(\bC)|
\le
C_{T,\phi}DK^{a_T}R^3.
\]

For the pointwise bounds at $r=1$, the chain rule and Cauchy--Schwarz imply
\[
|\partial_eH|
\le
\frac{C_\phi}{d}
\sum_\ell
\|\partial_eu_\ell\|_1
\le
\frac{C_\phi}{\sqrt d}
\sum_\ell
\|\partial_eu_\ell\|_2
\le
C_{T,\phi}D^{-1/2}K^{a_T}R.
\]
For $r=2$, the linear term is bounded identically. The quadratic term satisfies
\[
\frac{C_\phi}{d}
\sum_i|\partial_eu_i\partial_ev_i|
\le
\frac{C_\phi}{d}
\|\partial_eu\|_2\|\partial_ev\|_2
\le
C_{T,\phi}D^{-1}K^{a_T}R^2.
\]
Because $D^{-1}\le D^{-1/2}$, this proves \eqref{eq:pointH} for $r=2$. For $r=3$, the linear term is bounded by $C D^{-1/2}K^{a_T}R^3$. The mixed term is bounded by
\[
\frac{C}{d}
\|\partial_eu\|_2\|\partial_e^2v\|_2
\le
C_{T,\phi}D^{-1}K^{a_T}R^3,
\]
and the cubic term by
\[
\frac{C}{d}
\|\partial_eu\|_\infty
\|\partial_ev\|_2
\|\partial_ew\|_2
\le
C_{T,\phi}D^{-1}K^{a_T}R^3.
\]
This proves \eqref{eq:pointH} for $r\le3$.

For $F=p(H)$ with a polynomial $p$, $\phi\in C_b^{3,1}$ implies the observable is bounded
\[
|H(\bC)|\le\|\phi\|_\infty.
\]
Thus, $p^{(j)}(H)$ is bounded for every $j$. The first and second derivatives of $p(H)$ follow from the chain rule and \eqref{eq:pointH}. At third order,
\[
\partial_e^3p(H)
=
p^{(1)}(H)\partial_e^3H
+
3p^{(2)}(H)\partial_eH\partial_e^2H
+
p^{(3)}(H)(\partial_eH)^3.
\]
The first term sum is bounded by $C DK^{a_T}R^3$. Summing over the $O(D^2)$ entries bounds the second term by
\[
C D^2
\bigl(D^{-1/2}K^{a_T}R\bigr)
\bigl(D^{-1/2}K^{a_T}R^2\bigr)
\le
C D K^{a_T}R^3,
\]
and the third by
\[
C D^2
\bigl(D^{-1/2}K^{a_T}R\bigr)^3
\le
C D^{1/2}K^{a_T}R^3
\le
C D K^{a_T}R^3.
\]
Absorbing the powers of $K$ proves the bounds for $F=p(H)$.
\end{proof}

\subsection{Moving the Taylor base point}

\begin{lemma}[Lipschitz variation of the cubic derivative]\label{lem:cubicmotion}
Let $\bC_0$ be a matrix with entry $e=(a,b)$ equal to zero. Define $\bC_y := \bC_0 + y \be_a \be_b^*$. Let $K_0 := K_{\bC_0}$ and $R_0 := \max(1, \max_{\bu,i}|u_i(\bC_0)|)$. For $F=p(H)$,
\begin{equation}\label{eq:cubicmotion}
|\partial_e^3F(\bC_y)-\partial_e^3F(\bC_0)|
\le C D^{-1/2}|y|(K_0+|y|)^{a_T}(1+|y|)^{a_T}R_0^4.
\end{equation}
The Taylor expansion satisfies
\begin{equation}\label{eq:taylorremainder}
\left|F(\bC_y)-\sum_{j=0}^3\frac{y^j}{j!}\partial_e^jF(\bC_0)\right|
\le C D^{-1/2}|y|^4(K_0+|y|)^{a_T}(1+|y|)^{a_T}R_0^4.
\end{equation}
\end{lemma}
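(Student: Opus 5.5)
The plan is to reduce both estimates to a single one-parameter comparison between the perturbed matrix $\bC_s=\bC_0+s\be_a\be_b^*$, $|s|\le|y|$, and the base matrix $\bC_0$. Two observations make this work. First, $\|\bC_s-\bC_0\|=|s|$, so $K_{\bC_s}\le K_0+|s|$. Second, the perturbation has rank one, so its effect on the trajectory enters only through the scalar insertion terms $s\,\be_a q_b^t(\bC_s)$ and $s\,\be_b m_a^t(\bC_s)$.

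\textbf{Step 1 (polynomial control of the perturbed trajectory).} Write $\Delta u=u(\bC_s)-u(\bC_0)$ and subtract the recursions \eqref{eq:auxamp}. The sources are $s\,\be_a q_b^t(\bC_s)$ and $s\,\be_b m_a^t(\bC_s)$, which involve only coordinates of the perturbed iterates at strictly earlier stages of the computational graph. I will induct over the finitely many stages $t\le T$. At each stage I bound $|q_b^t(\bC_s)|\le R_0+\|\Delta q^t\|_2$, and similarly for $m_a^t$. Propagating with the $\ell^2$ Lipschitz bounds used in Lemma~\ref{lem:stability} then gives
\[
\max_u\|\Delta u\|_2\le C|s|(K_0+|s|)^{a_T}(1+|s|)^{a_T}R_0 .
\]
Since $\|\cdot\|_\infty\le\|\cdot\|_2$, this also yields $R_{\bC_s}\le C(K_0+|s|)^{a_T}(1+|s|)^{a_T}R_0$. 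Inducting in $t$, rather than applying Gronwall in $s$, keeps the dependence polynomial and not exponential.

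\textbf{Step 2 (Lipschitz variation of the derivative vectors).} Next I difference the linearized recursions \eqref{eq:Drecursion}--\eqref{eq:Dstructure} for $r=1,2,3$ between $\bC_s$ and $\bC_0$. The homogeneous parts differ by $s\,\be_a\be_b^*$ applied to the derivative vector, and by diagonal coefficient changes $\partial_1^kf(h(\bC_s),x_0)-\partial_1^kf(h(\bC_0),x_0)$ for $k=1,2,3$. Under \eqref{eq:smoothcalc} each such coefficient change is Lipschitz; for $k=1$ this uses the bound on $\partial_1^2 f$.

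I will bound each resulting product by $\|\Delta h\odot \bx\|_2\le\|\Delta h\|_2\|\bx\|_\infty$. Here $\bx$ is a derivative vector of order at most $r$, or a product of such vectors, and its sup-norm is controlled by \eqref{eq:pointvector} at $\bC_s$ or $\bC_0$. Step~1 supplies the bound on $\|\Delta h\|_2$. Induction over $r$ and over the stages then gives
\[
\|\partial_e^r u(\bC_s)-\partial_e^r u(\bC_0)\|_2\le C|s|(K_0+|s|)^{a_T}(1+|s|)^{a_T}R_0^{r+1}.
\]

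\textbf{Step 3 (pass to the observable).} I then difference the chain-rule expression \eqref{eq:observablethirdfull} for $\partial_e^3H$ term by term, telescoping each multilinear product. Two kinds of factors appear:
\begin{itemize}
\item derivatives $\phi_\ell,\phi_{\ell m},\phi_{\ell mj}$ of the test function, whose differences are Lipschitz in $v_i$ because $\phi\in C_b^{3,1}$ and are controlled by $|\Delta v_i|$;
\item derivative vectors, whose differences are controlled by Step~2.
\end{itemize}
Each term carries the prefactor $1/d$. I will use $\|\cdot\|_1\le\sqrt d\,\|\cdot\|_2$ and Cauchy--Schwarz exactly as in the pointwise part of Proposition~\ref{prop:aggregate3}. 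This produces the factor $D^{-1/2}$, while the degree in $R_0$ rises from $3$ to $4$ because of the single Lipschitz difference.

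For $F=p(H)$, I expand $\partial_e^3 p(H)$ by the chain rule and difference the factors in the same telescoping way. This uses the boundedness of $p^{(j)}(H)$ and the pointwise bounds \eqref{eq:pointH} at both $\bC_s$ and $\bC_0$, together with $|H(\bC_s)-H(\bC_0)|\le C D^{-1/2}|s|(\cdots)R_0$. The result is \eqref{eq:cubicmotion}.

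\textbf{Step 4 (Taylor remainder).} The map $s\mapsto F(\bC_s)$ is $C^3$, so Taylor's formula with integral remainder gives
\[
F(\bC_y)-\sum_{j\le3}\frac{y^j}{j!}\partial_e^jF(\bC_0)=\int_0^y\frac{(y-s)^2}{2}\bigl(\partial_e^3F(\bC_s)-\partial_e^3F(\bC_0)\bigr)\,ds .
\]
Inserting \eqref{eq:cubicmotion} with the monotone bound $(K_0+|s|)(1+|s|)\le(K_0+|y|)(1+|y|)$ yields \eqref{eq:taylorremainder}.

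I expect Step~2 to be the main obstacle. It requires checking that every product in the third-order source structure of Lemma~\ref{lem:finitesources} can be differenced with the sup-norm placed on a vector already controlled by \eqref{eq:pointvector}. It also requires tracking the powers of $R$ so that the total stays at $R_0^4$. If some term resists this, the fallback is to use that $a_T$ is an adjustable constant and absorb any extra powers of $K$ and $(1+|y|)$ into it. Powers of $R$ cannot be absorbed this way, so that bookkeeping has to be done carefully.
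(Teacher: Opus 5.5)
Your proposal is correct and follows essentially the same route as the paper's proof. The paper likewise bounds the rank-one trajectory perturbation, then inducts over derivative order and time on the differenced linearized recursions. There the Lipschitz coefficient changes $\partial_1^k f(\bh_y,\bx_0)-\partial_1^k f(\bh_0,\bx_0)$ produce the $R_0^{r+1}$ scaling. It then passes to $H$ and $p(H)$ via Cauchy--Schwarz with $\norm{\cdot}_1\le\sqrt d\,\norm{\cdot}_2$, and finishes with the integral form of Taylor's remainder.

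The only difference is cosmetic. You write the forcing with the perturbed iterates $q_b^t(\bC_s)$, whereas the paper uses base-point iterates $q_b^t(\bC_0)$ with $\bC_y$ acting on the discrepancy. Both give the same polynomial bound, and the powers of $R_0$ do close at $R_0^4$ as you expected.
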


\begin{proof}
Subtracting the linear recursions at $\bC_y$ and $\bC_0$ gives the discrepancies $y \be_a q_b(\bC_0)$ and $y \be_b m_a(\bC_0)$. Iterating the Lipschitz bounds implies
\[
\max_{\bu}\|\bu(\bC_y)-\bu(\bC_0)\|_2 \le C|y|(K_0+|y|)^{a_T}R_0.
\]
This bounds all coordinates along the segment $[0, y]$ by $C(1+|y|)(K_0+|y|)^{a_T}R_0$.

Define $\bu_{0,y}:=\bu(\bC_y)$ and $\bu_{r,y}:=\partial_e^r\bu(\bC_y)$ for $r\in\{1,2,3\}$. Let $\Delta\bu_r:=\bu_{r,y}-\bu_{r,0}$. For $r\in\{1,2,3\}$, subtracting the differentiated linear updates gives
\begin{equation}\label{eq:motionmatrix}
\begin{aligned}
\Delta \bb_r^t&=\bC_y\Delta \bq_r^t+y \be_a\be_b^* \bq_{r,0}^t
+r \be_a\be_b^*\Delta \bq_{r-1}^t-\bar\lambda_t\Delta \bm_r^{t-1},\\
\Delta \bh_r^{t+1}&=\bC_y^*\Delta \bm_r^t+y \be_b\be_a^* \bm_{r,0}^t
+r \be_b\be_a^*\Delta \bm_{r-1}^t-\bar\xi_t\Delta \bq_r^t.
\end{aligned}
\end{equation}
Define $A_{k,y}:=\partial_1^k f(\bh_y,\bx_0)$ and $\Delta A_k:=A_{k,y}-A_{k,0}$. The chain rule differences expand as
\begin{align}
\Delta \bq_1={}&A_{1,y}\odot\Delta \bh_1+\Delta A_1\odot \bh_{1,0},\label{eq:motionnonlinear1}\\
\Delta \bq_2={}&A_{1,y}\odot\Delta \bh_2+\Delta A_1\odot \bh_{2,0}
+A_{2,y}\odot(\bh_{1,y}+\bh_{1,0})\odot\Delta \bh_1
+\Delta A_2\odot \bh_{1,0}^{\odot 2},\label{eq:motionnonlinear2}\\
\Delta \bq_3={}&A_{1,y}\odot\Delta \bh_3+\Delta A_1\odot \bh_{3,0}\nonumber\\
&+3A_{2,y}\odot(\bh_{1,y}\odot\Delta \bh_2+\Delta \bh_1\odot \bh_{2,0})
+3\Delta A_2\odot \bh_{1,0}\odot \bh_{2,0}\nonumber\\
&+A_{3,y}\odot(\bh_{1,y}^{\odot 2}+\bh_{1,y}\odot \bh_{1,0}+\bh_{1,0}^{\odot 2})\odot\Delta \bh_1
+\Delta A_3\odot \bh_{1,0}^{\odot 3}.\label{eq:motionnonlinear3}
\end{align}
Analogous identities hold for $\Delta \bm_r$. Bounded higher derivatives and \eqref{eq:smoothcalc} imply that $\partial_1^k f$ are globally Lipschitz. Thus, $\|\Delta A_k\|_\infty \le C\|\Delta\bh_0\|_2$ for $k\in\{1,2,3\}$, where the subscript $0$ denotes derivative order zero. Identical bounds hold for $g$.

Induct on $r$ and $t$. A polynomial $B_y \ge 1$ in $(K_0+|y|)$ and $(1+|y|)$ bounds $\|\bu_{r,y}\|_2+\|\bu_{r,0}\|_2\le B_y R_0^r$ by Lemma~\ref{lem:derivatives}. The base case is $\|\Delta \bu_0\|_2 \le |y|B_y R_0$. Assume the bound holds for orders less than $r$. In \eqref{eq:motionmatrix}, the new terms are bounded by $|y|B_y R_0^r$. Applying the inductive hypothesis and $\|\bv\odot \bw\|_2 \le \|\bv\|_\infty\|\bw\|_2$ to \eqref{eq:motionnonlinear1}--\eqref{eq:motionnonlinear3}, the inhomogeneous terms scale as $|y|B_y R_0^{r+1}$. Iterating these bounds over $T$ steps gives
\[
\max_{\bu}\|\Delta \bu_r\|_2 \le C|y| B_y R_0^{r+1}, \qquad r \in \{0,1,2,3\}.
\]
Because initialization is independent of $\bC$, initial derivatives vanish, completing the induction.

Apply these estimates to \eqref{eq:observablethirdfull}. For a linear term, boundedness of $\phi'$ and Cauchy--Schwarz imply
\[
\frac{1}{d}\left|\sum_{i=1}^d \phi'_i\Delta u_{3,i}\right| \le \frac{1}{d}\|\phi'\|_2\|\Delta\bu_3\|_2 \le C D^{-1/2}|y|B_yR_0^4.
\]
For a product term
\[
\frac{1}{d}\left|\sum_{i=1}^d \phi_i''(\partial_eu_i)\Delta u_{2,i}\right| \le \frac{1}{d}\|\phi''\|_\infty\|\partial_e\bu\|_2\|\Delta\bu_2\|_2 \le C D^{-1}|y|\widetilde{B}_yR_0^4.
\]
For a coefficient-difference term, Lipschitz continuity of $\phi^{(3)}$ implies
\[
\left|
\phi_{\ell mj}(v_{i,y})-\phi_{\ell mj}(v_{i,0})
\right|
\le
C_\phi\|v_{i,y}-v_{i,0}\|.
\]
Because the number of coordinates is fixed,
\[
\begin{aligned}
&\frac1d
\sum_i
\left|
\phi_{\ell mj}(v_{i,y})-\phi_{\ell mj}(v_{i,0})
\right|
|\partial_eu_{\ell i,y}|
|\partial_eu_{mi,y}|
|\partial_eu_{ji,y}|\\
&\qquad\le
\frac{C}{d}
\|\Delta \bv_0\|_1
\max_{\ell}\|\partial_e\bu_{\ell,y}\|_\infty^3\\
&\qquad\le
C d^{-1/2}
\|\Delta \bv_0\|_2
\max_{\ell}\|\partial_e\bu_{\ell,y}\|_2^3\\
&\qquad\le
C D^{-1/2}|y|\widetilde B_yR_0^4.
\end{aligned}
\]
Let $\Delta\bv_0$ denote the stacked order-zero trajectory discrepancies. Terms involving $\phi'$ and $\phi''$ satisfy identical bounds without higher powers of $R_0$.

A finite exponent $a_T$ bounds all polynomial products by $C(K_0+|y|)^{a_T}(1+|y|)^{a_T}$, proving \eqref{eq:cubicmotion} for $H$. 

Because $\phi\in C_b^{3,1}$ is bounded, $|H(\bC_s)|\le\|\phi\|_\infty$. For a polynomial $p$, the derivatives $p^{(j)}$ are bounded by $C_p$. Lipschitz continuity of $H$ implies
\[
|H(\bC_y)-H(\bC_0)|
\le
\frac{C}{d}\|\Delta\bv_0\|_1
\le
C d^{-1/2}\|\Delta\bv_0\|_2
\le
C D^{-1/2}|y|B_yR_0.
\]
Thus, $|p^{(j)}(H_y)-p^{(j)}(H_0)| \le C_p|H_y-H_0|$. Subtracting the expansion
\[
\partial_e^3p(H) = p'''(H)(\partial_eH)^3 + 3p''(H)(\partial_eH)(\partial_e^2H) + p'(H)\partial_e^3H
\]
evaluated at $\bC_y$ and $\bC_0$ bounds each term by the preceding estimates. The maximum exponent is $R_0^4$, proving \eqref{eq:cubicmotion} for $F = p(H)$.

Taylor's formula is
\[
F(\bC_y) = F(\bC_0) + y\partial_e F(\bC_0) + \frac{y^2}{2}\partial_e^2 F(\bC_0) + \frac{1}{2}\int_0^y(y-s)^2\partial_e^3 F(\bC_s)\,ds.
\]
Subtracting $\frac{y^3}{6}\partial_e^3 F(\bC_0)$ and applying \eqref{eq:cubicmotion} proves \eqref{eq:taylorremainder}.
\end{proof}

\subsection{Randomized replacement}

The comparison is a Lindeberg argument \cite{Chatterjee06} with a random replacement order, as in \cite{CCKK}.

\begin{theorem}[Smooth trajectory comparison]\label{thm:swap}
Let $\bX, \bY$ be independent matrices with independent centered entries, identical variances, and satisfying \eqref{eq:workingSG} for fixed $V$. Evaluate the recursions using shared side data independent of both matrices, and identical bounded deterministic coefficients. Under \eqref{eq:smoothcalc}, for $H$ in \eqref{eq:observable}, $\phi\in C_b^{3,1}$, fixed polynomial $p$, and $0<\varepsilon<1/2$,
\begin{equation}\label{eq:swapnew}
|\E p(H(\bX))-\E p(H(\bY))|
\le C_{V,T,\phi,p,\varepsilon}n^{-1/2+\varepsilon}.
\end{equation}
\end{theorem}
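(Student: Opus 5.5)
We will run a Lindeberg replacement \cite{Chatterjee06} that swaps the $nN$ entries of $\bX$ for those of $\bY$ one at a time. The order is a uniformly random permutation $\pi$ of the index set, drawn independently of $(\bX,\bY,\bx_0,\bw)$, as in \cite{CCKK}. For $k=0,\dots,nN$, let $\bC^{(k)}$ carry the $\bY$-entries at the first $k$ positions of $\pi$ and the $\bX$-entries elsewhere. Every $\bC^{(k)}$, and every matrix obtained from it by zeroing one entry, has independent centered entries satisfying \eqref{eq:workingSG} with the same $V$. Hence Lemma~\ref{lem:momprop} applies to all of them uniformly. We write $F=p\circ H$ and telescope
\[
\E F(\bX)-\E F(\bY)=\sum_{k=1}^{nN}\E\bigl[F(\bC^{(k-1)})-F(\bC^{(k)})\bigr].
\]
At step $k$, with $e=\pi(k)$, we let $\bC_0$ be $\bC^{(k-1)}$ with entry $e$ set to zero. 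Then $\bC^{(k-1)}=\bC_0+X_e\be_a\be_b^*$ and $\bC^{(k)}=\bC_0+Y_e\be_a\be_b^*$, and $\bC_0$ is independent of $(X_e,Y_e)$ conditionally on $\pi$.

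We expand both sides with \eqref{eq:taylorremainder} around $\bC_0$. The order-$0$ terms cancel. The order-$1$ and order-$2$ terms cancel in expectation, because $\E X_e=\E Y_e=0$, $\E X_e^2=\E Y_e^2$, and the derivatives $\partial_e^jF(\bC_0)$ are independent of the swapped entry. Two contributions remain: the cubic term $\tfrac16\E\bigl[(X_e^3-Y_e^3)\partial_e^3F(\bC_0)\bigr]$ and the fourth-order remainders. For the remainders, \eqref{eq:taylorremainder} gives a bound $CD^{-1/2}|y|^4(K_0+|y|)^{a_T}(1+|y|)^{a_T}R_0^4$. By independence and sub-Gaussianity, $\E|y|^{4+j}\le Cn^{-2}$ for fixed $j$, and \eqref{eq:Rmoment} gives $\E[K_0^cR_0^4]\le Cn^{\varepsilon}$. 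Each remainder is therefore $O(D^{-1/2}n^{-2+\varepsilon})$, and the $nN\asymp D^2$ of them sum to $O(n^{-1/2+\varepsilon})$.

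The cubic terms are the main obstacle. The third moments are not matched, so each of the $D^2$ terms is only of size $n^{-3/2}|\partial_e^3F|$. The pointwise bound \eqref{eq:pointH} gives $D^{-1/2}$ per entry, which sums to $O(1)$, and that is too weak. This is where the random order is needed. We replace $\partial_e^3F(\bC_0)$ by $\partial_e^3F(\bC^{(k-1)})$. By \eqref{eq:cubicmotion}, the error per term is $O(n^{-3/2}\cdot D^{-1/2}n^{-1/2}n^{\varepsilon})$, after taking moments of $|X_e|^3|y|$ times the polynomial factors. Summed over $D^2$ entries, this is $O(n^{-1/2+\varepsilon})$. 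The resulting term, however, is no longer independent of $X_e$. We handle this by bounding crudely:
\[
\Bigl|\sum_k \E\bigl[(X_e^3-Y_e^3)\partial_e^3F(\cdot)\bigr]\Bigr|\le Cn^{-3/2}\,\E\max_e|\text{cube moment factor}|\cdot\E\sum_{a,b}|\partial_{ab}^3F(\bC^{(k-1)})|.
\]
Here we use that, for a uniformly random position $k$, averaging over $\pi$ converts the sum over steps into $(nN)^{-1}\sum_k$ of the full aggregate $\sum_{a,b}|\partial^3_{ab}F|$ at one interpolating matrix. Conditional independence of the third-moment weights $\E|X_e|^3+\E|Y_e|^3\le Cn^{-3/2}$ must be recovered; we do this by reverting to $\bC_0$ through \eqref{eq:cubicmotion} once more. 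Concretely, we bound each cubic term by $n^{-3/2}C\,\E|\partial^3_eF(\bC_0)|$. Summing over $k$ with the random order, $\sum_k\E|\partial^3_{\pi(k)}F(\bC_0^{(k)})|$ equals $\sum_k\E\bigl[(nN)^{-1}\sum_e|\partial_e^3F(\bC_0^{(k)}(e))|\bigr]$. After another \eqref{eq:cubicmotion} correction, which moves every $\bC_0^{(k)}(e)$ to the common matrix $\bC^{(k-1)}$, this is $\sum_k (nN)^{-1}\E\sum_e|\partial^3_eF(\bC^{(k-1)})|$. Proposition~\ref{prop:aggregate3} together with \eqref{eq:Rmoment} bounds this by $C D n^{\varepsilon}$. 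The total cubic contribution is thus $n^{-3/2}\cdot Dn^{\varepsilon}=O(n^{-1/2+\varepsilon})$. Collecting the three contributions gives \eqref{eq:swapnew}. The delicate bookkeeping lies in making the exchange between the random position and the aggregate sum rigorous while keeping all polynomial factors in $K,R$ integrable via Lemma~\ref{lem:momprop}.
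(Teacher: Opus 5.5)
Your argument is the paper's in substance. You expand around the matrix with entry $e$ zeroed and cancel orders $0$--$2$ by independence and matched variances. You sum the remainders via \eqref{eq:taylorremainder} and \eqref{eq:Rmoment}, and bound the cubic term by $Cn^{-3/2}\E|\partial_e^3F(\bC_0)|$ using independence at the zeroed base point. You then shift that base point in the single entry $e$ via \eqref{eq:cubicmotion} and apply Proposition~\ref{prop:aggregate3} at a matrix common to all $e$. The paper realizes the random order through the product-measure interpolation $\bC(\theta)$, with marginals $\theta\mathcal L(X_e)+(1-\theta)\mathcal L(Y_e)$, and identity \eqref{eq:mixtureidentity}. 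There $\bC^{e,0}(\theta)$ is exactly $\bC(\theta)$ with entry $e$ zeroed for every $e$ at once, so the common matrix comes for free.

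With a uniform permutation, the step that moves every $\bC_0^{(k)}(e)$ to a common $\bC^{(k-1)}$ with weight $(nN)^{-1}$ is not right as written. Conditioning on $\pi(k)=e$ changes the law of $\bC^{(k-1)}$, because its swapped set must avoid $e$. So $(nN)^{-1}\sum_e\E[\,\cdot\mid\pi(k)=e]$ is not an aggregate at a single matrix. Condition instead on $S_k=\{\pi(1),\dots,\pi(k-1)\}$. Given $S_k$, $\pi(k)$ is uniform on $S_k^c$ and independent of $(\bX,\bY)$ and the side data, which gives
\[
\E\bigl|\partial^3_{\pi(k)}F(\bC^{(k-1)})\bigr|\le\frac{1}{nN-k+1}\,\E\sum_{e}\bigl|\partial^3_eF(\bC^{(k-1)})\bigr|.
\]
The weight is $(nN-k+1)^{-1}$, not $(nN)^{-1}$. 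Summing over $k$ produces $\sum_{j\le nN}j^{-1}\le 1+\log(nN)$, so your cubic contribution is $O(n^{-1/2+\varepsilon}\log n)$. Running the argument with $\varepsilon/2$ absorbs the logarithm.

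Alternatively, for $k>nN/2$ shift the base point to $\bC^{(k)}=\bC_0+Y_e\be_a\be_b^*$ instead, which \eqref{eq:cubicmotion} handles with $y=Y_e$. Given the swapped set, $\pi(k)$ is uniform on the $k$ already-swapped indices, so every weight is at most $2/(nN)$ and no logarithm appears.

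Also drop your intermediate ``crude'' display. At $\bC^{(k-1)}$ the entry $X_e$ and $\partial_e^3F$ are dependent, so that expectation does not factor. The valid route is the one you give afterwards: factor at $\bC_0$, and only then move the base point.
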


\begin{proof}
For $\theta\in[0,1]$, let $\bC(\theta)$ have independent entries with marginals $\mu_{e,\theta} = \theta\mathcal L(X_e)+(1-\theta)\mathcal L(Y_e)$, independent of the side data. Let $F:=p(H)$ and let $\bC^{e,0}(\theta)$ denote $\bC(\theta)$ with entry $e=(a,b)$ set to zero.

Insertions of $X_e$ or $Y_e$ into $\bC^{e,0}(\theta)$ denote independent copies. Because $H$ is bounded, $F$ is bounded. The expectation under the product measure is a polynomial in $\theta$. Differentiating $\bigotimes_e \mu_{e,\theta}$ gives the identity
\begin{equation}\label{eq:mixtureidentity}
\frac{d}{d\theta}\E F(\bC(\theta))
=\sum_e\E\bigl\{F(\bC^{e,0}(\theta)+X_e \be_a \be_b^*)-F(\bC^{e,0}(\theta)+Y_e \be_a \be_b^*)\bigr\}.
\end{equation}

Every entry of $\bC(\theta)$ is centered, has identical variance, and satisfies the sub-Gaussian bound. Zeroing an entry preserves these bounds, so \eqref{eq:Rmoment} holds for $\bC^{e,0}(\theta)$.

Applying \eqref{eq:taylorremainder} for $y=X_e$ and $y=Y_e$ at $\bC^{e,0}(\theta)$ and taking expectations cancels the constant, linear, and quadratic terms. The cubic term is
\[
\frac16\sum_e\gamma_e\E\partial_e^3F(\bC^{e,0}(\theta)),\qquad
\gamma_e=\E X_e^3-\E Y_e^3.
\]
By \eqref{eq:workingSG}, $|\gamma_e|\le C_V n^{-3/2}$ and $\E[|X_e|^4+|Y_e|^4] \le C_V n^{-2}$. Independence, \eqref{eq:Rmoment}, and H\"older's inequality bound the polynomial expectations by $n^\varepsilon$. Because $nN = O(n^2)$ and $D^{-1/2} = O(n^{-1/2})$, the expected remainder sum is
\[
\sum_e \E |\mathrm{Rem}_e| \le C(nN)D^{-1/2}n^{-2}n^\varepsilon \le C n^{-1/2+\varepsilon}.
\]

Couple $\bC^{e,0}(\theta)$ with $\bC(\theta)$ by adding the entry. Lemma~\ref{lem:cubicmotion} bounds the expected absolute difference for a fixed $e$ by $CD^{-1/2}n^{-1/2+\varepsilon}$. Multiplying by $|\gamma_e|$ and summing over $e$ bounds the base-point difference
\[
\sum_e|\gamma_e|\,\E\left|\partial_e^3F(\bC^{e,0}(\theta))-\partial_e^3F(\bC(\theta))\right| \le C(nN) n^{-3/2} D^{-1/2}n^{-1/2+\varepsilon} \le C n^{-1/2+\varepsilon}.
\]

Proposition~\ref{prop:aggregate3} and \eqref{eq:Rmoment} bound the expected aggregate third derivative at $\bC(\theta)$:
\[
\E\sum_e|\partial_e^3F(\bC(\theta))| \le C n^{1+\varepsilon}.
\]
Multiplying by $\gamma_e$ bounds the cubic term
\[
\left|\sum_e\gamma_e\E\partial_e^3F(\bC(\theta))\right| \le C_Vn^{-3/2}\E\sum_e|\partial_e^3F(\bC(\theta))| \le C n^{-1/2+\varepsilon}.
\]

H\"older's inequality applications use intermediate exponents absorbed into $\varepsilon$. Integrating \eqref{eq:mixtureidentity} over $\theta \in [0,1]$ bounds the Taylor remainder, base-point difference, and cubic term by $O(n^{-1/2+\varepsilon})$, establishing \eqref{eq:swapnew}.
\end{proof}

\section{Proof of the Main Theorem}\label{sec:assembly}

We compare the maps at a matrix cutoff and a smoothing scale, take the dimension limit, remove the smoothing cutoffs and the observable cutoffs, and remove the matrix truncation.

\subsection{Removing the higher-order regularity}

\begin{lemma}[State-variable mollification]\label{lem:C1}
Assume $f_t,g_t$ satisfy Assumption~\ref{ass:functions}. Let $\eta\in C_c^\infty([-1,1])$ be a nonnegative mollifier of unit integral. Define
\[
f_{t,\delta}(x,s)=\int f_t(x-\delta z,s)\eta(z)\,dz,
\qquad g_{t,\delta}(x,s)=\int g_t(x-\delta z,s)\eta(z)\,dz.
\]
These mollified functions have uniformly bounded Lipschitz constants, uniform approximation error bounded by $C_T\delta$, and for each $k\ge1$:
\[
\|\partial_1^kf_{t,\delta}\|_\infty+
\|\partial_1^kg_{t,\delta}\|_\infty\le C_{T,k}\delta^{1-k}.
\]
Under the original deterministic coefficients \eqref{eq:onsagerdet}, the iterates satisfy
\begin{equation}\label{eq:mollstability}
\max_{\bu}\|\bu_\delta(\bC)-\bu(\bC)\|_{\rm av}
\le C_T\delta(1+\|\bC\|)^{a_T}.
\end{equation}
\end{lemma}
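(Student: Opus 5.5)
The plan is to handle the scalar mollifier estimates first and then propagate the resulting sup-norm error through Lemma~\ref{lem:stability}.

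\textbf{Scalar estimates.} Lipschitz bounds are preserved because $f_{t,\delta}$ is an average of translates $x\mapsto f_t(x-\delta z,s)$. Each translate has the same joint Lipschitz constant $L_t$, and $|f_{t,\delta}(0,0)|\le|f_t(0,0)|+L_t\delta$. The uniform approximation error follows from
\[
|f_{t,\delta}(x,s)-f_t(x,s)|\le\int|f_t(x-\delta z,s)-f_t(x,s)|\eta(z)\,dz\le L_t\delta\int|z|\eta(z)\,dz\le L_t\delta,
\]
using $\operatorname{supp}\eta\subset[-1,1]$.

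For the derivative bounds, write $f_{t,\delta}(x,s)=\int f_t(y,s)\,\delta^{-1}\eta((x-y)/\delta)\,dy$. Move one derivative onto $f_t$, which is legitimate because $f_t$ is $C^1$ in $x$ with $|\partial_1 f_t|\le L_t$. Put the remaining $k-1$ derivatives on the kernel. This gives
\[
\partial_1^k f_{t,\delta}(x,s)=\int \partial_1 f_t(y,s)\,\delta^{-k}\eta^{(k-1)}((x-y)/\delta)\,dy,
\]
so $\|\partial_1^k f_{t,\delta}\|_\infty\le L_t\delta^{1-k}\|\eta^{(k-1)}\|_{L^1}$. The same argument applies to $g_t$. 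Joint continuity of $\partial_1 f_{t,\delta}$ follows by dominated convergence.

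\textbf{Iterate comparison.} Apply Lemma~\ref{lem:stability} with $\bC'=\bC$ and with identical coefficients, namely the deterministic coefficients \eqref{eq:onsagerdet} of the original maps in both recursions, so $e=0$. The maps are $(f_t,g_t)$ versus $(f_{t,\delta},g_{t,\delta})$. By the first step, all four maps are $L$-Lipschitz with origin values at most $L+1$ (for $\delta\le1$), and $\delta$ in the lemma is replaced by $C_T\delta$. The coefficient bound $B$ holds since $|\bar\lambda_t|\le\rho^{-1}L_t$ and $|\bar\xi_t|\le L_t$. Lemma~\ref{lem:stability} then gives
\[
\max_{\bu}\|\bu_\delta-\bu\|_{\rm av}\le P_T(1+2\|\bC\|)\,C_T\delta,
\]
and since $P_T$ is a polynomial this is at most $C_T\delta(1+\|\bC\|)^{a_T}$, which is \eqref{eq:mollstability}.

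\textbf{Main obstacle.} The derivative bound at $k=1$ must have no $\delta$-loss, which is why one derivative is placed on $f_t$ rather than on the kernel. The other care point is that the stability polynomial carries no $S$-factor, because $e=0$ and $\|\bC-\bC'\|=0$. If the paper's intended coefficients were instead the mollified ones, the bound would additionally require $|\bar\lambda_{t,\delta}-\bar\lambda_t|\le C\delta$. That would need a continuity argument for $\partial_1 f_t$ under Gaussian averaging, and it is not available at rate $\delta$ for merely continuous derivatives. So I fix the original coefficients, as the statement specifies.
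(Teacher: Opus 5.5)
Your proposal is correct and follows essentially the same route as the paper: you bound the approximation error by integrating $L\delta|z|$, you get the derivative bounds by putting one derivative on $f_t$ and the remaining $k-1$ on the scaled kernel, and you apply Lemma~\ref{lem:stability} with the same matrix and the same coefficients. With $\bC'=\bC$ and $e=0$, only the sup-norm term survives in that lemma, which gives \eqref{eq:mollstability}.
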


\begin{proof}
The approximation bound follows by integrating $L\delta|z|$. Convolution preserves the Lipschitz bound. Expressing differentiation as the convolution of the bounded first derivative with the $(k-1)$-th derivative of the scaled kernel gives the derivative bound, implying a Lipschitz third derivative for any fixed $\delta>0$. Applying Lemma~\ref{lem:stability} with the same matrices and coefficients establishes \eqref{eq:mollstability}. The initialization $\bq^0=f_0(\mathbf{0},\bx_0)$ has the same approximation error.
\end{proof}

\begin{lemma}[Qualitative comparison under $C^1$ regularity]\label{lem:varbound}
Under the matrix assumptions of Theorem~\ref{thm:swap} and Assumptions~\ref{ass:functions}--\ref{ass:signal}, for a test function $\phi\in C_b^{3,1}$ and $r \in \{1,2,3,4\}$,
\[
\E H(\bX)^r-\E H(\bY)^r\longrightarrow0.
\]
If $\bY=\bG$ and its auxiliary observable converges in probability to $c$, then $H(\bX)\to c$ in $L^2$.
\end{lemma}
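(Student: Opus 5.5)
The plan is to reduce to the smooth comparison of Theorem~\ref{thm:swap} by mollifying the nonlinearities, and then pass $\delta\to0$ with the stability estimate \eqref{eq:mollstability}.

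\emph{Step 1: mollify and freeze the smoothing scale.} Fix $\delta>0$ and replace $f_t,g_t$ by $f_{t,\delta},g_{t,\delta}$ from Lemma~\ref{lem:C1}. These maps satisfy \eqref{eq:smoothcalc} with a constant $J_t(\delta)\lesssim\delta^{-3}$. The deterministic coefficients $(\bar\lambda_t,\bar\xi_t)$ are kept fixed and are common to both ensembles. Let $H_\delta$ denote the resulting observable. Theorem~\ref{thm:swap} with $p(x)=x^r$ then gives
\[
|\E H_\delta(\bX)^r-\E H_\delta(\bY)^r|\le C_\delta\, n^{-1/2+\varepsilon}\to0
\]
for each fixed $\delta$.

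\emph{Step 2: remove the mollification uniformly in $n$.} Since $\phi\in C_b^{3,1}$, it is bounded and globally Lipschitz. Hence
\[
|H_\delta(\bC)-H(\bC)|\le \frac{C_\phi}{d}\sum_i\|\bv_{i,\delta}-\bv_i\|\le C_\phi\,\max_{\bu}\|\bu_\delta-\bu\|_{\rm av},
\]
using Cauchy--Schwarz and the fixed number of trajectory coordinates. By \eqref{eq:mollstability}, this is at most $C\delta(1+\|\bC\|)^{a_T}$. Both $H$ and $H_\delta$ are bounded by $\|\phi\|_\infty$, so $|x^r-y^r|\le r\|\phi\|_\infty^{r-1}|x-y|$. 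This gives
\[
|\E H_\delta(\bC)^r-\E H(\bC)^r|\le C\delta\,\E(1+\|\bC\|)^{a_T}.
\]
The last expectation is bounded uniformly in $n$ by the sub-Gaussian operator-norm moment bound (Appendix~\ref{app:toolbox}). The total discrepancy is therefore bounded by
\[
2C\delta + C_\delta\, n^{-1/2+\varepsilon}.
\]
Taking $\limsup_n$ and then $\delta\to0$ proves $\E H(\bX)^r-\E H(\bY)^r\to0$.

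\emph{Step 3: the $L^2$ statement.} Take $\bY=\bG$ and suppose $H(\bG)\to c$ in probability. Since $H(\bG)$ is bounded, bounded convergence gives $\E H(\bG)\to c$ and $\E H(\bG)^2\to c^2$. Step 2 with $r=1,2$ transfers these limits to $\bX$. Then
\[
\E(H(\bX)-c)^2=\E H(\bX)^2-2c\,\E H(\bX)+c^2\to0.
\]

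The main obstacle is Step 2. The constant in Theorem~\ref{thm:swap} blows up polynomially in $\delta^{-1}$ through $J_t$. The argument therefore has to be organised as an iterated limit, $n\to\infty$ first and $\delta\to0$ second. It also needs the mollification error to be controlled by a moment of $\|\bC\|$ that is uniform in both $n$ and the ensemble, including the interpolating ensembles, which is exactly what the sub-Gaussian hypothesis \eqref{eq:workingSG} provides. A secondary point is that the coefficients must be deterministic and shared. Mollifying changes the ideal coefficients $\E\partial_1 f_{t,\delta}$, but the lemma is stated for fixed coefficients, and Lemma~\ref{lem:stability} with $e=0$ requires only that they are bounded.
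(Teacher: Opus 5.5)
Your proposal is correct and follows essentially the same route as the paper. You mollify at fixed $\delta$ and apply Theorem~\ref{thm:swap} with shared bounded deterministic coefficients, then bound $\sup_n\E|H_\delta(\bC)-H(\bC)|\le C\delta$ using the Lipschitz bound on $\phi$, \eqref{eq:mollstability}, and the uniform operator-norm moments. You pass to the first four powers by boundedness, take $n\to\infty$ before $\delta\downarrow0$, and obtain the $L^2$ claim from convergence of the first two moments. Your mention of the interpolating ensembles in Step 2 is unnecessary, since the mollification error is only needed at $\bX$ and $\bY$.
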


\begin{proof}
For a fixed $\delta$, apply Theorem~\ref{thm:swap} to the smoothed maps. Theorem~\ref{thm:swap} holds for arbitrary bounded deterministic coefficients. They need not match the state evolution coefficients of the mollified functions.
The bounded gradient of $\phi$, combined with \eqref{eq:mollstability} and the uniform operator-norm moments, implies
\[
\sup_n\E|H_\delta(\bC)-H(\bC)|\le C_V\delta,\qquad \bC \in \{\bX,\bY\}.
\]
Because both observables are bounded by $\|\phi\|_\infty$, the same bound holds for their first four powers. Thus, the limit superior of each unsmoothed moment difference is bounded by $C\delta$. Taking $\delta\downarrow0$ gives the first assertion. For the final assertion, convergence in probability for the bounded observable $H(\bG)$ implies convergence of the first and second moments to $c$ and $c^2$ respectively, establishing $L^2$ convergence.
\end{proof}

\subsection{Pseudo-Lipschitz observables at fixed cutoff}

\begin{lemma}[Empirical moments]\label{lem:empmom}
For an auxiliary trajectory satisfying \eqref{eq:workingSG}, fixed empirical coordinate moments are bounded in expectation uniformly in dimension. For the true trajectory at a fixed $V$, every moment of its normalized global norm is bounded. Thus, a $\PL(2)$ empirical statistic of either trajectory is bounded in $L^2$ uniformly in dimension.
\end{lemma}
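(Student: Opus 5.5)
The lemma has three assertions, and I will prove them in order. The plan is to reduce every statement to moment bounds already established: Lemma~\ref{lem:momprop} for the auxiliary trajectory, and Lemma~\ref{lem:iteratenorms} together with operator-norm moments for the true trajectory. The $L^2$ bound for $\PL(2)$ statistics then follows from the quadratic growth of $\PL(2)$ functions and Cauchy--Schwarz.

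\emph{Step 1: auxiliary trajectory.} Fix $p\ge1$ and a trajectory component $u$. Linearity of expectation gives
\[
\E\Bigl[\frac1d\sum_{i=1}^d|u_i(\bC)|^p\Bigr]=\frac1d\sum_{i=1}^d\E|u_i(\bC)|^p\le\max_i\E|u_i(\bC)|^p .
\]
By \eqref{eq:coordmom} of Lemma~\ref{lem:momprop}, the right side is bounded by $C_{V,T,p}$, uniformly in $n$. The side-data coordinates $x_{0k},w_i$ are i.i.d.\ with all moments finite (Assumption~\ref{ass:signal}), so their empirical moments have constant expectations.

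\emph{Step 2: true trajectory.} Lemma~\ref{lem:iteratenorms}, applied to the true recursion, gives $\|\bu\|_{\rm av}\le C_TK^{a_T}S$, where $K=1+\|\bA\|+C$: the empirical Onsager coefficients are bounded deterministically because $\partial_1f_t$ and $\partial_1g_t$ are bounded. By Cauchy--Schwarz, every moment of $\|\bu\|_{\rm av}$ is controlled by $\E K^{2qa_T}$ and $\E S^{2q}$.
\begin{itemize}
\item The factor $\E S^{2q}$ is bounded because, by Jensen's inequality, $\|\bx_0\|_{\rm av}^{2q}\le\langle|\bx_0|^{2q}\rangle$, and this average has bounded expectation; the same holds for $\bw$.
\item The factor $\E\|\bA\|^{m}$ is bounded uniformly in $n$ by the sub-Gaussian $\varepsilon$-net bound of Appendix~\ref{app:toolbox}, since ``at fixed $V$'' means the matrix obeys \eqref{eq:workingSG}.
\end{itemize}
If instead one only has the truncated ensemble $\bAh_M$, its entries are bounded by $2M/\sqrt{\Var}$, hence sub-Gaussian with some $V=V(M)$, so the same bound applies.

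\emph{Step 3: $\PL(2)$ statistics.} The $\PL(2)$ condition with $\by=0$ gives $|\phi(\bv)|\le|\phi(0)|+L(1+\|\bv\|)\|\bv\|\le C(1+\|\bv\|^2)$. Hence
\[
\Bigl(\frac1d\sum_i\phi(\bv_i)\Bigr)^2\le C\Bigl(1+\sum_{\ell}\|\bu_\ell\|_{\rm av}^2\Bigr)^2 ,
\]
and the right side involves at most fourth moments of the normalized norms, which Steps 1--2 bound. For the auxiliary trajectory one can equivalently use $\frac1d\sum_i\|\bv_i\|^4$, via Jensen's inequality and Step 1.

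The only delicate point is uniformity of the operator-norm moments in Step 2. For general Assumption~\ref{ass:UI} entries these moments are not uniformly available, which is why the statement is made ``at a fixed $V$'', that is, for sub-Gaussian or truncated matrices. Checking that $\bAh_M$ satisfies \eqref{eq:workingSG} uniformly in $n$ for large fixed $M$ is the step I would verify most carefully. It requires the variance denominator to be bounded below uniformly in $(i,j,n)$, which follows from uniform integrability of $U_{ij}^2$ (implied by the fourth-moment bound).
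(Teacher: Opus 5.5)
Your proposal is correct and follows the paper's proof essentially step for step. You sum \eqref{eq:coordmom} for the auxiliary trajectory, and you use Lemma~\ref{lem:iteratenorms} with bounded empirical coefficients, uniform operator-norm moments and H\"older for the true trajectory. You then use the quadratic growth bound $|\phi(\bv)|\le C(1+\|\bv\|_2^2)$ together with fourth moments to get the $L^2$ claim. Your extra checks, namely Jensen for $\E S^{2q}$ and the observation that $\bAh_M$ satisfies \eqref{eq:workingSG} with $V=V(M)$ because the truncated variance is bounded below uniformly, only fill in details the paper leaves implicit.
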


\begin{proof}
Summing \eqref{eq:coordmom} over the coordinates bounds the auxiliary sequence. The true trajectory bound follows from Lemma~\ref{lem:iteratenorms}, bounded empirical coefficients, uniform operator-norm moments, and H\"older's inequality. Because $|\phi(\bv)|\le C(1+\|\bv\|_2^2)$, the empirical average is bounded by $C(1+d^{-1}\sum_i\|\bv_i\|_2^2)$. By the fourth-moment estimate, the expectation of the square of this expression is bounded.
\end{proof}

\begin{lemma}[Cutoff and trajectory transfer]\label{lem:plextension}
Consider the auxiliary recursion \eqref{eq:auxamp} under Assumptions~\ref{ass:functions} and \ref{ass:signal}, with deterministic coefficients uniformly bounded in dimension and a matrix satisfying \eqref{eq:workingSG} at a fixed sub-Gaussian parameter $V$. If its empirical statistics converge in probability to the Gaussian state evolution limits for every bounded smooth observable, then the same convergence holds in probability and $L^1$ for every $\PL(2)$ observable.

More generally, if two trajectories have tight normalized global norms and their normalized Euclidean difference tends to zero in probability, then the difference of their $\PL(2)$ empirical statistics tends to zero in probability.
\end{lemma}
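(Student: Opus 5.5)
I would prove the general transfer statement (second paragraph of Lemma~\ref{lem:plextension}) first, since the first paragraph reduces to it after an approximation step. Let $\bv_i$ and $\bv'_i$ be the coordinate trajectories of the two recursions, and set $\Delta_i=\bv_i-\bv'_i$. For $\phi\in\PL(2)$ with constant $L$, the definition gives
\[
\Bigl|\frac1d\sum_i\phi(\bv_i)-\frac1d\sum_i\phi(\bv'_i)\Bigr|
\le \frac Ld\sum_i(1+\|\bv_i\|+\|\bv'_i\|)\|\Delta_i\|.
\]
By Cauchy--Schwarz this is at most $C L\,(1+\|\bv\|_{\rm av}+\|\bv'\|_{\rm av})\,\|\Delta\|_{\rm av}$, where $\|\bv\|_{\rm av}^2=d^{-1}\sum_i\|\bv_i\|^2$ is the stacked normalized norm. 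The first factor is $O_\PP(1)$ by the tightness hypothesis, and the second is $o_\PP(1)$. So the product tends to zero in probability.

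For the first paragraph I would approximate a general $\phi\in\PL(2)$ in two steps. \emph{Step one: remove the growth.} Let $\chi_K$ be a smooth cutoff equal to $1$ on the ball of radius $K$ and supported on the ball of radius $2K$, and write $\phi=\phi\chi_K+\phi(1-\chi_K)$. The tail term is bounded by $C(1+\|\bv_i\|^2)\ind\{\|\bv_i\|>K\}$. To make this small uniformly in $n$, I would use the coordinate moment bound \eqref{eq:coordmom} with $p=4$: Markov's inequality gives $\E\,d^{-1}\sum_i(1+\|\bv_i\|^2)\ind\{\|\bv_i\|>K\}\le C/K^2$. The corresponding Gaussian tail $\E[\phi(V^\ast)(1-\chi_K)(V^\ast)]$ also vanishes as $K\to\infty$, because $V^\ast$ has all moments. \emph{Step two: smooth the compact part.} The function $\phi\chi_K$ is continuous and compactly supported. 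I would mollify it to $\psi\in C_b^{3,1}$ with $\|\phi\chi_K-\psi\|_\infty\le\eta$, so that both empirical averages and limits move by at most $\eta$. The hypothesis gives convergence for $\psi$. Combining the two steps, $\limsup_n\PP(|H_\phi-\E\phi(V^\ast)|>\epsilon)$ is bounded by $C/(\epsilon K^2)+o(1)$. Letting $\eta\to0$ and then $K\to\infty$ gives convergence in probability.

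For $L^1$ convergence I would use Lemma~\ref{lem:empmom}: the $\PL(2)$ statistic is bounded in $L^2$ uniformly in $n$. This makes the family uniformly integrable, and uniform integrability together with convergence in probability yields $L^1$ convergence.

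The main obstacle is the uniform tail control in step one, namely making the contribution of large coordinates negligible uniformly in $n$. I rely on the fourth-order coordinate moments from the cavity bound \eqref{eq:coordmom} rather than only on the second-order global energy, because squared growth times the indicator needs a moment of order higher than two. The side-data moments of Assumption~\ref{ass:signal} cover the $x_0$ and $w$ coordinates.
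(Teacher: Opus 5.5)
Your proof is correct and follows the paper's argument. The ingredients are the same:
\begin{itemize}
\item a smooth cutoff $\chi_K$, with the tail part controlled uniformly in $n$ by the coordinate moments of Lemma~\ref{lem:momprop} (you fix $p=4$ where the paper allows any $p>2$);
\item mollification of the compactly supported part, to which the bounded-smooth hypothesis applies;
\item uniform integrability from Lemma~\ref{lem:empmom}, which upgrades convergence in probability to $L^1$;
\item the Cauchy--Schwarz bound \eqref{eq:PLtransfer}, which gives the trajectory transfer.
\end{itemize}
The only difference is cosmetic. Your first paragraph does not actually reduce to the second as you announce; it is proved independently, exactly as in the paper.
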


\begin{proof}
Define a smooth cutoff function $\chi_L$ equal to one on the ball of radius $L$ and zero outside the ball of radius $2L$. For any $p>2$, Lemma~\ref{lem:momprop} implies
\[
\E\frac1d\sum_i|\phi(\bv_i)-\chi_L(\bv_i)\phi(\bv_i)|
\le C L^{-(p-2)}\frac1d\sum_i\E(1+\|\bv_i\|_2^p)
\le C_{V,T,p}L^{-(p-2)}.
\]
The compactly supported continuous function $\chi_L\phi$ admits a smooth, compactly supported mollification. Applying bounded smooth convergence, removing the mollification, and taking $L\to\infty$ gives convergence. The Gaussian limit preserves these moment bounds. Lemma~\ref{lem:empmom} implies uniform integrability and $L^1$ convergence.

For the trajectory transfer, Cauchy--Schwarz applied to the $\PL(2)$ definition gives
\begin{equation}\label{eq:PLtransfer}
\left|\frac1d\sum_i\{\phi(\bv_i)-\phi(\tilde{\bv}_i)\}\right|
\le C\left(1+\frac{\|\bv\|_2}{\sqrt d}+\frac{\|\tilde{\bv}\|_2}{\sqrt d}\right)
\frac{\|\bv-\tilde{\bv}\|_2}{\sqrt d},
\end{equation}
where $\bv, \tilde{\bv}$ denote the stacked coordinate vectors. Tightness of the norms bounds the first factor, completing the proof.
\end{proof}

\subsection{Removing matrix truncation}

Assume the fourth-power uniform integrability condition of Assumption~\ref{ass:UI}. Recall that $\mu_{ij}=U_{ij}$ and
\[
\nu_4(M)=\sup_n\sup_{i,j}\E[|U_{ij}|^4;\,|U_{ij}|>M].
\]
Use the truncated and standardized matrix $\bAh_M$ defined in Section~\ref{sec:setting}, with $M$ fixed and sufficiently large. Assumption~\ref{ass:matrix} is included, since it gives $\nu_4(M)\le\nu M^{-(2\alpha-2)}\to0$.

\begin{lemma}[Uniform truncation transfer]\label{lem:truncremoval}
There exists a sequence $\epsilon(M)\to0$ such that
\begin{equation}\label{eq:optrunc}
\sup_n\E\|\bA-\bAh_M\|
\le C\{M^{-1}\nu_4(M)^{1/2}+\nu_4(M)^{1/4}\}=:\epsilon(M).
\end{equation}
Both expected operator norms are bounded uniformly in $n$ and $M$. Auxiliary trajectories with identical deterministic coefficients satisfy, for every $r>0$:
\[
\lim_{M\to\infty}\sup_n\PP\left(\max_{\bu}\|\bu(\bA)-\bu(\bAh_M)\|_{\rm av}>r\right)=0.
\]
Pseudo-Lipschitz statistics satisfy the identical approximation.
\end{lemma}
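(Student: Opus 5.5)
Split $\bA-\bAh_M=\mathbf D^{(1)}+\mathbf D^{(2)}$ and bound each piece, then feed the matrix bound into Lemma~\ref{lem:stability}. Write $U=U_{ij}$, $\bar U=U\ind\{|U|\le M\}$, $m_{ij}=\E\bar U$ and $s_{ij}^2=\Var(\bar U)$, so $U^{(M)}_{ij}=(\bar U-m_{ij})/s_{ij}$. Then
\[
U-U^{(M)}=\bigl(U\ind\{|U|>M\}+m_{ij}\bigr)+(\bar U-m_{ij})\bigl(1-s_{ij}^{-1}\bigr).
\]
Both pieces are independent across $(i,j)$ and centered, because $\E U=0$ gives $m_{ij}=-\E[U;|U|>M]$. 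Using only the fourth-power tail $\nu_4(M)$:
\begin{itemize}
\item $|m_{ij}|\le M^{-3}\nu_4(M)$;
\item $\E[U^2;|U|>M]\le M^{-2}\nu_4(M)$;
\item $\E[U^4;|U|>M]\le\nu_4(M)$;
\item $1-s_{ij}^2=\E[U^2;|U|>M]+m_{ij}^2\le 2M^{-2}\nu_4(M)$, so $|1-s_{ij}^{-1}|\le CM^{-2}\nu_4(M)$ for large $M$.
\end{itemize}

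Apply Lata{\l}a's bound (Theorem~\ref{thm:latala}) to $\mathbf D^{(1)}=n^{-1/2}(U\ind\{|U|>M\}+m_{ij})$.
\begin{itemize}
\item Row term: $\bigl(\sum_j\E D_{ij}^2\bigr)^{1/2}\le (N/n)^{1/2}CM^{-1}\nu_4(M)^{1/2}$. The column term is bounded the same way.
\item Fourth-moment term: $\bigl(\sum_{ij}n^{-2}\E(\cdot)^4\bigr)^{1/4}\le C(N/n)^{1/4}\nu_4(M)^{1/4}$.
\end{itemize}
Apply Theorem~\ref{thm:latala} also to $\mathbf D^{(2)}=n^{-1/2}(\bar U-m)(1-s^{-1})$, which has uniformly bounded fourth moments of the unscaled entries times $|1-s^{-1}|^4$. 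This gives $\E\|\mathbf D^{(2)}\|\le CM^{-2}\nu_4(M)$, which is absorbed. Together these give \eqref{eq:optrunc}. Since $\nu_4(M)\to0$, we have $\epsilon(M)\to0$.

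The uniform bound on $\E\|\bA\|$ follows from Theorem~\ref{thm:latala} with $\sup\E\mu^4<\infty$. The bound on $\E\|\bAh_M\|$ follows from the triangle inequality.

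For the trajectory statement, apply Lemma~\ref{lem:stability} with $\bC=\bA$, $\bC'=\bAh_M$, identical maps ($\delta=0$) and identical coefficients ($e=0$). This gives
\[
\max_{\bu}\|\bu(\bA)-\bu(\bAh_M)\|_{\rm av}\le P_T(1+\|\bA\|+\|\bAh_M\|)\,S\,\|\bA-\bAh_M\|.
\]
Fix $\kappa>0$ and choose $K_0$ so that
\[
\sup_n\PP\bigl(P_T(1+\|\bA\|+\|\bAh_M\|)S>K_0\bigr)<\kappa
\]
uniformly in $M$. This is possible by Markov's inequality on the uniformly bounded expected norms together with tightness of $S$, which follows from Assumption~\ref{ass:signal} and $\E S^2\le C$. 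On the complement, Markov's inequality gives $\PP(K_0\|\bA-\bAh_M\|>r)\le K_0\epsilon(M)/r$. Letting $M\to\infty$ and then $\kappa\downarrow0$ gives the limit.

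For the final sentence, invoke the transfer estimate \eqref{eq:PLtransfer} of Lemma~\ref{lem:plextension}. The normalized trajectory norms are tight uniformly in $M$ by Lemma~\ref{lem:iteratenorms} together with the tightness of the operator norms and of $S$. The difference factor is controlled by the previous display, so $\PL(2)$ statistics differ by $o_\PP(1)$ as $M\to\infty$, uniformly in $n$.

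The main obstacle is the fourth-moment term in Lata{\l}a's bound. It contributes $\nu_4(M)^{1/4}$, with no extra decay in $M$, and only tends to zero because of uniform integrability rather than a higher moment. I will check that the Lata{\l}a constant and the $(N/n)$ factors are uniform in $n$, which holds since $\rho_n\to\rho\in(0,\infty)$.
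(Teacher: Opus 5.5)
Your proposal is correct and follows essentially the paper's proof. It uses the same decomposition $U-U^{(M)}=(Z-\E Z)+(1-s^{-1})(U-Z-\E(U-Z))$ with $Z=U\ind\{|U|>M\}$, Lata{\l}a's bound to obtain $M^{-1}\nu_4(M)^{1/2}+\nu_4(M)^{1/4}$, Lemma~\ref{lem:stability} with tightness of the prefactor and Markov's inequality, and \eqref{eq:PLtransfer} for the $\PL(2)$ statistics. The only difference is cosmetic: you apply Theorem~\ref{thm:latala} to the two pieces separately and bound $\E\|\bAh_M\|$ by the triangle inequality, whereas the paper bounds the second and fourth moments of the combined difference entries, applies Lata{\l}a once, and applies it directly to $\bAh_M$.
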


\begin{proof}
Let $Z := U\ind\{|U|>M\}$ and $s^2 := \Var(U-Z)$ for a generic entry. Centering and unit variance imply $s^2=1-\E Z^2-(\E Z)^2$. The definitions imply $\E Z^2\le M^{-2}\nu_4(M)$ and $\E Z^4\le\nu_4(M)$, so $s\ge1/2$ for large $M$. The decomposition
\[
U-U^{(M)}=(Z-\E Z)+(1-s^{-1})(U-Z-\E(U-Z))
\]
and the bound $|1-s^{-1}|\le C\E Z^2$ give $\E|U-U^{(M)}|^2\le CM^{-2}\nu_4(M)$ and $\E|U-U^{(M)}|^4\le C\nu_4(M)$. Assumption~\ref{ass:UI} implies the standardized truncated fourth moments are uniformly bounded. Theorem~\ref{thm:latala} applied to the independent centered difference entries bounds the variance by $CM^{-1}\nu_4(M)^{1/2}$ and the fourth moment by $C\nu_4(M)^{1/4}$. Theorem~\ref{thm:latala} applied to the individual matrices bounds the expected operator norms uniformly in $n$.

By Lemma~\ref{lem:stability}, the normalized trajectory discrepancy is bounded by a polynomial in the operator norms multiplied by $S\|\bA-\bAh_M\|$. This prefactor is tight in $n$ and $M$. Markov's inequality conditioned on the prefactor being bounded by $L$, followed by $M\to\infty$, proves the limit. Energy tightness and \eqref{eq:PLtransfer} imply the result for pseudo-Lipschitz statistics.
\end{proof}

\subsection{Continuous derivative tests and empirical coefficients}

\begin{lemma}[Continuous tests under small trajectory perturbations]\label{lem:continuoustest}
Assume $d^{-1}\sum_i(\|\bz_i\|_2^2+\|\tilde{\bz}_i\|_2^2)=O_\PP(1)$ and $d^{-1}\sum_i\|\bz_i-\tilde{\bz}_i\|_2^2\to0$ in probability. For any bounded continuous function $a$, the empirical average of $|a(\bz_i)-a(\tilde{\bz}_i)|$ tends to zero in probability. The pathwise equivalent holds if the hypotheses hold almost surely. Convergence for bounded Lipschitz tests together with tight empirical second moments implies convergence for all bounded continuous tests.
\end{lemma}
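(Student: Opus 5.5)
The plan is to prove the three assertions of Lemma~\ref{lem:continuoustest} in turn, using a compact-set truncation combined with uniform continuity on compacts.

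\emph{First assertion.} Fix $\varepsilon>0$ and a radius $L$. Split the index set into two parts. The good indices satisfy $\|\bz_i\|_2\le L$ and $\|\bz_i-\tilde{\bz}_i\|_2\le\eta$. Every other index is bad.

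On the good set, $a$ is uniformly continuous on the ball of radius $L+1$. We can therefore choose $\eta\le1$, depending on $L$ and $\varepsilon$, so that $|a(\bz_i)-a(\tilde{\bz}_i)|\le\varepsilon$. The good indices then contribute at most $\varepsilon$ to the average.

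On the bad set, bound each term by $2\|a\|_\infty$ and count the bad indices with Markov/Chebyshev:
\[
\frac1d\#\{i:\|\bz_i\|_2>L\}\le \frac{1}{L^2}\,\frac1d\sum_i\|\bz_i\|_2^2,\qquad
\frac1d\#\{i:\|\bz_i-\tilde{\bz}_i\|_2>\eta\}\le \frac{1}{\eta^2}\,\frac1d\sum_i\|\bz_i-\tilde{\bz}_i\|_2^2 .
\]
The first fraction is $O_\PP(1)/L^2$, so it is small with high probability once $L$ is large, uniformly in $d$ by tightness. The second fraction tends to zero in probability for fixed $\eta$.

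To conclude, fix $\varepsilon$ and pick $L$ so that $\PP(\text{first fraction}>\varepsilon)<\varepsilon$ for all large $d$. Then pick $\eta(L,\varepsilon)$, and let $d\to\infty$. The average is at most $\varepsilon+2\|a\|_\infty(2\varepsilon)$ with probability at least $1-2\varepsilon$ eventually, which gives convergence to zero in probability.

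\emph{Pathwise version.} The same decomposition works realization by realization. On the almost-sure event, $\limsup_d$ of the second-moment average is finite and the difference average tends to zero. Choosing $L$ depending on the realization then gives $\limsup\le C\varepsilon$ for every $\varepsilon$.

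\emph{Third assertion.} Suppose the empirical measures $\mu_d=d^{-1}\sum_i\delta_{\bz_i}$ satisfy $\int\psi\,d\mu_d\to\int\psi\,d\mu$ in probability for bounded Lipschitz $\psi$. Here $\mu$ is the limit law, which has a finite second moment, as is the case for the Gaussian state-evolution limits. Let $a$ be bounded continuous.
\begin{enumerate}
\item Multiply by the Lipschitz cutoff $\chi_L$, equal to $1$ on $B_L$ and $0$ off $B_{2L}$. The error $\int|a|(1-\chi_L)\,d\mu_d$ is at most $\|a\|_\infty L^{-2}\int\|\bz\|^2d\mu_d$. This is small with high probability by tightness, and the same bound holds for $\mu$.
\item The function $a\chi_L$ is continuous with compact support. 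By Stone--Weierstrass, or by mollification followed by uniform continuity, it is a uniform limit of bounded Lipschitz functions $\psi_k$ with $\|a\chi_L-\psi_k\|_\infty\le\varepsilon$.
\item Apply the Lipschitz convergence to $\psi_k$.
\end{enumerate}
A three-$\varepsilon$ argument then finishes the proof.

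The main obstacle is ordering the quantifiers so that the choice of $\eta$ depends on $L$ while tightness remains uniform in $d$. This ordering is handled above. There are no deep estimates required.
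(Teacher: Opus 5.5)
Your proposal is correct and follows essentially the paper's route: a good/bad index split controlled by the modulus of continuity of $a$ on a ball, Markov bounds from the two empirical second-moment averages, the quantifiers ordered as $L$, then $\eta$, then $d\to\infty$, and the same argument repeated realization by realization for the pathwise version. For the final assertion, the paper approximates $a$ uniformly on a ball by a Lipschitz function and then applies a Lipschitz extension and range truncation, whereas you mollify the compactly supported function $a\chi_L$ directly; this is an equivalent and slightly more self-contained way to produce the bounded Lipschitz test.
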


\begin{proof}
Let $\omega_L(\eta)$ denote the modulus of continuity of $a$ on the ball of radius $L$. The average discrepancy satisfies the bound
\[
\omega_L(\eta)+2\|a\|_\infty
\left\{\frac1{dL^2}\sum_i(\|\bz_i\|_2^2+\|\tilde{\bz}_i\|_2^2)
+\frac1{d\eta^2}\sum_i\|\bz_i-\tilde{\bz}_i\|_2^2\right\}.
\]
The result follows by invoking the tightness hypothesis to choose $L$, choosing small $\eta$, and taking $d \to \infty$. The pathwise argument is identical. For the final assertion, fix a compact ball and approximate the continuous function uniformly on that ball by a Lipschitz function. Extend this approximation to the whole space via a Lipschitz extension theorem, and truncate its range if necessary to obtain a bounded Lipschitz function on the whole space. Apply the assumed convergence to this extension and control the complement of the ball using the tight empirical second moments.
\end{proof}

\begin{proposition}[Restoration of empirical Onsager coefficients]\label{prop:onsager}
Assume the auxiliary trajectory attains the $\PL(2)$ state evolution limits, $\|\bA\|=O_\PP(1)$, and Assumptions~\ref{ass:functions}--\ref{ass:signal} hold. The empirical coefficients converge to \eqref{eq:onsagerdet}, and the maximum normalized discrepancy between the true and auxiliary trajectories tends to zero in probability.
\end{proposition}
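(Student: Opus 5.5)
The proof is an induction on the time index $t\le T$. At each step we transfer the convergence $\lambda_t\to\bar\lambda_t$, $\xi_t\to\bar\xi_t$ and the smallness of the trajectory discrepancy forward.

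Let $(\bh^t,\bb^t,\bm^t,\bq^t)$ denote the true iterates of \eqref{eq:amp}, built with empirical coefficients \eqref{eq:onsager}. Let $(\tilde\bh^t,\tilde\bb^t,\tilde\bm^t,\tilde\bq^t)$ denote the auxiliary iterates of \eqref{eq:auxamp}, built with $\bC=\bA$ and the deterministic coefficients \eqref{eq:onsagerdet}. The inductive hypothesis at stage $t$ has two parts. First, $\max\|\bu-\tilde\bu\|_{\rm av}\to0$ in probability for all iterates computed before the coefficient $\lambda_t$ is used, that is, up to $\bh^t$, $\bq^t$ and $\bm^{t-1}$. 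Second, the coefficients already used satisfy $|\lambda_s-\bar\lambda_s|+|\xi_s-\bar\xi_s|\to0$ in probability. At $t=0$ we have $\bh^0=\tilde\bh^0=\mathbf 0$, so $\bq^0=\tilde\bq^0$ and the base case is trivial.

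\emph{Coefficient step.} Write $\lambda_t=\rho_n^{-1}\langle\partial_1 f_t(\bh^t,\bx_0)\rangle$. The function $a=\partial_1 f_t$ is bounded and continuous, and both trajectories have tight empirical second moments: Lemma~\ref{lem:iteratenorms} together with $\|\bA\|=O_\PP(1)$ and Lemma~\ref{lem:slln} for $S$. Lemma~\ref{lem:continuoustest}, applied to $\bz_i=(h^t_i,x_{0i})$ and $\tilde\bz_i=(\tilde h^t_i,x_{0i})$, therefore shows that $\lambda_t$ and $\rho_n^{-1}\langle\partial_1 f_t(\tilde\bh^t,\bx_0)\rangle$ differ by $o_\PP(1)$.

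Next we show $\langle\partial_1 f_t(\tilde\bh^t,\bx_0)\rangle\to\rho\bar\lambda_t$. The auxiliary trajectory attains the $\PL(2)$ limits, so bounded Lipschitz tests of $(\tilde h^t_i,x_{0i})$ converge to their Gaussian expectations. The last assertion of Lemma~\ref{lem:continuoustest}, combined with tight second moments, upgrades this to bounded continuous tests. The limit is $\E\partial_1 f_t(H^t,X_0)=\rho\bar\lambda_t$. Since $\rho_n\to\rho$, we conclude $\lambda_t\to\bar\lambda_t$ in probability. The same argument with $\partial_1 g_t$ and the pair $(\bb^t,\tilde\bb^t)$ gives $\xi_t\to\bar\xi_t$ once $\bb^t$ is controlled.

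\emph{Propagation step.} Subtract the recursions:
\[
\bb^t-\tilde\bb^t=\bA(\bq^t-\tilde\bq^t)-\lambda_t(\bm^{t-1}-\tilde\bm^{t-1})-(\lambda_t-\bar\lambda_t)\tilde\bm^{t-1}.
\]
We bound the three terms in normalized norm. The first is at most $C_\rho\|\bA\|\,\|\bq^t-\tilde\bq^t\|_{\rm av}$. The second is at most $\|\partial_1f_t\|_\infty\rho_n^{-1}\|\bm^{t-1}-\tilde\bm^{t-1}\|_{\rm av}$, since $|\lambda_t|$ is bounded by this constant. The third is at most $|\lambda_t-\bar\lambda_t|\,C_TK^{a_T}S$ by Lemma~\ref{lem:iteratenorms}. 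Each term is $o_\PP(1)$: the first and second are $O_\PP(1)\cdot o_\PP(1)$, and the third is $o_\PP(1)\cdot O_\PP(1)$.

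The Lipschitz property of $g_t$ then gives $\|\bm^t-\tilde\bm^t\|_{\rm av}\le L\|\bb^t-\tilde\bb^t\|_{\rm av}=o_\PP(1)$. We now run the coefficient step for $\xi_t$. The analogous identity for $\bh^{t+1}$ with $\bA^*$ and $\xi_t$, followed by the Lipschitz property of $f_{t+1}$, closes the induction.

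The argument is essentially the quantitative content of Lemma~\ref{lem:stability}. It cannot be quoted directly, because the coefficients here are random. The fix is to freeze the realized coefficients pathwise, as in the proof of Theorem~\ref{thm:BM}, and apply Lemma~\ref{lem:stability} with $e=\max_{s\le T}(|\lambda_s-\bar\lambda_s|+|\xi_s-\bar\xi_s|)$. This requires two bounds: $\max_s(|\lambda_s|+|\xi_s|)$ must be at most a deterministic constant $B$, which holds by Assumption~\ref{ass:functions}; and $\|\bA\|=O_\PP(1)$. Taken alone this would be circular, because the coefficients at later times depend on the discrepancy. The step-by-step induction above breaks the circularity: $e$ is only ever evaluated on the coefficients already shown to converge.

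The main obstacle is the coefficient step. The test function $\partial_1f_t$ is merely bounded and continuous, not Lipschitz, and the $\PL(2)$ convergence hypothesis does not apply to it directly. It is handled by first approximating with bounded Lipschitz functions on a large ball, which Lemma~\ref{lem:continuoustest} does via tight second moments. A related caveat is that Gaussian coordinates where the state variance $\E(H^t)^2=0$ are degenerate. There $\tilde h^t_i\to0$ empirically and $\partial_1f_t(0,x)$ is still continuous, so no positivity assumption is needed.

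Finally, the maximum over $t\le T$ is a finite maximum of $o_\PP(1)$ terms, which gives the stated discrepancy conclusion.
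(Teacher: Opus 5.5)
Your proposal is correct and follows the paper's proof essentially step for step. Both arguments induct on $t$ and use Lemma~\ref{lem:continuoustest} to replace $\lambda_t$ (resp.\ $\xi_t$) by the auxiliary empirical average. Both then identify $\bar\lambda_t$ (resp.\ $\bar\xi_t$) by upgrading the auxiliary $\PL(2)$ limits to bounded continuous tests, and propagate the discrepancy by subtracting the $\bb$- and $\bh$-updates, where the coefficient error enters as $o_\PP(1)\cdot O_\PP(1)$. Your aside about freezing coefficients in Lemma~\ref{lem:stability} is unnecessary, since the stepwise subtraction already does the work.
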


\begin{proof}
Both recursions have tight normalized global norms via Lemma~\ref{lem:iteratenorms}, and agree on the coordinates $\bh^0,\bq^0,\bm^{-1}$. Assume inductively that the normalized discrepancies up to $\bh^t$ and $\bm^{t-1}$ vanish. Bounded, jointly continuous first state derivatives and Lemma~\ref{lem:continuoustest} imply
\[
\left|\lambda_t-\rho_n^{-1}\langle\partial_1f_t(\tilde{\bh}^t,\bx_0)\rangle\right|=o_\PP(1).
\]
The scaled auxiliary average converges to $\bar\lambda_t$ via auxiliary state evolution and the bounded-continuous-test condition. Subtracting the $\bb$-updates gives a vanishing normalized discrepancy, because $(\lambda_t-\bar\lambda_t)\tilde{\bm}^{t-1}$ is $o_\PP(1)$ times a vector of tight normalized norm. The same argument applies to $\bm^t$. Applying this argument to $\partial_1g_t$ at $\bb^t$ gives $\xi_t-\bar\xi_t=o_\PP(1)$. Subtracting the transpose update bounds $\bh^{t+1}$, completing the induction. Because $\rho_n^{-1}\to\rho^{-1}$, the scaled auxiliary average converges. The deterministic coefficient error is bounded by $\|\partial_1f_t\|_\infty|\rho_n^{-1}-\rho^{-1}|=o(1)$.
\end{proof}

\begin{proof}[Proof of Theorem~\ref{thm:main}]
Under either Assumption~\ref{ass:UI} or Assumption~\ref{ass:matrix} (which implies $\nu_4(M)\le M_{2+2\alpha}M^{-(2\alpha-2)}$), $\nu_4(M)\to0$ as $M \to \infty$. For each large fixed $M$, $\bAh_M$ satisfies \eqref{eq:workingSG} with a fixed, $M$-dependent parameter. Applying the Gaussian auxiliary limit of Theorem~\ref{thm:BM}, Lemma~\ref{lem:varbound} establishes bounded smooth auxiliary convergence. Lemma~\ref{lem:plextension} extends this to $\PL(2)$ observables. Lemma~\ref{lem:truncremoval} transfers these limits to the original auxiliary trajectory via the limits $n\to\infty$ and $M\to\infty$. 

Proposition~\ref{prop:onsager} restores the empirical coefficients, allowing \eqref{eq:PLtransfer} to transfer the trajectory statistics. By boundedness and convergence in probability, bounded statistics converge in $L^1$. Because the true statistics are uniformly bounded in $L^2$ at a fixed matrix cutoff (Lemma~\ref{lem:empmom}), the $L^1$ convergence extends to unbounded tests at that fixed cutoff. For the subsequence assertion, let $Y_n$ be the maximum absolute error in \eqref{eq:mainh} and \eqref{eq:mainb}. The convergence in probability just proved allows a deterministic choice $n_j>n_{j-1}$ with $\PP(Y_{n_j}>2^{-j})\le2^{-j}$. Borel--Cantelli gives $Y_{n_j}\to0$ almost surely. This proves Theorem~\ref{thm:main}.
\end{proof}

\section{Aggregate Variance Normalization Is Insufficient}\label{sec:sharp}

Aggregate variance normalization does not imply the conclusion of Theorem~\ref{thm:main}. The following two-row-group example satisfies $\frac1n\sum_i\E[nA_{ij}^2]=1$ for every column, but the first row multiplication $\bb^0$ has a non-Gaussian empirical mixture limit. An aggregate-normalized class does not satisfy the homogeneous state evolution conclusion.

\begin{theorem}[Aggregate normalization is insufficient]\label{thm:sharpvar}
Let Assumptions~\ref{ass:functions}--\ref{ass:signal} hold. Assume $n$ is even and $\sigma_0^2>0$. Let $\{\mu_{ij}\}$ be an array satisfying Assumption~\ref{ass:matrix}. Define the matrix $\bA$ by
\[
A_{ij}:=\sqrt{v_i}\,\mu_{ij}/\sqrt n ,\qquad
v_i:=\begin{cases} c_1, & i\le n/2,\\ c_2, & i>n/2,\end{cases}
\]
with $c_1\neq c_2$, $c_1,c_2\ge0$, and $\tfrac12(c_1+c_2)=1$. This matrix satisfies the aggregate normalization $\frac1n\sum_i\E[nA_{ij}^2]=1$, but $\bA\notin\cI$. At $t=0$, for every bounded continuous $\psi$,
\[
\frac1n\sum_{i=1}^n\psi(b^0_i)\ \xrightarrow{\ \PP\ }\ \tfrac12\,\E\,\psi\left(\sqrt{c_1}\,\sigma_0Z\right)+\tfrac12\,\E\,\psi\left(\sqrt{c_2}\,
\sigma_0Z\right).
\]
For $c_1\neq c_2$ and $\sigma_0^2>0$, there exists a bounded continuous test function, for example $\psi(x)=\cos x$, for which this limit differs from $\E\psi(\sigma_0Z)$. The conclusion of Theorem~\ref{thm:main} fails, and state evolution \eqref{eq:SE} does not extend to aggregate variance normalizations. General variance profiles require profile-dependent formulations \cite{JM13}.
\end{theorem}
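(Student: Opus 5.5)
At $t=0$ the first iterate is $\bb^0=\bA\bq^0$ with $\bq^0=f_0(\mathbf 0,\bx_0)$. Because $\bm^{-1}=\mathbf 0$, no Onsager term or matrix reuse is involved, so the plan is to condition on $\bx_0$ and treat each $b^0_i$ as a weighted row sum. Since $\bx_0$ is independent of $\bA$, conditionally on $\bx_0$ the coordinates
\[
b^0_i=\sqrt{v_i}\,n^{-1/2}\sum_j\mu_{ij}q^0_j
\]
are independent across $i$. Set $s_N^2:=\rho_n^{-1}N^{-1}\sum_j(q^0_j)^2$. By Lemma~\ref{lem:slln}, using that $X_0$ has all moments and $f_0(0,\cdot)$ has linear growth, $s_N^2\to\rho^{-1}\E f_0(0,X_0)^2=\sigma_0^2$ almost surely. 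Also, $\max_j|q^0_j|^2/N\to0$ almost surely, because $\E|f_0(0,X_0)|^{8}<\infty$ gives $\max_j|q_j^0|=o(N^{1/4})$.

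Next I would establish a uniform CLT for each row. Conditionally on $\bx_0$, the standardized sum $n^{-1/2}\sum_j\mu_{ij}q^0_j/s_N$ has unit variance. Its Lyapunov ratio is bounded by
\[
n^{-3/2}\sum_j\E|\mu_{ij}|^3|q^0_j|^3/s_N^3\le C\,M_3\,\max_j|q^0_j|\,n^{-1/2}\to0
\]
uniformly in $i$, where $M_3$ is finite under Assumption~\ref{ass:matrix}. By Berry--Esseen (or Lindeberg), the conditional law of $b^0_i$ is therefore within $o(1)$ in Kolmogorov distance of $\mathcal N(0,v_is_N^2)$, uniformly in $i$. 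For bounded continuous $\psi$, a smoothing argument upgrades this to
\[
\E[\psi(b^0_i)\mid\bx_0]-\E\psi(\sqrt{v_i}\,s_NZ)\to0
\]
uniformly in $i$. Concretely, approximate $\psi$ uniformly on compacts by Lipschitz functions and use tightness. Weak convergence gives it for each fixed $i$; uniformity follows because the Berry--Esseen bound is uniform. Averaging over the two row groups and using $s_N\to\sigma_0$ shows that the conditional mean of $\frac1n\sum_i\psi(b^0_i)$ converges to $\tfrac12\E\psi(\sqrt{c_1}\sigma_0Z)+\tfrac12\E\psi(\sqrt{c_2}\sigma_0Z)$. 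The degenerate case $c_k=0$ is trivial: then $b^0_i=0$ for those rows.

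For concentration, conditional independence across rows bounds the conditional variance of $\frac1n\sum_i\psi(b^0_i)$ by $\|\psi\|_\infty^2/n$. Chebyshev's inequality, combined with the conditional-mean limit and dominated convergence over $\bx_0$, then yields convergence in probability.

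Finally, I would verify non-homogeneity with $\psi=\cos$. Here $\E\cos(aZ)=e^{-a^2/2}$, so the mixture limit equals $\tfrac12(e^{-c_1\sigma_0^2/2}+e^{-c_2\sigma_0^2/2})$. Because $x\mapsto e^{-x\sigma_0^2/2}$ is strictly convex when $\sigma_0^2>0$ and $c_1\ne c_2$, Jensen's inequality shows this is strictly larger than $e^{-\sigma_0^2/2}=\E\cos(\sigma_0Z)$, using $\tfrac12(c_1+c_2)=1$. The limit therefore differs from the state evolution value $\E\psi(B^0)$. Bounded continuous tests of $b^0$ alone form a subclass of the functions in \eqref{eq:mainb}, since $\cos$ is bounded and Lipschitz and hence $\PL(2)$, so the conclusion of Theorem~\ref{thm:main} fails. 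The aggregate normalization identity is immediate from $\E\mu_{ij}^2=1$ and $\frac1n\sum_iv_i=1$. The only mildly delicate step is making the CLT uniform over rows, and the Berry--Esseen bound with the $\max_j|q_j^0|$ control handles it.
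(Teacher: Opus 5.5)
Your proposal is correct and follows essentially the same route as the paper. You condition on the side data (the paper conditions on $\bq^0$), prove a row-wise CLT that is uniform within each variance group, use conditional independence across rows with Chebyshev's inequality, and separate the mixture from $\E\cos(\sigma_0 Z)$ by strict convexity of $c\mapsto e^{-c\sigma_0^2/2}$. The differences are cosmetic: you use Berry--Esseen with third moments and $\max_j|q^0_j|=o(\sqrt n)$, whereas the paper uses a $(2+2\alpha)$-Lyapunov bound through the Lindeberg comparison of Proposition~\ref{prop:lindebergt0}. You should also add the one-line check that $\E[nA_{ij}^2]=v_i\neq1$, which gives $\bA\notin\cI$.
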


\begin{proof}[Proof of Theorem~\ref{thm:sharpvar}]
The vector $\bq^0=f_0(\mathbf 0,\bx_0)$ is independent of $\bA$, and the entries $b^0_i=\sum_jA_{ij}q^0_j$ are conditionally independent given $\bq^0$. Fix a bounded Lipschitz function $\psi$. If $v_i=0$, $b_i^0=0$ and $\E[\psi(b^0_i)\mid\bq^0]=\psi(0)$. For $v_i>0$, $b^0_i=\sum_j\sqrt{v_i}\,\mu_{ij}q^0_j/\sqrt n$ is a conditional sum of independent centered terms with conditional variance $v_i\norm{\bq^0}^2/n$ and conditional Lyapunov numerator bounded by $v_i^{1+\alpha}\nu\,n^{-(1+\alpha)}\norm{\bq^0}_{2+2\alpha}^{2+2\alpha}$.

Applying Lemma~\ref{lem:slln} to $f_0(0,x_{0j})^2$ and $\abs{f_0(0,x_{0j})}^{2+2\alpha}$ implies $\norm{\bq^0}^2/N\to\E f_0(0, X_0)^2=\rho\sigma_0^2$ and $\norm{\bq^0}_{2+2\alpha}^{2+2\alpha}/N\to \E\abs{f_0(0, X_0)}^{2+2\alpha}<\infty$ almost surely. For each positive-weight row, the Lyapunov condition holds. By the one-dimensional Lindeberg comparison of Proposition~\ref{prop:lindebergt0},
\[
\E\left[\psi(b^0_i)\,\big|\,\bq^0\right]\longrightarrow\E\,\psi\left(\sqrt{
v_i}\,\sigma_0Z\right),
\]
uniformly over rows with identical $v_i$. Conditionally on $\bq^0$, the summands $\psi(b^0_i)$ are independent and bounded. The variance of their conditional average is bounded by $\norm\psi_\infty^2/n$. The conditional Chebyshev inequality and bounded convergence imply
\[
\frac1n\sum_{i=1}^n\psi(b^0_i)\xrightarrow{\ \PP\ }\tfrac12\E\psi(\sqrt{c_1}
\sigma_0Z)+\tfrac12\E\psi(\sqrt{c_2}\sigma_0Z).
\]
Set $\psi(x)=\cos(\theta x)$. For $c_1\neq c_2$ and $\theta\ne0$, the convexity of $c\mapsto e^{-c\sigma_0^2\theta^2/2}$ implies $\tfrac12e^{-c_1\sigma_0^2\theta^2/2}+\tfrac12e^{-c_2\sigma_0^2\theta^2/2} >e^{-\sigma_0^2\theta^2/2}$. The limit differs from the state evolution prediction $\E\psi(\sigma_0Z)$.

This limit extends to bounded continuous tests. Conditional weak convergence implies convergence of the row expectations, uniformly over rows with identical $v_i$ due to the uniform Lyapunov bound and tightness. If $c_j=0$, the variables are zero and the limit is degenerate.
\end{proof}

\section{Almost-Sure Convergence and the Moment Boundary}\label{sec:upgrades}

This section compares fourth centered moments, gives conditions for almost-sure transfer, and examines the moment assumptions.

\subsection{Fourth-moment concentration and almost-sure convergence}\label{sec:as}

\begin{proposition}[Fourth centered moment comparison]\label{prop:fourth}
Let $H_X,H_G$ be the auxiliary observables of Theorem~\ref{thm:swap}, with sub-Gaussian proxy $V$, maps satisfying \eqref{eq:smoothcalc}, and $\phi\in C_b^{3,1}$. For $\varepsilon>0$,
\[
\left|\E(H_X-\E H_X)^4-\E(H_G-\E H_G)^4\right|
\le C_{T,\phi,V,\varepsilon}n^{-1/2+\varepsilon}.
\]
The non-Gaussian centered fourth moment is bounded by the Gaussian centered fourth moment plus this error. Under Assumption~\ref{ass:functions} alone, difference is $o(1)$. 
\end{proposition}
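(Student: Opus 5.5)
Expanding the centered fourth moment reduces the statement to the raw-moment comparisons already available. Write
\[
\E(H-\E H)^4=\E H^4-4\E H^3\,\E H+6\E H^2(\E H)^2-3(\E H)^4 .
\]
The plan is to control each raw moment $\E H^r$, $r\in\{1,2,3,4\}$, separately for $H_X$ and $H_G$, and then pass the differences through this polynomial identity.

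For the quantitative bound under \eqref{eq:smoothcalc}, apply Theorem~\ref{thm:swap} with the polynomials $p(x)=x^r$. Here $\bX$ is the sub-Gaussian ensemble and $\bY=\bG$. The Gaussian entries satisfy \eqref{eq:workingSG} with a fixed proxy, and the variances match. This gives $|\E H_X^r-\E H_G^r|\le C n^{-1/2+\varepsilon}$ for each $r\le4$.

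Since $\phi\in C_b^{3,1}$, both observables are bounded by $\|\phi\|_\infty$, so every raw moment is bounded by $\|\phi\|_\infty^r$. Each product in the expansion is a product of at most four such moments. Telescoping the product differences, for instance
\[
\E H_X^3\E H_X-\E H_G^3\E H_G=(\E H_X^3-\E H_G^3)\E H_X+\E H_G^3(\E H_X-\E H_G),
\]
bounds each product difference by $C_\phi$ times the sum of the raw-moment differences. Summing the four terms then gives the stated $C_{T,\phi,V,\varepsilon}n^{-1/2+\varepsilon}$ bound. The second sentence of the proposition follows from the triangle inequality.

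The last claim requires only Assumption~\ref{ass:functions}, so the maps are $C^1$ but not smooth. Lemma~\ref{lem:varbound} already gives $\E H_X^r-\E H_G^r\to0$ for $r\le4$. It obtains this by mollifying through Lemma~\ref{lem:C1} and using the uniform estimate $\sup_n\E|H_\delta-H|\le C\delta$. The same telescoping with bounded moments then makes the centered fourth-moment difference $o(1)$.

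I do not expect a real obstacle here. The one point to check is that the constant in Theorem~\ref{thm:swap} for $p(x)=x^r$ depends only on $(V,T,\phi,p,\varepsilon)$, and that the bounded deterministic coefficients are shared by the two ensembles. Both hold by hypothesis.
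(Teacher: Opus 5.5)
Your proposal is correct and follows the paper's own proof. Both expand the centered fourth moment in raw moments of order one through four and compare those moments via Theorem~\ref{thm:swap} with $p(x)=x^r$. Both then use $|H|\le\|\phi\|_\infty$ so the expansion is Lipschitz in the raw moments (your telescoping step), and both invoke Lemma~\ref{lem:varbound} for the qualitative $o(1)$ statement under $C^1$ maps.
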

\begin{proof}
Let $\mu=\E H$. The identity is
\[
\E(H-\mu)^4=\E H^4-4\mu\E H^3+6\mu^2\E H^2-3\mu^4.
\]
Theorem~\ref{thm:swap} compares raw moments up to order four. Because $|H|\le\|\phi\|_\infty$, the polynomial is Lipschitz on the bounded set of raw moments. Summing the comparison errors gives the bound. Lemma~\ref{lem:varbound} gives qualitative raw-moment comparison for $C^1$ maps. For coefficients \eqref{eq:onsagerdet}, Gaussian state evolution and boundedness imply the Gaussian centered fourth moment vanishes asymptotically. The proof compares moments without coupling $H_X$ and $H_G$.

\end{proof}

\begin{theorem}[Almost-sure subsequences and transfer]\label{thm:as}
Under Assumptions~\ref{ass:matrix}--\ref{ass:signal}:
\begin{enumerate}
\item[(a)] For maps satisfying \eqref{eq:smoothcalc}, $\phi \in C_b^{3,1}$, and a fixed truncation level, auxiliary statistics converge almost surely along any sequence $(n_j)$ satisfying $\sum_j[\E(H_{G,n_j}-\E H_{G,n_j})^4+C_{T,\phi, M,\varepsilon}n_j^{-1/2+\varepsilon}]<\infty$ for some $0<\varepsilon<1/2$. The trajectories use coefficients \eqref{eq:onsagerdet}, $H_{G,n_j}$ is the Gaussian auxiliary statistic, and the constant corresponds to Proposition~\ref{prop:fourth}.
\item[(b)] For a finite collection of pseudo-Lipschitz tests and horizons, there exists a deterministic subsequence on which all true-trajectory statistics converge almost surely. A diagonal argument extends this to countable collections.
\item[(c)] Suppose that, on a common probability space, another sequence $\mathbf C_n$ has the same dimensions and side data, its true trajectory satisfies the $\PL(2)$ limits almost surely, $\norm{\bA_n-\mathbf C_n}\to0$ almost surely, and $\norm{\bA_n}+\norm{\mathbf C_n}$ and the normalized side-data norms are bounded almost surely. Then the true trajectory of $\bA_n$ has the same almost-sure limits.
\end{enumerate}
\end{theorem}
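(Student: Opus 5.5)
We prove the three parts separately. Parts (a) and (b) use Borel--Cantelli together with the moment and probability estimates already established. Part (c) is a deterministic, pathwise stability argument built on Lemma~\ref{lem:stability} and Lemma~\ref{lem:continuoustest}.

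\emph{Part (a).} Fix the truncation level $M$. Then $\bAh_M$ satisfies \eqref{eq:workingSG} with parameter $V=V(M)$, so Proposition~\ref{prop:fourth} applies to the auxiliary statistic $H_X$, with $\bG$ as the Gaussian comparison. For any $r>0$, Markov's inequality at fourth order gives
\[
\PP\bigl(|H_{X,n_j}-\E H_{X,n_j}|>r\bigr)\le r^{-4}\bigl[\E(H_{G,n_j}-\E H_{G,n_j})^4+C_{T,\phi,M,\varepsilon}n_j^{-1/2+\varepsilon}\bigr].
\]
By hypothesis the right side is summable in $j$, so Borel--Cantelli gives $H_{X,n_j}-\E H_{X,n_j}\to0$ almost surely. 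It remains to identify the limit of the means. Theorem~\ref{thm:swap} gives $\E H_{X,n}-\E H_{G,n}\to0$. Theorem~\ref{thm:BM}, with the coefficients \eqref{eq:onsagerdet} and the bounded test function, gives $\E H_{G,n}\to\E\phi(V^\ast)$ by $L^1$ convergence. Hence the deterministic centering converges and the almost-sure limit is the state evolution value. Here $V^\ast$ denotes $V^\ast_H$ or $V^\ast_B$, depending on the observable.

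\emph{Part (b).} Let $Y_n$ be the maximum of the finitely many errors. By Theorem~\ref{thm:main}, $Y_n\to0$ in probability. As in the proof of Theorem~\ref{thm:main}, choose $n_j$ increasing with $\PP(Y_{n_j}>2^{-j})\le2^{-j}$; Borel--Cantelli then gives almost-sure convergence. For a countable family $\{\phi_k\}$, proceed diagonally. Extract a subsequence for $\phi_1$, then a further subsequence for $\phi_2$, and so on, and take the diagonal. Alternatively, apply the construction at stage $j$ to the maximum over $k\le j$.

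\emph{Part (c).} This is the step requiring care, because Lemma~\ref{lem:stability} compares recursions with \emph{fixed} coefficients, whereas the true recursions use empirical Onsager coefficients, which differ between $\bA_n$ and $\mathbf C_n$. We argue by induction on $t$, pathwise on the almost-sure event where all hypotheses hold. Suppose the normalized discrepancies of $\bh^s,\bq^s$ for $s\le t$ and of $\bm^{s}$ for $s\le t-1$ vanish, and that the norms are bounded. Lemma~\ref{lem:iteratenorms} supplies the norm bounds from the bounded operator norms and side-data norms. Then Lemma~\ref{lem:continuoustest}, in its pathwise form applied to the bounded continuous function $\partial_1 f_t$, gives $|\lambda_t(\bA_n)-\lambda_t(\mathbf C_n)|\to0$.

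Next we subtract the $\bb^t$ updates. The difference is
\[
\bA_n\Delta\bq^t+(\bA_n-\mathbf C_n)\bq^t_{C}-\lambda_t\Delta\bm^{t-1}-(\lambda_t-\lambda_t')\bm^{t-1}_C .
\]
Each term vanishes in $\|\cdot\|_{\rm av}$: the first two by boundedness of $\|\bA_n\|$ together with $\|\bA_n-\mathbf C_n\|\to0$, and the last two by bounded empirical coefficients and the induction hypothesis. The Lipschitz property of $g_t$ then controls $\Delta\bm^t$. The same argument with $\partial_1 g_t$ and the transpose update handles $\xi_t$ and $\bh^{t+1}$.

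After $T$ steps, the normalized trajectory differences tend to zero almost surely while the norms remain bounded. The pathwise version of \eqref{eq:PLtransfer} then transfers the almost-sure $\PL(2)$ limits from $\mathbf C_n$ to $\bA_n$. The only delicate point is that the continuous-derivative step must be pathwise rather than in probability, and Lemma~\ref{lem:continuoustest} explicitly provides this.
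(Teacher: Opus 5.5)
Your proposal is correct and follows essentially the same route as the paper. Part (a) uses fourth-order Markov with Proposition~\ref{prop:fourth}, then Borel--Cantelli, then identification of the means via Theorems~\ref{thm:swap} and \ref{thm:BM}; part (b) uses the same $2^{-j}$ subsequence construction (your maximum-over-$k\le j$ variant is exactly the paper's); and part (c) reproduces, pathwise, the induction of Proposition~\ref{prop:trueperturb} in Appendix~\ref{app:traj}, which is what the paper invokes, including your correct observation that Lemma~\ref{lem:stability} alone does not handle the empirical Onsager coefficients and that the pathwise clause of Lemma~\ref{lem:continuoustest} closes the induction.
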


\begin{proof}
For (a), Markov's inequality and summability imply $H-\E H\to0$ almost surely by Borel--Cantelli. Expectations converge by Theorems~\ref{thm:swap} and \ref{thm:BM}, establishing (a).

For (b), let scalar errors $Y_n^{(r)}$ converge in probability to zero (Theorem~\ref{thm:main}). Choose $n_j>n_{j-1}$ satisfying $\PP(\max_{r\le j}|Y_{n_j}^{(r)}|>2^{-j})\le2^{-j}$. Borel--Cantelli establishes simultaneous convergence for fixed $r$.

For (c), apply the perturbation estimate in Appendix~\ref{app:traj} with Lemma~\ref{lem:stability}. The continuous-derivative argument implies normalized trajectory differences vanish almost surely on the stated event. The norm bound controls empirical second moments, and the pseudo-Lipschitz Cauchy--Schwarz inequality \eqref{eq:PLtransfer} transfers each statistic.
\end{proof}

\subsection{The moment boundary}\label{sec:boundary}

Theorem~\ref{thm:main} already includes the fourth-power uniform-integrability conclusion, and Lemma~\ref{lem:truncremoval} gives the truncation step. Under Assumption~\ref{ass:matrix}, the bound $\nu_4(M)\le\nu M^{-(2\alpha-2)}$ gives, for sufficiently large $M$,
\[
\sup_n\E\|\bA-\bAh_M\|\le C\bigl(M^{-\alpha}+M^{-(\alpha-1)/2}\bigr),
\]
where $C$ may depend on the uniform moment and aspect-ratio bounds. Under Assumption~\ref{ass:UI} alone, the truncation error still tends to zero, but we do not guarantee a polynomial rate in $M$. These are truncation estimates. A rate for the full state evolution limit also requires control of the other approximation steps.

\begin{proposition}[First row update under uniform square-integrability]\label{prop:lindebergt0} 
Let $\bA$ have independent entries $A_{ij}=\mu_{ij}/\sqrt n$ with $\E\mu_{ij}=0$, $\E\mu_{ij}^2=1$, satisfying the uniform square-integrability condition
\[
\nu_2(K):=\sup_{n}\sup_{i,j}\E\left[\mu_{ij}^2\,;\,\abs{\mu_{ij}}>K\right]\
\xrightarrow[K\to\infty]{}\ 0.
\]
Under Assumptions~\ref{ass:functions}--\ref{ass:signal}, for bounded Lipschitz $\psi$, $\frac1n\sum_i\psi(b^0_i ,w_i)\to\E\psi(\sigma_0Z,W)$ in probability.
\end{proposition}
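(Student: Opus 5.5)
The plan is to condition on the side data $(\bx_0,\bw)$, which is independent of $\bA$. Given $\bx_0$, the vector $\bq^0=f_0(\mathbf 0,\bx_0)$ is fixed, and since $\bm^{-1}=\mathbf 0$ we have $b^0_i=\sum_j A_{ij}q^0_j=n^{-1/2}\sum_j\mu_{ij}q^0_j$. Conditionally on $(\bx_0,\bw)$ the $b^0_i$ are independent across rows, because rows of $\bA$ are independent. Each $b^0_i$ is a sum of independent centered terms with conditional variance $s_n^2:=\norm{\bq^0}^2/n$. By Lemma~\ref{lem:slln} applied to $f_0(0,x_{0j})^2$, and using the bound $|f_0(0,y)|\le 2L_0(1+|y|)$ together with the finite moments in Assumption~\ref{ass:signal}, we get $s_n^2=\rho_n^{-1}\norm{\bq^0}^2/N\to\rho^{-1}\E f_0(0,X_0)^2=\sigma_0^2$ almost surely.

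The core step is a conditional Lindeberg CLT for each row, uniform in $i$. The Lindeberg quantity for row $i$ is
\[
\Lambda_i(\eta)=\frac1n\sum_j (q^0_j)^2\,\E\bigl[\mu_{ij}^2;\,|\mu_{ij}q^0_j|>\eta\sqrt n\bigr].
\]
To control it, split on the index $j$. If $|q^0_j|\le \delta\sqrt n$, the indicator forces $|\mu_{ij}|>\eta/\delta$, so those terms contribute at most $\nu_2(\eta/\delta)\norm{\bq^0}^2/n$. The remaining terms contribute at most $n^{-1}\sum_j (q^0_j)^2\ind\{|q^0_j|>\delta\sqrt n\}$, using $\E\mu_{ij}^2=1$. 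This last quantity is at most $n^{-1}\max_j|q^0_j|^{2}\cdot\#\{\cdot\}$, or more simply it is bounded by $(\delta\sqrt n)^{-2}n^{-1}\sum_j (q^0_j)^4\to0$ almost surely, since the fourth empirical moment of $\bq^0$ converges by Lemma~\ref{lem:slln}. Both bounds are uniform in $i$, so $\sup_i\Lambda_i(\eta)\to0$ almost surely: first let $n\to\infty$, then $\delta\to0$, then use $\nu_2\to0$.

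I expect the main obstacle to be obtaining a rate-free but quantitative row-uniform comparison, rather than only distributional convergence along each row. The tool I would use is the Lindeberg replacement bound for bounded Lipschitz $\psi$. First smooth $\psi(\cdot,w)$ in its first argument at scale $h$, which costs $O(h)$. For a $C^3$ test with derivatives bounded by $Ch^{-k}$, Lindeberg's method gives
\[
\bigl|\E[\psi_h(b^0_i,w_i)\mid\bx_0,\bw]-\E\psi_h(s_nZ,w_i)\bigr|\le C h^{-2}\Lambda_i(\eta)+Ch^{-3}\eta\, s_n^2 .
\]
This bound is uniform in $i$ and in $w_i$. Choosing $\eta$, then $h$, suitably makes it vanish. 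Finally, replace $s_n$ by $\sigma_0$ using the Lipschitz property of $\psi$ together with $s_n\to\sigma_0$.

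To conclude, average over $i$. The conditional mean $n^{-1}\sum_i\E\psi(\sigma_0Z,w_i)$ converges to $\E\psi(\sigma_0Z,W)$ almost surely by Lemma~\ref{lem:slln}, since $\psi$ is bounded. The conditional variance of $n^{-1}\sum_i\psi(b^0_i,w_i)$ is at most $\norm{\psi}_\infty^2/n$ by conditional independence of the rows. The conditional Chebyshev inequality and bounded convergence then give convergence in probability.
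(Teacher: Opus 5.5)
Your proposal is correct and follows essentially the same route as the paper: a conditional Lindeberg replacement for each row, with the Lindeberg term controlled by $\nu_2$ at a diverging cutoff because $\max_j|q^0_j|=o(\sqrt n)$ almost surely, followed by conditional Chebyshev using independence across rows. The differences are cosmetic. You condition on $\bw$ as well and finish with a law of large numbers for $\E_Z\psi(\sigma_0Z,w_i)$, whereas the paper conditions on $\bq^0$ alone and folds $W$ into the test function $x\mapsto\E_W\psi(x,W)$. Your $\delta$-split of the column indices plays the role of the paper's bound $\max_j|q^0_j|=O(N^{1/4})$.
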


\begin{proof}
If $\sigma_0^2=0$, $f_0(0,X_0)=0$ almost surely, causing $\bq^0=\bb^0=0$ almost surely. The limit is the law of large numbers for $\psi(0,w_i)$. Assume $\sigma_0^2>0$. Condition on $\bq^0$ (independent of $\bA$). By Lemma~\ref{lem:slln} applied to $f_0(0,x_{0j})^2$ and $f_0(0,x_{0j})^4$, a.s.\ $\norm{\bq^0}^2/n\to\sigma_0^2$ and $m_\ast:=\max_j\abs{q^0_j}=O(N^{1/4})=o(\sqrt n)$. Fix row $i$, let $X_j:=A_{ij}q^0_j$ (independent, centered, variances $v_j=(q^0_j)^2/n$, $\sum_jv_j\to\sigma_0^2$, $\max_jv_j\to0$), and let $Z_j\sim\mathcal N(0,v_j)$ be independent. 

For $f\in C^3_b(\R)$, telescoping with second-order Taylor expansion gives, for $\epsilon>0$:
\[
\begin{split}
\left|\E f\left(\sum_jX_j\right)-\E f\left(\sum_jZ_j\right)\right|
\le\;&\sum_j\left(\tfrac{\norm{f^{(3)}}_\infty}6\,\epsilon\,\E X_j^2\
+2\norm{f^{(2)}}_\infty\,\E\left[X_j^2;\abs{X_j}>\epsilon\right]\right)\\
&+\sum_j\left(\tfrac{\norm{f^{(3)}}_\infty}6\,\epsilon\,v_j
+2\norm{f^{(2)}}_\infty\,\E\left[Z_j^2;\abs{Z_j}>\epsilon\right]\right).
\end{split}
\]

For $Z_j$, $\E[Z_j^2;\abs{Z_j}>\epsilon]\le \epsilon^{-2}\E Z_j^4=3\epsilon^{-2}v_j^2$. Defining $s_n:=\sum_jv_j\to\sigma_0^2$,
\[
\sum_j\E\left[Z_j^2;\abs{Z_j}>\epsilon\right]\le
3\epsilon^{-2}\sum_jv_j^2\le3\epsilon^{-2}\,s_n\max_jv_j\ \to\ 0 ,
\]
since $\max_jv_j\to0$. For $X_j$, $\abs{X_j}>\epsilon$ implies $\abs{\mu_{ij}}>\epsilon \sqrt n/\abs{q^0_j}\ge\epsilon\sqrt n/m_\ast=:K_n\to\infty$. Thus
\begin{align*}
\sum_j\E\left[X_j^2;\abs{X_j}>\epsilon\right]
\le\sum_j\frac{(q^0_j)^2}{n}\,\E\left[\mu_{ij}^2;\abs{\mu_{ij}}>K_n\right]
\le\frac{\norm{\bq^0}^2}{n}\ \nu_2\left(K_n\right)\to0 .
\end{align*}

Taking $n\to\infty$ then $\epsilon\downarrow0$ gives $\E f(\sum_jX_j)\to\E f(\sigma_0Z)$, uniformly in $i$. For bounded Lipschitz $\psi$, apply this to $f(x)=\E_W\psi(x,W)$, mollified uniformly. Independence of $W$ implies $\frac{1}{n} \sum_{i=1}^n \left|\E[\psi(b^0_i,w_i)\mid\bq^0] - \E\psi(\sigma_0Z,W)\right| \to 0$. Conditionally on $\bq^0$, bounded summands are independent across $i$. The row average has conditional variance $\le\norm\psi_\infty^2/n\to0$. Chebyshev concludes.
\end{proof}

\begin{remark}\label{rem:gapfinal}
Proposition~\ref{prop:lindebergt0} covers only $\bb^0$. The update $\bh^1$ reuses the matrix, and the proposition does not cover it. Fourth-power uniform integrability is not necessary for Gaussian state evolution, as the following array shows.

Take $n=N\ge2$, independent standard Gaussians $Z_{ij}$, Rademacher signs $\epsilon_{ij}$ and Bernoulli variables $B_{ij}$ of parameter $p_n=n^{-6}$. Define
\[
a_n=n^{3/2},\qquad c_n^2=\frac{1-n^{-3}}{1-n^{-6}},\qquad
U_{ij}=(1-B_{ij})c_n Z_{ij}+B_{ij}a_n\epsilon_{ij}.
\]
These variables are centered, independent, with variance $(1-p_n)c_n^2+p_n a_n^2=1$. Their fourth moments satisfy $3(1-p_n)c_n^4+p_n a_n^4\le4$. For every fixed $K$, choosing $n$ large enough that $a_n>K$ gives
\[
\E[U_{ij}^4;|U_{ij}|>K]\ge p_na_n^4=1.
\]
Hence $\nu_4(K)\ge1$ for every $K$, so Assumption~\ref{ass:UI} fails.

The bound $\PP(\exists i,j:B_{ij}=1)\le n^2p_n=n^{-4}$ holds. Off this event, $\bA=c_n\bG$ for $\bG=(Z_{ij}/\sqrt n)$ and $|c_n-1|=O(n^{-3})$. Because $\norm\bG=O_\PP(1)$, $\norm{\bA-\bG}=o_\PP(1)$ and both norms are bounded in probability. The perturbation estimate (Appendix~\ref{app:traj}) and pseudo-Lipschitz transfer imply Gaussian state evolution limits in probability. Fourth-power uniform integrability is not necessary for variance-normalized arrays.
\end{remark}

\section{Scope}\label{sec:scope}

AMP universality theorems differ in their hypotheses on matrix geometry, entry tails, variance profiles, side information and nonlinearities, and in their mode of convergence. The examples below isolate individual hypotheses. A theorem applies only when all its hypotheses hold, not only its moment condition.

\subsection{The comparison ensembles}

Throughout, $\varsigma$ denotes a Rademacher sign independent of all other variables, and ``standardized'' implies centered and rescaled to unit variance.

\begin{example}[Heavy tail]\label{ex:pareto}
Let $\mu_0$ have the symmetric Pareto density proportional to $\abs x^{-\beta_0-1}$ on $\abs x\ge1$ with $\beta_0=19/4$. This implies $\E\abs{\mu_0}^m=\beta_0/(\beta_0-m)$ for $m<\beta_0$ and $\E\abs{\mu_0}^m=\infty$ for $m\ge\beta_0$. Since $\E\mu_0^2=19/11$, define $\mu:=\mu_0\sqrt{11/19}$ and $A_{ij}:=\mu_{ij}/\sqrt n$ as i.i.d.\ copies. Then $\E\abs\mu^{9/2}<\infty$ and $\E\abs\mu^5=\infty$. Assumption~\ref{ass:matrix} holds with $\alpha=5/4>1$.
\end{example}

\begin{example}[All moments, no exponential tail]\label{ex:weibull}
Let $\mu_0\ge0$ have $\PP(\mu_0>x)=e^{-\sqrt x}$. Setting $x=u^2$ gives $dx=2u\,du$, and
\[
\E\mu_0^k=\int_0^\infty kx^{k-1}e^{-\sqrt x}\,dx=2k\int_0^\infty u^{2k-1}e^{-u}\,du
=2k\,\Gamma(2k)=(2k)! ,
\]
so $\E\mu_0^2=24$. Define $\mu:=\varsigma\mu_0/\sqrt{24}$ and $A_{ij}:=\mu_{ij}/\sqrt n$. Every moment is finite, satisfying Assumption~\ref{ass:matrix} for every $\alpha$. By Stirling's approximation, $(\E\abs{\mu}^k)^{1/k}\asymp k^2$. A sub-Gaussian variable has $(\E\abs X^k)^{1/k}=\mathcal{O}(k^{1/2})$ and a sub-exponential variable has $\mathcal{O}(k)$. $\mu$ is neither. Its tail does not satisfy the sub-Gaussian hypotheses in \cite[Theorems~2.1 and~2.2]{CL21} and \cite[Theorems~3.1--3.2]{Han25}.
\end{example}

\begin{example}[Two distinct entry laws]\label{ex:mixed}
Let $\mu_{ij}$ equal the standardized Pareto variable of Example~\ref{ex:pareto} for $i+j$ even, and a Rademacher sign for $i+j$ odd. Assumption~\ref{ass:matrix} holds with $\alpha=5/4$, since it constrains each entry separately. The two laws share a polynomial tail bound but have different distributions. The state evolution limit is the same as for identically distributed entries. The finite-dimensional laws of the trajectory may differ.
\end{example}

\begin{example}[Lindeberg but not $L^4$]\label{ex:lindeberg}
Let $\mu$ have the standardized symmetric Pareto law with $\beta_0=3$, implying $\E\mu^2=1$ and $\E\mu^4=\infty$. Because $\E\mu^2<\infty$ and all entries share a common law, $\nu_2(K)=\E[\mu^2;\abs\mu>K]\to0$. Proposition~\ref{prop:lindebergt0} applies at $t=0$. Assumption~\ref{ass:UI} fails, so Theorem~\ref{thm:main} does not apply to this ensemble. No conclusion for matrix-reusing updates of this ensemble is asserted here.
\end{example}

\begin{example}[Variance profile]\label{ex:profile}
Let $A_{ij}=\sqrt{S_{ij}/n}\,G_{ij}$ with independent standard Gaussians and a bounded positive deterministic profile $S$. A nonconstant profile does not imply per-entry variance normalization. Profile-dependent Gaussian AMP results appear in \cite{JM13, BHX}. The white-noise theorem \cite[Theorem~2.21]{WZF} requires the balance conditions in \cite[Definition~2.18(c)]{WZF} and the polynomial decorrelation condition in \cite[Theorem~2.21(3), equation~(2.16)]{WZF}. The non-symmetric theorem \cite{GHN} requires a square geometry and specific covariance profiles. For the two-valued row profile of Theorem~\ref{thm:sharpvar}, the first-row empirical law is a Gaussian scale mixture.
\end{example}

\begin{example}[Sparse, $n$-dependent]\label{ex:sparse}
For $n\ge2$, let $p_n=n^{-1/3}$, $\mu^{(n)}_{ij}=(B_{ij}-p_n)/\sqrt{p_n(1-p_n)}$ with $B_{ij}$ i.i.d.\ Bernoulli$(p_n)$, and $A_{ij}=\mu^{(n)}_{ij}/\sqrt n$. For every integer $p\ge3$, $n\max_{ij}\E\abs{A_{ij}}^p=\mathcal{O}(n^{(2-p)/3})\to0$, which satisfies the entry-moment condition in \cite[Definition~2.18(b)]{WZF}. However,
\[
\E(\mu^{(n)}_{ij})^4=\frac{1-3p_n(1-p_n)}{p_n(1-p_n)}\ \sim\ n^{1/3}
\longrightarrow\infty ,
\]
falling outside Assumptions~\ref{ass:matrix} and~\ref{ass:UI}. For fixed $K$, $\E[(\mu^{(n)}_{ij})^2;\abs{\mu^{(n)}_{ij}}>K]\ge\tfrac12$ for large $n$, failing the uniform square-integrability hypothesis of Proposition~\ref{prop:lindebergt0}. The classical dimension-scaled Lindeberg condition holds for bounded initialization weights because $\max_{ij}|A_{ij}|=\mathcal{O}(n^{-1/3})$. Applying \cite{WZF} requires its other hypotheses.
\end{example}

\begin{example}[Non-smooth denoiser]\label{ex:softthresh}
For $\theta>0$, the soft threshold $\eta(x;\theta)=\operatorname{sign}(x)(|x|-\theta)_+$ is Lipschitz but non-differentiable at $\pm\theta$, failing Assumption~\ref{ass:functions}. Soft thresholding satisfies the Lipschitz nonlinearity requirement in \cite[Theorem~3.2]{Han25}. Applying that theorem also requires its matrix and initialization hypotheses. In \cite[Definition~2.2 and Theorem~2.2]{CL21}, the AMP nonlinearities are $C^1$ with Lipschitz first derivatives, so soft thresholding does not satisfy those hypotheses directly. No non-smooth extension of the empirical-coefficient theorem is asserted here.
\end{example}

\subsection{The moment methods in detail}

The white-noise theorem \cite[Theorem~2.21]{WZF} requires every entry moment to be finite and satisfy the decay constraint
\[
\lim_{m,n\to\infty}\ n\,\max_{a\le m}\max_{i\le n}\E\abs{W[a,i]}^{p}=0
\qquad\text{for each integer }p\ge3.
\]
Let $m,n$ denote the dimensions in \cite{WZF}, with $m/n\to\gamma\in(0,\infty)$. The balance conditions are $\max_a\abs{\frac1n\sum_iS[a,i]-1}\to0$ and $\max_i\abs{\frac1m\sum_aS[a,i]-1}\to0$ \cite[Definition~2.18(c)]{WZF}. Assumption~\ref{ass:matrix} uses a uniform $L^{2+2\alpha}$ bound ($\alpha>1$). Theorem~\ref{thm:main} uses uniform integrability of fourth powers. Example~\ref{ex:pareto} satisfies these assumptions but violates \cite{WZF} at $p=5$. Example~\ref{ex:sparse} satisfies the moment condition of \cite{WZF} but violates the fourth-moment condition here. Both hypotheses hold for independent, variance-normalized ensembles whose moments of all orders are uniformly bounded. 

Applying \cite{WZF} requires matching its side-information, spectral, and nonlinearity assumptions. Finite side-information moments alone do not imply the polynomial-density requirement in \cite[Assumption~2.17]{WZF}. For nonconstant variance profiles, one must also check the polynomial decorrelation condition in \cite[Theorem~2.21(3), equation~(2.16)]{WZF}. Our theorem gives a fourth-power uniform integrability condition for independent entries. The $L^1$ assertion holds for bounded observables in the original ensemble and for all $\PL(2)$ observables at a fixed matrix cutoff, and the centered fourth-moment estimate retains a Gaussian remainder.

The non-symmetric theorem \cite{GHN} models correlations between transposed entries via a covariance profile. The moment-growth and sparsity conditions in \cite[Assumption~A-3]{GHN} are $(\E|X_{ij}|^k)^{1/k}\le Ck^{\nu/2}$ uniformly over entries and dimensions, and $K_n\ge C\log^{\nu\vee1}n$, where $\nu$ is the exponent used in that reference. The stretched-exponential example satisfies the moment condition with $\nu=4$, and the Pareto example fails the all-moment requirement. Our result addresses independent entries and does not assert almost-sure convergence for correlated profiles.

\subsection{Extensions and Outlook}

\begin{remark}[Smoothness of the nonlinearities]\label{rem:smoothgap}
The main theorem requires jointly $C^1$ globally Lipschitz maps. Their first state derivatives are bounded and continuous but need not be Lipschitz. Higher state derivatives are used in the regularized comparison. Lemma~\ref{lem:C1} removes this regularization while holding the deterministic coefficients fixed. Lemma~\ref{lem:continuoustest} gives the continuity for empirical Onsager coefficients.

If $\chi$ is smooth, compactly supported, and equal to one near zero, $f(x,y)=\chi(x)|x|^{3/2}$ is jointly $C^1$ and globally Lipschitz. Its first state derivative is continuous but not Lipschitz near zero, and its second derivative does not exist at zero. It satisfies the main assumption but not \eqref{eq:smoothcalc}. Soft thresholding requires separate control at the points where its derivative does not exist. 
\end{remark}

\begin{remark}[Mode of the conclusion]\label{rem:modegap}
The conclusion is an averaged asymptotic statement. Han \cite[Theorem~3.1]{Han25} proves entrywise bounds under sub-Gaussian and smoothness assumptions. A bound on the maximum over all coordinates implies bounds on any fixed set of coordinates. Our theorem gives no entrywise bounds under weaker moments. The averaged statement is all that the truncation argument needs.
\end{remark}

\begin{remark}[Dependent and inhomogeneous ensembles]\label{rem:depgap}
The theorem requires independent entries with equal variances. The variance-profile example lacks variance homogeneity, and the sparse example lacks the uniform moment hypothesis. Theorem~\ref{thm:sharpvar} demonstrates failure for an aggregate-normalized profile. The perturbation estimate in Appendix~\ref{app:traj} preserves state evolution under an $o_\PP(1)$ operator-norm perturbation when both matrix norms are tight, even if per-entry normalization fails.
\end{remark}

\appendix

\section{Proofs for the Preliminaries (Section~\ref{sec:toolbox})}\label{app:toolbox}

\begin{proof}[Proof of Lemma~\ref{lem:rosenthal}]
For $p=1$, independence implies $\E(\sum_lX_l)^{2} = \sum_l\E X_l^2$. Assume $p \ge 2$. If all second moments vanish, $X_l=0$ almost surely. Otherwise, by homogeneity of degree $2p$, normalize
\begin{equation}\label{eq:rosnorm}
\left(\sum_l a_l\right)^{p}+\sum_l b_l=1,\qquad a_l:=\E X_l^2,\quad
b_l:=\E X_l^{2p}.
\end{equation}
By the multinomial theorem, independence, and zero means, only terms with multiplicities $\ge2$ remain
\[
\E\left(\sum_lX_l\right)^{2p}
=\sum_{r=1}^{p}\ \sum_{j_1<\dots<j_r}\
\sum_{\substack{m_1+\dots+m_r=2p\\ m_i\ge2}}
\frac{(2p)!}{m_1!\cdots m_r!}\ \prod_{i=1}^r\E\left[X_{j_i}^{m_i}\right].
\]

The triangle inequality bounds each product by $\prod_{i=1}^r\E|X_{j_i}|^{m_i}$. By Lyapunov's inequality, for $2\le m\le2p$, $\E\abs{X_j}^{m} \le a_j^{\frac{2p-m}{2p-2}} b_j^{\frac{m-2}{2p-2}}$. Young's inequality $x^{1-\theta}y^{\theta}\le x+y$ implies $\E\abs{X_j}^{m} \le a_j+b_j$. Bounding the multinomial coefficient by $(2p)^{2p}$ and the compositions of $2p$ into $r$ parts by $2^{2p}$ implies
\begin{align*}
\E\left(\sum_lX_l\right)^{2p}
\ &\le\ (2p)^{2p}\,2^{2p}\sum_{r=1}^{p}\ \sum_{j_1<\dots<j_r}
\prod_{i=1}^r(a_{j_i}+b_{j_i})\\
\ &\le\ (2p)^{2p}\,2^{2p}\sum_{r=1}^{p}\frac{1}{r!}\left(\sum_j(a_j+b_j)
\right)^{r}.
\end{align*}

Because $\sum_ja_j\le1$ and $\sum_jb_j\le1$, each sum is bounded by $2^r\le2^p$. Summing over the $p$ values of $r$ bounds the total by $(2p)^{2p}2^{2p} p 2^p = 32^{p}p^{2p+1}$. For $p\ge1$, $p \le 2^{p}$, implying $32^{p}p^{2p+1}\le 64^{p}p^{2p} = (8p)^{2p} = C_p$.
\end{proof}

\begin{proof}[Proof of Lemma~\ref{lem:slln}]
Assume $y=0$. Let $M_4 := \sup_k \E Y_k^4 < \infty$. 

For (a), Lemma~\ref{lem:rosenthal} with $p=2$ and Jensen's inequality $\E Y_k^2 \le M_4^{1/2}$ imply
\[
\E\left(\sum_{k=1}^d Y_k\right)^4 \le C_2 \left[ \left(\sum_{k=1}^d \E Y_k^2\right)^2 + \sum_{k=1}^d \E Y_k^4 \right] \le 2 C_2 M_4 d^2.
\]
Markov's inequality bounds the tail probability by $\frac{2 C_2 M_4}{d^2 \epsilon^4}$. This is summable over $d \in \N$, and the Borel--Cantelli Lemma implies $\frac{1}{d}\sum_{k=1}^d Y_k \to 0$ almost surely.

For (b), row-wise independence and $M_4 = \sup_{d,k} \E Y_{d,k}^4$ give the identical row-sum bound $\E(\sum_{k=1}^d Y_{d,k})^4 \le 2 C_2 M_4 d^2$. The Borel--Cantelli Lemma requires only marginal tail probability summability, implying almost-sure convergence for the array.
\end{proof}

\subsection*{Sub-Gaussian operator-norm moments.}
If $V=0$ in \eqref{eq:workingSG}, then $\bC=0$ almost surely. Assume $V>0$, and let $\mathcal N_n$ and $\mathcal N_N$ be $1/4$-nets of the unit spheres with at most $9^n$ and $9^N$ points. For fixed unit vectors $x,y$, independence gives
\[
\E e^{z x^*\bC y}\le \exp(V^2z^2/(2n)).
\]
Since $\|\bC\|\le2\max_{x\in\mathcal N_n,y\in\mathcal N_N}|x^*\bC y|$, the exponential Markov inequality and a union bound give
\[
\PP(\|\bC\|>2r)\le2\exp\bigl((n+N)\log9-nr^2/(2V^2)\bigr).
\]
The aspect ratios are bounded away from zero. Choose a constant $r_0$ so that, for $r\ge r_0$, the right-hand side is at most $2e^{-nr^2/(4V^2)}$. Integrating this tail proves $\sup_n\E\|\bC\|^p<\infty$ for every fixed $p>0$. This is the operator-norm estimate used in Lemmas~\ref{lem:momprop} and \ref{lem:empmom}.

\section{Perturbations of the Empirical-Coefficient Recursion}\label{app:traj}

The comparison uses deterministic coefficients. The almost-sure transfer and specific examples use perturbations of the true recursion. This section details that argument, distinguishing between Lipschitz and continuous first derivatives.

\begin{proposition}[Qualitative perturbation principle]\label{prop:trueperturb}
Let $C_n,\widetilde C_n$ use identical side data and maps satisfying Assumption~\ref{ass:functions}. Assume their operator norms and normalized side-data norms are tight and $\|C_n-\widetilde C_n\|\to0$ in probability. Their true trajectories have vanishing normalized Euclidean difference at each time. If the bounds and matrix convergence hold pathwise, with eventual norm boundedness, the trajectory convergence holds pathwise.
\end{proposition}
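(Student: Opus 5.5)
We prove Proposition~\ref{prop:trueperturb} by induction over the computational graph of the true recursion \eqref{eq:amp}, in chronological order $\bq^0,\bb^0,\bm^0,\bh^1,\bq^1,\dots$. The induction hypothesis is that every normalized discrepancy $\|\bu(C_n)-\bu(\widetilde C_n)\|_{\rm av}$ for earlier nodes tends to zero (in probability, or pathwise under the pathwise hypotheses). The base case is immediate: both recursions start from $\bh^0=\mathbf 0$, $\bm^{-1}=\mathbf 0$ and the shared $\bq^0=f_0(\mathbf 0,\bx_0)$.

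The preliminary step is tightness of both trajectories. Lemma~\ref{lem:iteratenorms}, applied to the true recursion, bounds every normalized iterate norm by $C_T K^{a_T}S$. Here $K$ uses the deterministic bound $\max(\|\partial_1 f_t\|_\infty\rho_n^{-1},\|\partial_1 g_t\|_\infty)\le C_\rho L_t$ on the empirical coefficients. Tightness of the operator norms and of $S$ therefore gives $O_\PP(1)$ normalized norms for both trajectories. In the pathwise version, eventual almost-sure boundedness gives eventually bounded norms.

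The inductive step has two parts. First, the empirical coefficients. We write
\[
\lambda_t-\tilde\lambda_t=\rho_n^{-1}\bigl\langle \partial_1 f_t(\bh^t,\bx_0)-\partial_1 f_t(\tilde\bh^t,\bx_0)\bigr\rangle .
\]
Since $\partial_1 f_t$ is bounded and jointly continuous, we apply Lemma~\ref{lem:continuoustest} with $\bz_i=(h^t_i,x_{0i})$ and $\tilde\bz_i=(\tilde h^t_i,x_{0i})$. Its second-moment hypothesis follows from the tightness step, and its discrepancy hypothesis is the induction hypothesis. This gives $\lambda_t-\tilde\lambda_t\to0$, and likewise $\xi_t-\tilde\xi_t\to0$ once $\bb^t$ is handled. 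Lipschitz derivatives are not needed here; only continuity enters, which is the point of the qualitative formulation.

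Second, the linear updates. Subtracting gives
\[
\bb^t-\tilde\bb^t=C_n(\bq^t-\tilde\bq^t)+(C_n-\widetilde C_n)\tilde\bq^t-\lambda_t(\bm^{t-1}-\tilde\bm^{t-1})-(\lambda_t-\tilde\lambda_t)\tilde\bm^{t-1}.
\]
Each term is bounded in $\|\cdot\|_{\rm av}$, with bounded aspect-ratio factors, by a tight quantity times a vanishing one:
\begin{itemize}
\item the first by $\|C_n\|$ times the Lipschitz bound $\|\bq^t-\tilde\bq^t\|_{\rm av}\le L\|\bh^t-\tilde\bh^t\|_{\rm av}$;
\item the second by $\|C_n-\widetilde C_n\|\,\|\tilde\bq^t\|_{\rm av}$;
\item the third by $|\lambda_t|\le C$ times the induction hypothesis;
\item the last by the coefficient discrepancy just proved times a tight norm.
\end{itemize}
The $\bm^t$, $\bh^{t+1}$ and $\bq^{t+1}$ updates are handled the same way, using $\xi_t$ and $C_n^*$. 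A product of an $O_\PP(1)$ factor and an $o_\PP(1)$ factor is $o_\PP(1)$, which closes the induction in probability. The pathwise statement follows identically, using the pathwise clause of Lemma~\ref{lem:continuoustest} and eventual boundedness.

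The main obstacle is the coefficient step. Because $\partial_1 f_t,\partial_1 g_t$ are only continuous, no quantitative Lipschitz bound on $\lambda_t-\tilde\lambda_t$ is available. Lemma~\ref{lem:continuoustest} therefore has to be applied carefully: tight second moments are needed for both trajectories, which is why tightness is established first, for each $t$ before the corresponding coefficient is compared. The rest is routine bookkeeping over the finitely many nodes up to horizon $T$.
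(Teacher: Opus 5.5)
Your proposal is correct and follows essentially the same route as the paper. Both arguments establish tightness of both true trajectories via Lemma~\ref{lem:iteratenorms} with the deterministic coefficient bound, then induct through the computational graph. The coefficient discrepancies $\lambda_t-\tilde\lambda_t$ and $\xi_t-\tilde\xi_t$ are handled by Lemma~\ref{lem:continuoustest}, the linear updates by the same four-term subtraction, and the almost-sure version by the pathwise clause together with eventual boundedness.
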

\begin{proof}
The global energy estimate applies because uniform bounds on first state derivatives bound the empirical coefficients. The initial trajectories agree. Assume discrepancies through $h^t,m^{t-1}$ vanish. The $q$ discrepancy vanishes by Lipschitz continuity. Lemma~\ref{lem:continuoustest} applied to $(h_i^t,x_{0i})$ implies $\lambda_t-\widetilde\lambda_t\to0$ in probability. Subtraction gives
\[
\begin{split}
b^t-\widetilde b^t={}&C_n(q^t-\widetilde q^t)
+(C_n-\widetilde C_n)\widetilde q^t\\
&-\lambda_t(m^{t-1}-\widetilde m^{t-1})
-(\lambda_t-\widetilde\lambda_t)\widetilde m^{t-1}.
\end{split}
\]
Each term has vanishing normalized norm. The $m$ discrepancy follows. The continuous-test argument implies the $\xi$ discrepancy vanishes. The transpose update implies the $h$ discrepancy vanishes. The pathwise argument replaces tightness with eventual boundedness and applies the pathwise clause of Lemma~\ref{lem:continuoustest}.
\end{proof}

If the first state derivatives are globally Lipschitz, a constant times the preceding normalized trajectory difference bounds the coefficient difference, implying a polynomial Lipschitz bound in $\|C_n-\widetilde C_n\|$. Under Assumption~\ref{ass:functions}, the compact modulus of continuity need not be linear. The examples require this qualitative conclusion.

\subsection{The Gaussian input at degenerate covariances}

The Gaussian theorem holds in probability. The covariance recursion permits singular covariance matrices. The approximation proof avoids their inverses. To address degeneracies, reduce to the nonconstant-map case of \cite[Section~3.9]{BM11}.


Fix a horizon $T$. For $\epsilon\in(0,1)$, set
\begin{align*}
f_0^{\epsilon}(x,s)&=f_0(x,s)+\epsilon,\\
f_t^{\epsilon}(x,s)&=f_t(x,s)+\epsilon x\quad(1\le t\le T),\\
g_t^{\epsilon}(x,s)&=g_t(x,s)+\epsilon x\quad(0\le t\le T).
\end{align*}
For each map with the $\epsilon x$ perturbation, at most one value of $\epsilon$ makes the map constant in its first argument for almost every value of its side variable. Indeed, subtracting two such perturbed maps would make $(\epsilon_1-\epsilon_2)x$ constant in $x$. There is also at most one value for which $f_0(0, X_0)+\epsilon=0$ almost surely. Excluding these finitely many values, choose a sequence $\epsilon\downarrow0$.

Fix one admissible $\epsilon$ and, within the following induction, suppress it from the notation for the Gaussian state variables. Define
\begin{align*}
Q_0&=f_0(0,X_0)+\epsilon,\\
Q_t&=f_t^{\epsilon}(H^t,X_0)\quad(1\le t\le T),\\
M_t&=g_t^{\epsilon}(B^t,W)\quad(0\le t\le T).
\end{align*}
The state evolution covariance identities are
\[
\E[B^rB^s]=\rho^{-1}\E[Q_rQ_s],\qquad
\E[H^{r+1}H^{s+1}]=\E[M_rM_s].
\]
The Gaussian families are independent of their respective side variables. Since $\E Q_0^2>0$, the variance of $B^0$ is positive. The map $g_0^{\epsilon}(\cdot, W)$ is nonconstant on a set of side values of positive probability, so the full support of $B^0$ gives
$\E\Var(M_0\mid W)>0$. Hence $\E M_0^2>0$ and $\Var(H^1)>0$.

Suppose, for some $1\le t\le T$, that the covariance of $(H^1,\ldots,H^t)$ and the Gram matrix $(\E[Q_rQ_s])_{0\le r,s<t}$ are positive definite. Let
\[
\mathcal F^H_{t-1}=\sigma(X_0,H^1,\ldots,H^{t-1}).
\]
The conditional law of $H^t$ given $\mathcal F^H_{t-1}$ is Gaussian with strictly positive innovation variance and full support on $\R$. Continuity and nonconstancy of $f_t^{\epsilon}(\cdot,X_0)$ on a set of side values of positive probability imply
\[
c_t:=\E\bigl[\Var(Q_t\mid\mathcal F^H_{t-1})\bigr]>0.
\]
Every $Q_s$ with $s<t$ is $\mathcal F^H_{t-1}$-measurable. Thus, for every deterministic $a=(a_0,\ldots,a_{t-1})\in\R^t$,
\[
\E\left(Q_t-\sum_{s<t}a_sQ_s\right)^2\ge c_t>0.
\]
Together with the induction hypothesis, this proves positive definiteness of the enlarged $Q$-Gram matrix. The covariance of $(B^0,\ldots,B^t)$ is therefore positive definite. Applying the same argument to
$\mathcal F^B_{t-1}=\sigma(W,B^0,\ldots,B^{t-1})$ gives
\[
\E\left(M_t-\sum_{s<t}a_sM_s\right)^2
\ge \E\bigl[\Var(M_t\mid\mathcal F^B_{t-1})\bigr]>0.
\]
Thus, the enlarged $M$-Gram matrix and the covariance of $(H^1,\ldots, H^{t+1})$ are positive definite. This completes the alternating induction and verifies the nondegeneracy required by the Gaussian result at the fixed horizon. The first state derivatives remain uniformly bounded for $\epsilon\in(0,1)$, and the map differences are bounded by $\epsilon(1+|x|)$.

The energy estimate, subtraction of the matrix updates, and continuous-test argument prove convergence to the Gaussian recursion. The limit supremum over dimension of the probability that the discrepancy exceeds any positive threshold vanishes as $\epsilon \to 0$. For uniformity, restrict operator norms and normalized global norms to fixed bounds. At each time, the map perturbation forcing is at most $C\epsilon$ in normalized norm. Coefficient differences separate into a deterministic $\epsilon$ term and a continuous derivative difference. The compact-modulus proof of Lemma~\ref{lem:continuoustest} bounds the latter uniformly on the compact set. Induction over the updates proves the claim.

The Gaussian limits converge. At each time, represent the finite Gaussian vector with covariance $\Sigma_\epsilon$ as $\Sigma_\epsilon^{1/2}Z$. Continuity of the positive-semidefinite square root implies $L^2$ convergence when the covariances converge. Global Lipschitz continuity and uniform second moments imply convergence of the products defining the next covariance matrix. Induction proves covariance convergence through the horizon. Continuous bounded derivative expectations converge by this coupling and dominated convergence. Pseudo-Lipschitz expectations converge by Cauchy--Schwarz and uniform moments. This transfers the in-probability Gaussian conclusion to the original recursion. The transfer requires no uniform lower bound on covariance eigenvalues.

\subsection{An almost-sure truncation observation}

Assume the rescaled entries have uniform moments of all orders. For fixed $\kappa>0$ and large $p$,
\[
\PP(\max_{a,b}|U_{ab}|>n^\kappa)
\le C_p n^{2-p\kappa}
\]
is summable. Truncation at $n^\kappa$ eventually changes no entries almost surely. If $\sup\E|U|^p<\infty$, the removed means are bounded by $C_p n^{-\kappa(p-1)}$, and the variance shift by $C_p n^{-\kappa(p-2)}$. Off the truncation event, the Frobenius norm of the centering and standardization changes is bounded by $C_p n^{1/2-\kappa(p-3)}$. Because $p$ is arbitrary, this bound has arbitrary polynomial decay. This controls truncation. An almost-sure state evolution conclusion requires an almost-sure limit for the comparison trajectory or summable probability errors.

\section{The Comparison Estimate and Its Relation to Earlier Methods}\label{app:comparison}

\subsection{Which variable is differentiated}

In Section~\ref{sec:universality}, we take derivatives with respect to $C_{ab}$. For $C_{ab}=U_{ab}/\sqrt n$, the chain rule gives
\[
\frac{\partial^r H}{\partial U_{ab}^r}
=n^{-r/2}\frac{\partial^rH}{\partial C_{ab}^r}.
\]
The Taylor expansion evaluates matrix-entry moments scaling as $n^{-r/2}$. This produces the $n^{-3/2}$ scale for cubic coefficients and the $n^{-2}$ scale for fourth-order remainders.

Deterministic bounds do not imply convergence under arbitrary fixed replacement orders, which evaluate cubic derivatives at different intermediate matrices. Proposition~\ref{prop:aggregate3} bounds a sum at a common matrix. Identity~\eqref{eq:mixtureidentity} and Lemma~\ref{lem:cubicmotion} quantify this difference, and the displacement term has the same order as the Taylor remainder.

\subsection{First, second, and centered fourth moments}

For bounded statistics, transferring a deterministic Gaussian limit requires comparing the first two raw moments. The expansion of the fourth centered moment involves the raw moments of orders one through four.  Theorem~\ref{thm:swap} bounds arbitrary polynomials, and Proposition~\ref{prop:fourth} evaluates powers up to four.

These comparisons specify no coupling between the empirical statistics. The estimates bound distributions and moments, not random variables. Convergence proofs evaluate $\E(H-c)^2$ for a deterministic Gaussian limit $c$, rather than $\E|H(X)-H(Y)|^2$.

The finite-size comparison rate and the Gaussian state evolution rate differ. The $O(n^{-1/2+\varepsilon})$ comparison bound introduces a Gaussian finite-size remainder. Removing $C^{3,1}$ regularity via smoothing establishes qualitative convergence. It implies no polynomial rate for $C^1$ maps without a quantitative approximation modulus.

\subsection{Derivative estimates and moment methods}

Reference \cite[Sections~5--6]{CL21} applies iterate derivatives and Gaussian interpolation to symmetric matrices. The second-order source expansion here includes same-index and mixed branching terms. The Schur-product norm bound controls the former, and the Frobenius bound \eqref{eq:mixedfrobenius} controls the latter. The aggregate estimate applies to independent rectangular entries and empirical observables.

The entry-moment condition in \cite[Definition~2.18 and Theorem~2.21]{WZF} is
\[
n\max_{a,i}\E|W[a,i]|^p\longrightarrow0
\quad\text{for each integer }p\ge3.
\]
This theorem assumes variance-balance, spectral, and side-information hypotheses. Example~\ref{ex:pareto} fails this condition at $p=5$. Example~\ref{ex:sparse} satisfies it but fails the uniform fourth-moment condition of this paper. These examples compare direct matrix hypotheses, excluding unstated truncation arguments.

The model in \cite{GHN} allows correlations between transposed entries, and its moment-growth and sparsity conditions apply to that non-symmetric geometry. Non-separable iterations are treated in \cite{LWF25}.

This proof uses no external non-Gaussian universality theorems. The probabilistic inputs are Gaussian state evolution and independent-entry operator-norm bounds. The time horizon is fixed. Growing iteration counts, alternative convergence modes, and dependent matrices need separate proofs.

\section*{Acknowledgments}
This work is in part supported by the National Science Foundation (NSF) under grants CNS2451268, CNS2514415, and ITE2515378, and the Office of Naval Research (ONR) under Grant N000142112472. The authors used Grammarly to check spelling and grammar and improve the text's readability. The authors reviewed and edited the resulting text and take responsibility for all content.

\bibliographystyle{ieeetr}
\bibliography{references}

\end{document}